\newcommand{\aka}{also known as}
\newcommand{\eg}{e.\,g.}
\newcommand{\ie}{i.\,e.}
\newcommand{\cLat}[1]{\hat{\mathbb{C}}(#1)} % Lattice of closed
\newcommand{\defeq}{\coloneqq} % This is Amin's notation
\newcommand{\defiff}{\stackrel{\vartriangle}{\iff}} % definition by iff
\newcommand{\dirSubsetEq}{\ensuremath{\subseteq_{\mathit{dir}}}} % directed subset
\newcommand{\id}{\text{id}}
\newcommand{\interiorOf}[1]{\ensuremath{{#1}^{\circ}}}
\newcommand{\intvaldom}[1][{\Rinf}]{\ensuremath{{\mathbb{I}}#1}}
\newcommand{\binjoin}{\ensuremath{\vee}} % binary join (least upper bound in lattices)
\newcommand{\join}{\ensuremath{\bigvee}} % join (least upper bound in
\newcommand{\lift}{_\bot} % Adding a bottom element to a poset
\newcommand{\md}{\thinspace \mathrm{d}} % differential symbol
\newcommand{\binmeet}{\ensuremath{\wedge}} % binary meet (greatest lower bound in lattices)
\newcommand{\meet}{\ensuremath{\bigwedge}} % meet (greatest lower
\newcommand{\Q}{\ensuremath{\mathbb{Q}}}
\newcommand{\R}{\ensuremath{\mathbb{R}}}
\newcommand{\Scott}[1]{\ensuremath{\sigma_ {#1}}} % Scott topology of
\newcommand{\set}[1]{\ensuremath{\left\{{#1}\right\}}}
\newcommand{\setbarTall}[2]{\ensuremath{\left\{{#1}\;\vrule\;
{#2}\right\}}} % set comprehension notation with tall bar and braces
\newcommand{\setbarNormal}[2]{\ensuremath{\{{#1} \mid
{#2}\}}}% set comprehension notation with normal size bar and
\newcommand{\spect}[1]{\ensuremath{\hat{#1}}} % spectral
\newcommand{\State}{\mathbb{S}}
\newcommand{\wayaboves}[1]{\ensuremath{\twoheaduparrow {#1}}}
\newcommand{\waybelows}[1]{\ensuremath{\twoheaddownarrow {#1}}}
\DeclareMathOperator{\Idl}{\mathrm{Idl}} % Ideals (for ideal
\DeclareMathOperator{\lowerSet}{\downarrow} % lower set of an element
\DeclareMathOperator{\precAbsBas}{\vartriangleleft} % prec relation on abstract bases, usually written as \prec, which looks very similar to the usual < relation.
\DeclareMathOperator{\pt}{\mathrm{pt}} % The pt functor from lattices to topological spaces (in Stone duality)
\DeclareFontFamily{OT1}{pzc}{}
\DeclareFontShape{OT1}{pzc}{m}{it}{<->s*[1.30]pzcmi7t}{}
\DeclareMathAlphabet{\mathpzc}{OT1}{pzc}{m}{it}
\def\ct#1{\mathpzc{#1}} % font for (enriched) categories
\newcommand{\AFALCat}{\ct{Afal}} % Algebraic fully arithmetic lattices and frame homomorphisms
\newcommand{\BDLatCat}{\ct{BDLat}} % bounded distributive lattices and bounded lattice homomorphisms
\newcommand{\FrmCat}{\ct{Frm}} % Frames and frame homomorphisms
\newcommand{\opCat}[1]{\ensuremath{{#1}^{\mathit{op}}}} % Opposite category
\newcommand{\PoCat}{\ct{Po}} % Posets and monotonic maps
\newcommand{\TopCat}{\ct{Top}}%topological spaces and continuous maps
\newcommand{\SpecCat}{\ct{Spec}}% Spectral spaces 
\begin{document}

\begin{frontmatter}

  \title{Continuous Domains for Function Spaces Using Spectral
    Compactification}
  
 %  \title{An Example Paper\thanksref{ALL}} 						%%Title here and the
 % \thanks[ALL]{General thanks to everyone who should be thanked.}   %%Text of \thanks[ALL} here..
 %%%%%%%%%%%%%%%%%%%%%%%%%%%%			This Thanks is optional.
  %%%%Now the author(s) names(s)%%%%%
  \author{Amin Farjudian\thanksref{a}\thanksref{myemail}}	%%Note NO SPACE between 
   \author{Achim Jung\thanksref{b}\thanksref{coemail}}		%last name and \thanksref{...} 
    %%%Next come the addresses%%%%
   \address[a]{School of Mathematics\\University of
     Birmingham\\
     Birmingham, United Kingdom}  							
   \thanks[myemail]{Email: \href{mailto:A.Farjudian@bham.ac.uk} {\texttt{\normalshape
        A.Farjudian@bham.ac.uk}}} 
   %%%Note: if both authors share same institution, only list the address once, after the second 
   %%%author. 
   %%%There also is a link from the first author to the co-author's address to show how to list 
   %%%affiliations to more than one institution, when needed. 
  \address[b]{School of Computer Science\\University of
     Birmingham\\
     Birmingham, United Kingdom} 
  \thanks[coemail]{Email:  \href{mailto:A.Jung@bham.ac.uk} {\texttt{\normalshape
        A.Jung@bham.ac.uk}}}

  \begin{abstract}
    We introduce a continuous domain for function spaces over
    topological spaces which are not core-compact. Notable examples of
    such topological spaces include the real line with the upper limit
    topology, which is used in solution of initial value problems with
    temporal discretization, and various infinite dimensional Banach
    spaces which are ubiquitous in functional analysis and solution of
    partial differential equations. If a topological space
    $\mathbb{X}$ is not core-compact and $\mathbb{D}$ is a
    non-singleton bounded-complete domain, the function space
    $[\mathbb{X} \to \mathbb{D}]$ is not a continuous domain. To
    construct a continuous domain, we consider a spectral
    compactification $\mathbb{Y}$ of $\mathbb{X}$ and relate
    $[\mathbb{X} \to \mathbb{D}]$ with the continuous domain
    $[\mathbb{Y} \to \mathbb{D}]$ via a Galois
    connection. This allows us to perform computations in the native
    structure $[\mathbb{X} \to \mathbb{D}]$ while computable analysis
    is performed in the continuous domain
    $[\mathbb{Y} \to \mathbb{D}]$, with the left and right
    adjoints used for moving between the two function spaces.
\end{abstract}
\begin{keyword}
  domain theory, compactification, Stone duality
\end{keyword}

\end{frontmatter}

%%%%%%%%%%%%%%%%%%%%%%%%%%%%%%%%%%%%%%%%%%%%%%%%%%%%%%%%%
\section{Introduction}
\label{sec:intro}

The tight link between topology and the theory of computation is
well-known and has been investigated extensively in the
literature. This link is clearly manifested in the theory of
domains~\cite{Mislove-Topology_DT_TCS:1998}, which have, in
particular, provided a natural computational framework for
mathematical analysis. This line of research was initiated by Edalat's
work on dynamical systems~\cite{Edalat95:DT-fractals}. Ever since,
domains have been used for the study of several other concepts and
operators of mathematical analysis, {\eg}, exact real number
computation~\cite{Escardo96-tcs,Edalat:Domains_Physics:1997},
differential equation solving~\cite{Edalat_Pattinson2007-LMS_Picard},
stochastic processes~\cite{Bilokon_Edalat:Domain_Brownian:2017},
reachability analysis of hybrid
systems~\cite{Edalat_Pattinson:Hybrid:2007,Moggi_Farjudian_Duracz_Taha:Reachability_Hybrid:2018},
and robustness analysis of neural
networks~\cite{Zhou_Shaikh_Li_Farjudian:Robust_NN:MSCS:2023}.

In such applications, when the topological spaces involved have some
desirable properties ({\eg}, metrizability, local compactness, etc.)
the construction of the domain model can be relatively
straightforward. In the absence of favourable properties, however,
domain models do not arise naturally and one may look for
\emph{substitute} constructions, an example of which can be found
in~\cite{Farjudian_Moggi:Robustness_Scott_Continuity_Computability:2023}
for robustness analysis of systems with state spaces that are not
(locally) compact.

Another example is encountered in the solution of initial value
problems (IVPs). For the Picard method of IVP solving, continuous
domains of functions arise
naturally~\cite{Edalat_Lieutier:Domain_Calculus_One_Var:MSCS:2004,Edalat_Pattinson2007-LMS_Picard,FarjudianKonecny2008:wollic-lnai}. The
situation is slightly different for the methods that are based on
temporal discretization ({\eg}, Euler and Runge-Kutta methods). While
it is still possible to use classical domain models when an imperative
style of computation is
adopted~\cite{Edalat_Farjudian_Mohammadian_Pattinson:2nd_Order_Euler:2020:Conf},
a functional implementation via the fixpoint operator requires a
substitute domain
construction~\cite{Edalat_Farjudian_Li:Temporal_Discretization:2023}. Let
us discuss this in more detail. Assume that the following IVP is
given:
\begin{equation}
  \label{eq:main_ivp}
    \left\{
      \begin{array}{r@{\hspace{0.5ex}=\hspace{0.5ex}}l}
        y'(t) & f(y(t)),\\
        y(t_0) & y_0,\\
      \end{array}
    \right.
  \end{equation}
  in which $t_0 \in \R$, $y_0 \in \R^n$, and $f: \R^n \to \R^n$ is a
  continuous vector field, for some natural number $n \geq 1$. Assume
  that a solution exists over a lifetime of $[t_0,T]$, for some
  $T > t_0$. In a domain-theoretic framework, one would search for a
  solution of~\eqref{eq:main_ivp} in the space of functions from
  $[t_0, T]$ to the interval domain:
  \begin{equation}
    \label{eq:interval_domain_IR_n}
    \intvaldom[\R^n\lift] \defeq \set{\R^n} \cup \setbarTall{\prod_{i=1}^n [
      a_i, b_i]}{\forall i \in \set{1, \ldots, n} : a_i, b_i \in \R
      \text{ and } a_i \leq b_i},
  \end{equation}
  ordered by superset relation, {\ie},
  $\forall X, Y \in \intvaldom[\R^n\lift]: X \sqsubseteq Y \defiff X
  \supseteq Y $. Hence, the set $\R^n$ is the least element of the
  interval domain $\intvaldom[\R^n\lift]$.

  For applications such as differential equation solving, the interval
  domain $\intvaldom[\R^n\lift]$ is considered with the Scott
  topology. What is equally important is the topology on the interval
  $[t_0, T]$. For the Picard method, the Euclidean topology on
  $[t_0, T]$ is the suitable topology. As the Euclidean topology over
  $[t_0,T]$ is locally compact, the space of functions from $[t_0,T]$
  to $\intvaldom[\R^n\lift]$---which are continuous with respect to
  the Euclidean topology on $[t_0, T]$ and the Scott topology on
  $\intvaldom[\R^n\lift]$---ordered by pointwise ordering is a
  continuous domain.

The Euclidean topology, however, is not suitable in a functional
framework in the presence of temporal discretization. To see this,
note that, by integrating both sides of~\eqref{eq:main_ivp}, we obtain
$y( t+h) = y(t) + \int_t^{t+h} f(y(\tau)) \md \tau$, for all
$t \in [t_0,T]$ and $h \in [0,T-t]$. This can be written as:
\begin{equation}
  \label{eq:y_t_h_i}
  y(t+h) =
  y(t) + i( t, h),
\end{equation}
in which the integral $i( t, h)$ represents the dynamics of the
solution from $t$ to $t+h$. Thus, a general schema for validated
solution of the IVP~\eqref{eq:main_ivp} with temporal discretization
may be envisaged as follows:

\begin{enumerate}[label=(\roman*)]

\item For some $k \geq 1$, consider the partition
  $Q=(q_0, \ldots, q_k)$ of the interval $[t_0,T]$.

\item Let $Y(t_0) \defeq y_0$.

\item \label{item:gen_schema_iteration} For each
  $j \in \set{0, \ldots, k-1}$ and $h \in (0,q_{j+1} - q_j]$:
  \begin{equation}
    \label{eq:Y_h_I_j}
    Y(q_j+h) \defeq Y(q_j) + I( q_j, h),
  \end{equation}
  where $I( q_j, h)$ is an interval enclosure of the integral factor
  $i( q_j, h)$ from equation~\eqref{eq:y_t_h_i}. The operator $I$, in
  general, depends on several parameters, including (enclosures of)
  the vector field and its derivatives, the enclosure $Y(q_j)$, the
  index $j$, etc.
  
\end{enumerate}
In~\eqref{eq:Y_h_I_j}, the operator `$+$' denotes interval addition,
and for the method to be validated, the term $I( q_j, h)$ must account
for all the inaccuracies, {\eg}, floating-point error, truncation
error, etc.

In step~\ref{item:gen_schema_iteration} of the schema, the solver
moves forward in time, from $q_j$ to $q_{j+1}$. This requires keeping
the state, {\ie}, the solution up to the partition point $q_j$, and
referring to this state in iteration~$j$. As such, the schema has an
imperative style, and indeed encompasses various validated approaches
to IVP solving with temporal discretization in the literature,
including~\cite{Edalat_Farjudian_Mohammadian_Pattinson:2nd_Order_Euler:2020:Conf}. This
is in contrast with the functional style adopted in the definition of
the Picard operator
in~\cite{Edalat_Lieutier:Domain_Calculus_One_Var:MSCS:2004,Edalat_Pattinson2007-LMS_Picard},
and in language design for real number computation. For instance, the
languages designed
in~\cite{Escardo96-tcs,Farjudian:Shrad:2007,DiGianantonio_Edalat_Gutin-Automatic_Differentiation:2023}
for computation over real numbers and real functions are functional
languages based on lambda calculus, with their denotational semantics
provided by domain models.

In a functional framework, the solution of the IVP~\eqref{eq:main_ivp}
is obtained as the fixpoint of a higher-order operator. Domain models
are particularly suitable for fixpoint computations of this type. A
straightfoward (but, flawed) way of obtaining a fixpoint formulation
for the above general schema is to define a functional $\Phi$ over
interval functions as follows:
\begin{equation*}
  %\label{eq:Phi_fixpoint_flawed}
  \Phi(Y)(x) \defeq \left\{
      %\arrayoptions{2ex}{1.3}
      \begin{array}{ll}
        y_0, & \text{if } x = t_0,\\
        Y( q_j) + I( q_j, x-q_j), & \text{if } q_j < x \leq q_{j+1}.
      \end{array}
      \right.
\end{equation*}

\noindent
The fixpoint of this operator (if it exists) will be the right
choice. The problem is that, the enclosures obtained by applying
$\Phi$ do not have upper (respectively, lower) semi-continuous upper
(respectively, lower) bounds, even if the initial enclosure has
continuous bounds. This situation is illustrated in
Fig.~\ref{fig:fixpoint_flawed}, where the true solution of the IVP is
drawn in {\color{magenta}{magenta}} color. We begin with an initial
enclosure of the solution drawn in {\color{green}{green}} color,
{\ie}, the outer solid curves. As can be seen, the upper and lower
bounds of this initial enclosure are continuous.

After applying $\Phi$ once, we obtain the first approximation of the
solution, with piecewise affine upper (in {\color{black}{solid
    black}}) and lower (in {\color{blue}{solid blue}}) bounds. As can
be seen at points $q_0$, $q_1$, $q_2$, and $q_3$, the upper bound is
not upper semi-continuous, and the lower bound is not lower
semi-continuous. If we apply the operator $\Phi$ a second time, we
obtain a tighter ({\ie}, more accurate) enclosure of the solution,
which is drawn in dashed lines. But these bounds also have the same
problem with semi-continuity, {\ie}, the upper bound is not upper
semi-continuous, and the lower bound is not lower
semi-continuous. Hence,
by~\cite[Proposition~2.10]{Edalat_Farjudian_Li:Temporal_Discretization:2023},
this results in approximations of the solution of the IVP which are
not continuous with respect to the Euclidean topology on $[t_0, T]$,
and can only be continuous with respect to the so-called upper limit
topology. Recall that the upper limit topology has as its base the
collection $\setbarNormal{(a,b]}{a,b \in \R}$ of half-open
intervals. This topology is known not to be locally compact (see,
{\eg},~\cite[Proposition~4.5]{Edalat_Farjudian_Li:Temporal_Discretization:2023}). This
shortcoming motivated the substitute construction presented
in~\cite{Edalat_Farjudian_Li:Temporal_Discretization:2023}, where a
more detailed justification of why the upper limit topology is needed
can also be found, together with fixpoint formulations of Euler and
Runge-Kutta operators.

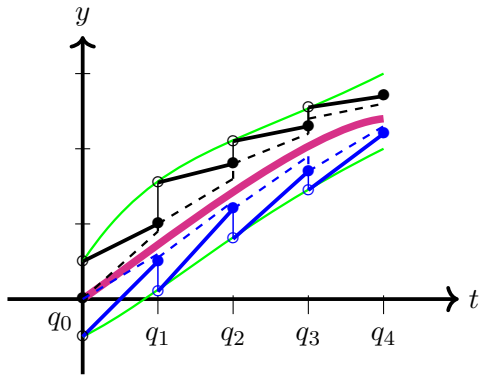
\begin{figure}[h]
  \centering

    \adjustbox{scale=1, center}{
      \begin{tikzpicture}
	
	% Axes
	\draw [->, line width = 0.5mm] (-1,0) -- (5,0);
	\draw [->, line width = 0.5mm] (0,-1) -- (0,3.5);
	\node [right] at (5,0) {$t$};
	\node [above] at (0,3.5) {$y$};

        % Ticks on the axes
        
        \draw (1,-0.2) -- (1,0.05);
        \draw (2,-0.2) -- (2,0.05);
        \draw (3,-0.2) -- (3,0.05);
        \draw (4,-0.2) -- (4,0.05);

        \draw (-0.1,1) -- (0.1,1);
        \draw (-0.1,2) -- (0.1,2);
        \draw (-0.1,3) -- (0.1,3);

        % axes marks
        \node at (-0.3, -0.3) {$q_0$};
        \node at (1, -0.5) {$q_1$};
        \node at (2, -0.5) {$q_2$};
        \node at (3, -0.5) {$q_3$};
        \node at (4, -0.5) {$q_4$};

        % (Hypothetical) solution
        %\draw[color=magenta!50!black, line width=1mm] (0,0)
        %.. controls (1,0.7) and ( 3,2.3) .. (4,2.4);
        \draw[color=magenta!90!black, line width=1mm] (0,0) .. controls (1,0.7) and ( 3,2.3) .. (4,2.4);

        % The initial enclosure of the solution
        \draw[color=green, line width = 0.3mm] (0,0.5) .. controls (1,2) and (2,2) .. (4,3);
        \draw[color=green, line width = 0.3mm] (0,-0.5) .. controls (1,0) and (2,1) .. (4,2);

        % Enclosures from the fixpoint operator
        
        \node at ( 0, 0) {\textbullet};

        % The upper bound, which is not upper semicontinuous
        % and The lower bound, which is not lower semicontinuous

        \node at ( 0, 0.5) {$\circ$}; 
        \draw[line width = 0.5mm] (0,0.5) -- (1,1); \node at ( 1, 1)
        {\textbullet};

        \draw[color=blue, line width = 0.5mm] (0,-0.5) -- (1,0.5); \node[color=blue] at ( 1, 0.5)
        {\textbullet};

        \node at ( 0, -0.5) {$\circ$};

        \draw[color=blue, line width = 0.2mm] (1, 0.5) -- (1, 0.1);

        \draw[color=black, line width = 0.2mm] (1, 1)
        -- (1, 1.55);

        \node at ( 1, 1.55) {$\circ$}; \draw[line width = 0.5mm]
        (1,1.55) -- (2,1.8); \node at ( 2, 1.8) {\textbullet};

        \node[color=blue] at ( 1, 0.1) {$\circ$}; \draw[color=blue,
        line width = 0.5mm] (1, 0.1) -- (2, 1.2); \node[color=blue] at ( 2, 1.2)
        {\textbullet};

        \draw[color=black,  line width = 0.2mm] (2, 1.8)
        -- (2, 2.1);

        \draw[color=blue, line width = 0.2mm] (2, 1.2) -- (2, 0.8);

        \node at ( 2, 2.1) {$\circ$}; \draw[line width = 0.5mm]
        (2,2.1) -- (3,2.3); \node at ( 3, 2.3) {\textbullet};

        \node[color=blue] at ( 2, 0.8) {$\circ$}; \draw[color=blue, line width = 0.5mm]
        (2, 0.8) -- (3,1.7); \node[color=blue] at ( 3, 1.7) {\textbullet};

        \draw[color=blue, line width = 0.2mm] (3, 1.7)
        -- (3, 1.45);

        \draw[color=black,  line width = 0.3mm] (3,
        2.3) -- (3, 2.55);

        \node at ( 3, 2.55) {$\circ$}; \draw[line width = 0.5mm]
        (3,2.55) -- (4,2.7); \node at ( 4, 2.7) {\textbullet};

        \node[color=blue] at ( 3, 1.45) {$\circ$}; \draw[color=blue,
        line width = 0.5mm] (3, 1.45) -- (4,2.2); \node[color=blue] at
        ( 4, 2.2) {\textbullet};

        \draw[color=black, dashed, line width = 0.3mm] (0,0) -- (1,0.9);
        \draw[color=blue, dashed, line width = 0.3mm] (0,0) -- (1,0.6);

        \draw[color=black, dashed, line width = 0.3mm] (1,0.9) -- (1, 1);
        \draw[color=black, dashed, line width = 0.3mm] (1,1) -- (2, 1.6);
        \draw[color=blue, dashed, line width = 0.3mm] (1,0.55) -- (2,1.3);

        \draw[color=black, dashed, line width = 0.3mm] (2,1.6) -- (2,1.8);
        \draw[color=black, dashed, line width = 0.3mm] (2,1.8) -- (3, 2.2);
        \draw[color=blue, dashed, line width = 0.3mm] (2,1.2) -- (3,1.9);

        \draw[color=black, dashed, line width = 0.3mm] (3,2.2) -- (3,2.4);
        \draw[color=black, dashed, line width = 0.3mm] (3,2.4) -- (4, 2.6);

        \draw[color=blue, dashed, line width = 0.3mm] (3,1.9) -- (3,1.7);
        \draw[color=blue, dashed, line width = 0.3mm] (3,1.7) -- (4,2.3);

      \end{tikzpicture}
    }

    \caption{The correct semi-continuity of the bounds of the
      successive approximations cannot be guaranteed, even when the
      process starts from an initial enclosure with continuous bounds
      (in {\color{green}{green}}). The curve in
      {\color{magenta}{magenta}} depicts the true solution, and the
      {\color{black}{black}} (respectively, {\color{blue}{blue}})
      piecewise affine curves represent the upper (respectively,
      lower) bounds of the enclosures obtained from applying $\Phi$
      once (solid) and twice (dashed).}
  \label{fig:fixpoint_flawed}
\end{figure}

In a more general setting, assume that
$\mathbb{X} \equiv (X, \tau_{\mathbb{X}})$ is a $T_0$ topological
space, and $\mathbb{D}$ is a non-singleton bounded-complete domain
(bc-domain). From~\cite{Erker_et_al:way_below:1998}, we know that the
function space $[\mathbb{X} \to \mathbb{D}]$ is a bc-domain if and
only if $\mathbb{X}$ is core-compact. Thus, the main challenge is to
develop a computational framework for the function space
$[\mathbb{X} \to \mathbb{D}]$ when $\mathbb{X}$ is not core-compact,
{\eg}, when $\mathbb{X}$ is the interval $[t_0,T]$ endowed with the
upper limit topology, and $\mathbb{D} = \intvaldom[\R^n\lift]$ with
the Scott
topology. In~\cite{Edalat_Farjudian_Li:Temporal_Discretization:2023},
a domain is constructed as a substitute for this function space via
abstract bases. In contrast
with~\cite{Edalat_Farjudian_Li:Temporal_Discretization:2023}, the aim
of the current article is to show that we can work directly on the
space $\mathbb{X}$ and obtain the same substitute via Stone
duality. To be more specific, using the well-known results in Stone
duality, we construct a topological space $\spect{\mathbb{X}}$ with
the following properties:

  \begin{itemize}
  \item $\spect{\mathbb{X}}$ is a core-compact (in fact, spectral)
    space and $\mathbb{X}$ can be embedded into $\spect{\mathbb{X}}$ as
    a dense subspace.
  \item The function spaces $[\mathbb{X} \to \mathbb{D}]$ and
    $[\spect{\mathbb{X}} \to \mathbb{D}]$ are related via a Galois connection.
  \end{itemize}

  Such a construction is useful for computable analysis within a
  domain framework. When $\mathbb{X}$ is not core-compact, the
  non-continuous directed-complete partial order (dcpo)
  $[\mathbb{X} \to \mathbb{D}]$ is used for implementation of
  algorithms, whereas analysis of computability is carried out over
  the continuous domain $[\spect{\mathbb{X}} \to \mathbb{D}]$, subject
  to the existence of a suitable effective structure over
  $[\spect{\mathbb{X}} \to \mathbb{D}]$.

%%%%%%%%%%%%%%%%%%%%%%%%%%%%%%%%%%%%%%%%%%%%%%%%%%%% 
  \subsection{Related Work}

  Compactification is a fundamental concept in topology. Classical
  examples such as Stone-{\v C}ech and one-point
  compactification~\cite{Munkres:Topology:2000} have been introduced
  primarily for Hausdorff topological spaces. In the non-Hausdorff
  setting, Smyth's stable
  compactification~\cite{Smyth:Stable_Compactification:1992} is the
  closest to ours. In fact, our construction can be obtained as a
  special case of Smyth's stable compactification by considering the
  so-called fine quasi-proximities, resulting in compactifications
  that are spectral. In \cite{Smyth:Stable_Compactification:1992},
  this special case is refered to as spectralization, whereas we use
  the term \emph{spectral compactification} to emphasize the
  compactification aspect of the construction. Spectral
  compactification is indeed an important special case of stable
  compactification which is suitable for computational purposes. Here,
  we keep the presentation simple and do not use quasi-proximities
  which form the foundation of Smyth's construction. We obtain all the
  basic properties that we need in this simpler framework. We point
  out that spectral compactification is fundamentally different from
  Stone-{\v C}ech and one-point compactifications in that, even when a
  space $\mathbb{X}$ is compact and Hausdorff, its spectral
  compactification may not be $T_1$ (for an example,
  see~\cite[page~338]{Smyth:Stable_Compactification:1992}).

  Another aspect of our work here is the idea of a substitute
  construction. Such constructions can be useful when the topological
  spaces do not have favourable
  properties. In~\cite{Edalat_Farjudian_Li:Temporal_Discretization:2023},
  we used the idea of a substitute construction in the context of IVP
  solving. Another example is presented
  in~\cite{Farjudian_Moggi:Robustness_Scott_Continuity_Computability:2023},
  in the context of robustness
  analysis. In~\cite{Farjudian_Moggi:Robustness_Scott_Continuity_Computability:2023},
  we studied robustness analysis of systems with state spaces $\State$
  which are not locally compact. In such cases, the lattice
  $\cLat{\State}$ of closed subsets of $\State$ (under superset
  relation) may not be continuous, let alone $\omega$-continuous. The
  lattice of closed subsets is central to robustness analysis.  Hence,
  we construct an $\omega$-continuous lattice $\mathbb{L}$ which is
  related to $\cLat{\State}$ via a suitable adjunction.

%%%%%%%%%%%%%%%%%%%%%%%%%%%%%%%%%%%%%%%%%%%%%%%%%%%%
\subsection{Structure of the Paper}
\label{subsec:structure}

The preliminaries, including a brief reminder of basic concepts from
domain theory and Stone duality, are presented in
Section~\ref{sec:prelim}. In
Section~\ref{sec:Basic_Galois_Connection}, we establish a Galois
connection between the function spaces $[\mathbb{X} \to \mathbb{D}]$
and $[\mathbb{Y} \to \mathbb{D}]$, where $\mathbb{D}$ is a bc-domain,
$\mathbb{X}$ and $\mathbb{Y}$ are topological spaces, and $\mathbb{X}$
is densely embedded in $\mathbb{Y}$. A detailed account of the
spectral compactification of topological spaces is presented in
Section~\ref{sec:Core_Compactification}. A continuous domain for the
space of functions from an arbitrary $T_0$ space $\mathbb{X}$ to a
bc-domain $\mathbb{D}$ is constructed in Section~\ref{sec:domain_funs}
based on spectral compactification of $\mathbb{X}$, and we prove that
the result is equivalent to the construction based on abstract bases
developed
in~\cite{Edalat_Farjudian_Li:Temporal_Discretization:2023}. We
conclude the article with some remarks in
Section~\ref{sec:Concluding_Remarks}.

%%%%%%%%%%%%%%%%%%%%%%%%%%%%%%%%%%%%%%%%%%%%%%%%%%%%%%%%%
\section{Preliminaries}
\label{sec:prelim}

Basic familiarity with domain theory and Stone
duality~\cite{AbramskyJung94-DT,Goubault-Larrecq:Non_Hausdorff_topology:2013}
will be helpful in understanding the main results of the paper.  In
this section, we present a brief reminder of the preliminary concepts,
and establish some notations and definitions.

For arbitrary sets $X$ and $Y$, by $X \subseteq_f Y$ we mean $X$ is a
finite subset of $Y$. Assume that $(D, \sqsubseteq)$ is a partially
ordered set (poset) and $A \subseteq D$. We define
$\lowerSet{A} \defeq \setbarNormal{x \in D}{\exists a \in A: x
  \sqsubseteq a}$, and when $A$ is a singleton $\set{a}$, we may
simply write $\lowerSet{a}$ instead of $\lowerSet{\set{a}}$. We denote
the join ({\aka} the least upper bound) of $A$ by $\join A$, and the
meet ({\aka} the greatest lower bound) of $A$ by $\meet A$, whenever
they exist. A subset $A \subseteq D$ is said to be directed if it is
non-empty and $\forall x,y \in A: \exists z \in A: x \sqsubseteq z$
and $y \sqsubseteq z$, in which case, we write $A \dirSubsetEq D$. The
poset $(D, \sqsubseteq)$ is said to be a directed-complete partial
order (dcpo) if $\forall A \dirSubsetEq D: \join A$ exists in $D$. The
poset $(D, \sqsubseteq)$ is said to be pointed if it has a bottom
element $\bot$.

Assume that $(D, \sqsubseteq)$ is a dcpo and let $x,y \in D$. The
element $x$ is said to be \emph{way-below} $y$---written as
$x \ll y$---if for every directed subset $A$ of $D$, if
$y \sqsubseteq \join A$, then there exists an element $d \in A$ such
that $x \sqsubseteq d$. An element $x \in D$ is said to be
\emph{finite} if $x \ll x$.

For every element $x$ of a dcpo
$\mathbb{D} \equiv (D, \sqsubseteq)$, let
$\waybelows{x} \defeq \setbarNormal{a \in D}{a \ll x}$. A subset
$B \subseteq D$ is said to be a \emph{basis} for $\mathbb{D}$ if for
every element $x \in D$, the set $B_x \defeq \waybelows{x} \cap B$ is
a directed subset and $x = \join B_x$. A dcpo is said to be
($\omega$-)continuous if it has a (countable) basis, and it is said to
be ($\omega$-)algebraic if it has a (countable) basis consisting
entirely of finite elements.

\begin{definition}[Domain]
  We call $\mathbb{D} \equiv (D, \sqsubseteq)$ a domain if it is a
  pointed continuous dcpo.
\end{definition}

Apart from the order-theoretic structure, domains also have a
topological structure. Assume that
$\mathbb{D} \equiv (D, \sqsubseteq)$ is a poset. A subset
$O \subseteq D$ is said to be \emph{Scott open} if it has the
following properties:
\begin{enumerate}[label=(\arabic*)]
\item It is an upper set, {\ie},
  $\forall x \in O, \forall y \in D: x \sqsubseteq y \implies y \in
  O$.
\item For every directed set $A \subseteq D$ for which $\join A$
  exists, if $\join A \in O$ then $A \cap O \neq \emptyset$.
\end{enumerate}

The collection of all Scott open subsets of a poset $\mathbb{D}$ forms
a $T_0$ topology $\Scott{\mathbb{D}}$, refered to as the Scott
topology. A function $f: \mathbb{D}_1 \to \mathbb{D}_2$ is said to be
Scott continuous if it is continuous with respect to the Scott
topologies on $\mathbb{D}_1$ and $\mathbb{D}_2$. Scott continuity can
be stated purely in order-theoretic terms, {\ie}, a map
$f: (D_1, \sqsubseteq_1) \to (D_2, \sqsubseteq_2)$ between two posets
is Scott continuous if and only if it is monotonic and preserves the
suprema of directed sets, {\ie}, for every directed set
$X \subseteq D_1$ for which $\join X$ exists, we have
$f(\join X) = \join
f(X)$~\cite[Proposition~4.3.5]{Goubault-Larrecq:Non_Hausdorff_topology:2013}.

A poset $(D, \sqsubseteq)$ is said to be a lattice if it is closed
under binary join and binary meet. A lattice $(D, \sqsubseteq)$ is
called:

\begin{itemize}
\item bounded if it has both a bottom element $\bot$ and a top element
  $\top$.

\item complete if $\forall A \subseteq D: \join A$ exists in $D$. Note
  that every complete lattice must be bounded, with $\bot = \join \emptyset$
  and $\top = \join D$.
  
\item distributive if
  $\forall x,y,z \in D: x \binmeet (y \binjoin z) = (x \binmeet y)
  \binjoin (x \binmeet z)$.
\end{itemize}
By a continuous lattice we mean a complete lattice with a basis. Other
variants ({\ie}, $\omega$-continuous, algebraic, and
$\omega$-algebraic) are defined accordingly. Of particular interest to
our discussion are the arithmetic lattices, {\ie}, continuous
distributive lattices $(D, \sqsubseteq)$ with the following property:
\begin{equation*}
  \forall x,y,z \in D: \quad (x \ll y \text{ and } x \ll z) \implies x \ll y
  \binmeet z.
\end{equation*}

For every topological space
$\mathbb{X} \equiv (X, \tau_{\mathbb{X}})$, the poset
$(\tau_{\mathbb{X}}, \subseteq)$ of open subsets of $X$ ordered by
subset relation is a complete distributive lattice, in which
$\bot = \emptyset$ and $\top = X$. Furthermore, we have:
\begin{equation*}
  \forall A \subseteq \tau_{\mathbb{X}}: \quad \join A = \bigcup A \text{ and
  } \meet A = \interiorOf{(\bigcap A)},
\end{equation*}
where $\interiorOf{(\cdot)}$ denotes the interior operator. When the
lattice $(\tau_{\mathbb{X}}, \subseteq)$ is continuous, the
topological space~$\mathbb{X}$ is said to be
\emph{core-compact}. Core-compactness is a desirable property which
guarantees that we obtain `well-behaved' function spaces.

Recall that a dcpo $(D,\sqsubseteq)$ is \emph{bounded-complete} if
each bounded subset $A \subseteq D$ has a join $\join A \in D$. Let
$(D, \sqsubseteq_0)$ be a bounded-complete domain (bc-domain). We let
$\mathbb{D} \equiv (D, \Scott{\mathbb{D}})$ denote the topological
space with the carrier set $D$ endowed with the Scott topology
$\Scott{\mathbb{D}}$. The space $[\mathbb{X} \to \mathbb{D}]$ of
functions $f: X \to D$ which are
$(\tau_{\mathbb{X}}, \Scott{\mathbb{D}})$ continuous can be ordered
pointwise by defining:
\begin{equation*}
  \forall f, g \in [\mathbb{X} \to \mathbb{D}]: \quad f \sqsubseteq g \iff \forall x \in X:
  f(x) \sqsubseteq_0 g(x).
\end{equation*}
It is straightfoward to verify that the poset
$([\mathbb{X} \to \mathbb{D}], \sqsubseteq)$ is directed-complete and
$\forall x \in X: (\join_{i \in I} f_i)(x) = \join
\setbarNormal{f_i(x)}{i \in I}$, for any
$\setbarNormal{f_i}{i \in I} \dirSubsetEq [\mathbb{X} \to
\mathbb{D}]$. By `well-behaved' function spaces we mean those for
which the dcpo $([\mathbb{X} \to \mathbb{D}], \sqsubseteq)$ is
continuous:
\begin{theorem}
  \label{thm:core_compact_bc_domain}
  For any topological space $\mathbb{X}$ and non-singleton bc-domain
  $\mathbb{D}$, the function space
  $([\mathbb{X} \to \mathbb{D}], \sqsubseteq)$ is a bc-domain
  $\iff \mathbb{X}$ is core-compact.
\end{theorem}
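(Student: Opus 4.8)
The plan is to separate the purely order-theoretic content from the topological content. First I would observe that, independently of any assumption on $\mathbb{X}$, the poset $[\mathbb{X}\to\mathbb{D}]$ is pointed (the constant map $x\mapsto\bot$ is Scott continuous and is the least element) and is bounded-complete whenever $\mathbb{D}$ is: given a family $\setbarNormal{f_i}{i\in I}$ bounded above by some $g$, the finite joins $\join_{i\in F}f_i$ (for $F\subseteq_f I$) are continuous, since binary join is Scott continuous on bounded pairs in a bc-domain, and they form a directed family whose pointwise---hence continuous---supremum is the required join. As the excerpt already records that $[\mathbb{X}\to\mathbb{D}]$ is a dcpo with pointwise directed suprema, the statement reduces to: $[\mathbb{X}\to\mathbb{D}]$ is \emph{continuous} if and only if $\mathbb{X}$ is core-compact.

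For the direction `core-compact $\Rightarrow$ continuous', I would build approximants from single-step functions. For $U\in\tau_{\mathbb{X}}$ and $d\in D$, let $U\searrow d$ send $x$ to $d$ if $x\in U$ and to $\bot$ otherwise; each such function is Scott continuous. The technical core is the claim that $U\searrow d\ll g$ in $[\mathbb{X}\to\mathbb{D}]$ whenever there is some $d'$ with $d\ll d'$ and $U\ll g^{-1}(\wayaboves{d'})$ in the lattice $(\tau_{\mathbb{X}},\subseteq)$ (note $g^{-1}(\wayaboves{d'})$ is open, as $\wayaboves{d'}$ is Scott open and $g$ continuous). To prove this, given a directed $A$ with $g\sqsubseteq\join A$, I set $V_h\defeq h^{-1}(\wayaboves{d'})$ for $h\in A$; these opens are directed, and from $d'\ll g(x)\sqsubseteq\join_{h}h(x)$ one gets $g^{-1}(\wayaboves{d'})\subseteq\bigcup_{h\in A}V_h$, whence $U\ll g^{-1}(\wayaboves{d'})$ yields a single $h\in A$ with $U\subseteq V_h$, so $d\ll d'\sqsubseteq h(x)$ for all $x\in U$, i.e. $U\searrow d\sqsubseteq h$. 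The main obstacle is exactly this passage from a \emph{pointwise} choice of $h$ (available from the way-below relation in $\mathbb{D}$) to a choice valid uniformly on $U$; core-compactness of $\mathbb{X}$ is what makes this uniformization possible, and the auxiliary $d'$ with $d\ll d'$ is needed so that membership in $V_h$, which asks for $\ll$ rather than merely $\sqsubseteq$, can be secured. With the claim in hand, core-compactness of $\mathbb{X}$ and continuity of $\mathbb{D}$ together give $g=\join\setbarNormal{U\searrow d}{d\ll d',\ U\ll g^{-1}(\wayaboves{d'})}$ pointwise; and since $\mathbb{D}$ is bounded-complete the elements way-below $g$ are closed under the finite joins that exist below $g$, hence form a directed set, so $g$ is the directed supremum of elements way-below it and $[\mathbb{X}\to\mathbb{D}]$ is continuous.

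For the converse I would avoid computing the way-below relation for a general codomain and instead reduce to the Sierpi\'nski space $\Sigma$ (the two-element lattice $\bot\sqsubset\top$). Because $\mathbb{D}$ is a non-singleton bc-domain there is an element $a\neq\bot$, and by continuity of $\mathbb{D}$ some $d\ll a$ with $d\neq\bot$; then $\wayaboves{d}$ is Scott open, contains $a$, and omits $\bot$. Hence the maps $s\colon\Sigma\to\mathbb{D}$ with $s(\top)=a$, $s(\bot)=\bot$ and $r\colon\mathbb{D}\to\Sigma$, the characteristic function of $\wayaboves{d}$, are Scott continuous and satisfy $r\circ s=\id_{\Sigma}$, exhibiting $\Sigma$ as a retract of $\mathbb{D}$. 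Post-composition then exhibits $[\mathbb{X}\to\Sigma]$ as a retract of $[\mathbb{X}\to\mathbb{D}]$, the section and retraction being Scott continuous since directed suprema in both function spaces are pointwise. A retract of a continuous dcpo is continuous, so $[\mathbb{X}\to\Sigma]$ is continuous. Finally, $f\mapsto f^{-1}(\top)$ is an order isomorphism $[\mathbb{X}\to\Sigma]\cong(\tau_{\mathbb{X}},\subseteq)$, so continuity of $[\mathbb{X}\to\Sigma]$ is precisely continuity of the open-set lattice, i.e. core-compactness of $\mathbb{X}$. This is where the non-singleton hypothesis is essential: without a non-bottom element the retraction onto $\Sigma$ collapses and no information about $\tau_{\mathbb{X}}$ can be recovered.
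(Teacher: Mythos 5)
Your proof is correct, but note that the paper does not actually prove this theorem at all: its ``proof'' is a pointer to \cite{Erker_et_al:way_below:1998} for both directions. What you have written is a self-contained reconstruction of essentially the arguments of that reference. Your forward direction is the step-function approximation of Erker--Escard\'o--Keimel (their Lemma~1 and Proposition~2), which the paper itself reproduces in generalized form later as Lemma~\ref{lemma:generalized_erker_sup_step_fun}; your converse---reducing to the Sierpi\'nski space $\Sigma$ via a Scott-continuous retraction built from a nontrivial $d \ll a$, and then identifying $[\mathbb{X}\to\Sigma]$ with $(\tau_{\mathbb{X}},\subseteq)$---is likewise the standard argument in that reference, and it matches the retract technique the paper itself uses in Proposition~\ref{prop:right_adjoint_Scott_X_core_compact}. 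Two steps are terser than they should be, though neither is a genuine gap. First, the inclusion $g^{-1}(\wayaboves{d'})\subseteq\bigcup_{h\in A}h^{-1}(\wayaboves{d'})$ does not follow from the bare definition of $\ll$ applied to $d' \ll \join_h h(x)$, which only yields $d'\sqsubseteq h(x)$; it needs Scott-openness of $\wayaboves{d'}$, i.e.\ interpolation in $\mathbb{D}$. You do invoke Scott-openness parenthetically, so this is fine, but once you have it the auxiliary interpolant $d'$ is actually redundant: $U\ll g^{-1}(\wayaboves{d})$ already gives $d\chi_U\ll g$, contrary to your remark that the interpolant is what secures membership in $V_h$. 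Second, bounded completeness of $[\mathbb{X}\to\mathbb{D}]$ requires knowing that the pointwise join of two bounded \emph{topologically} continuous maps $X \to D$ is again continuous; ``binary join is Scott continuous on bounded pairs'' yields this only after checking that the Scott topology on $D\times D$ coincides with the product topology (true because $\mathbb{D}$ is a continuous domain), or by a direct argument using basic Scott-open sets of the form $\wayaboves{a}$.
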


\begin{proof}
  For the ($\Leftarrow$) direction,
  see~\cite[Proposition~2]{Erker_et_al:way_below:1998}. A proof
  of the ($\Rightarrow$) direction can also be found
  on~\cite[pages 62 and 63]{Erker_et_al:way_below:1998}.
\end{proof}

The connection between topology and order theory runs much deeper than
stated so far in our discussion. We briefly mention some more results
on this as they will be needed later on, but the interested reader may
refer
to~\cite{AbramskyJung94-DT,Goubault-Larrecq:Non_Hausdorff_topology:2013}
for a more comprehensive account of the connection.

A complete lattice $\mathbb{L}$ is called a frame if it satisfies the
infinite distributivity law
$x \binmeet \join_{i \in I} y_i = \join_{i \in I} (x \binmeet
y_i)$. It is not difficult to verify that, for every topological space
$\mathbb{X} \equiv (X, \tau_{\mathbb{X}})$, the complete lattice
$(\tau_{\mathbb{X}}, \subseteq)$ is indeed a frame. A map
$f: \mathbb{K} \to \mathbb{L}$ between two complete lattices is called
a frame homomorphism if it preserves finite meets and arbitrary
joins. We let $\FrmCat$ denote the category of frames and frame
homomorphisms, and we let $\TopCat$ denote the category of topological
spaces and continuous functions. The map that assigns
$(\tau_{\mathbb{X}}, \subseteq)$ to $\mathbb{X}$ can be extended to a
functor $\Omega: \TopCat \to \opCat{\FrmCat}$ by mapping every
continuous function $f: \mathbb{X} \to \mathbb{Y}$ to
$\Omega(f): \Omega(\mathbb{Y}) \to \Omega(\mathbb{X})$ defined by
$\forall O \in \tau_{\mathbb{Y}}: \Omega(f)(O) = f^{-1}(O)$.

Going in the opposite direction, {\ie}, recovering a topological space
from the lattice $\mathbb{L}$ of its open sets, is the core of Stone
duality. A subset $F$ of a complete lattice
$\mathbb{L} \equiv (L, \sqsubseteq)$ is called a filter if it is
non-empty and satisfies the following two conditions:
\begin{enumerate}[label=(\roman*)]
\item $F$ is an upper set, {\ie}, $\forall x \in F, y \in L: x
  \sqsubseteq y \implies y \in F$.

  \item $F$ is downward directed, {\ie}, $\forall x,y \in F: x
    \binmeet y \in F$.
  \end{enumerate}
  A filter $F \subseteq L$ is said to be completely prime if
  $\forall A \subseteq L: \join A \in F \implies A \cap F \neq
  \emptyset$. Notice the similarity with the definition of Scott open
  sets, except that here we allow $A \subseteq L$ to be arbitrary, not
  just directed. In particular, every completely prime filter is Scott
  open. When $\mathbb{L} = (\tau_{\mathbb{X}}, \subseteq)$, the filter
  of all the open neighborhoods of any given $x \in X$ is completely
  prime. Taking that as a guide, for any complete lattice
  $\mathbb{L} \equiv (L, \sqsubseteq)$, by a point of $\mathbb{L}$ we
  mean a completely prime filter $F \subseteq L$. We let
  $\pt(\mathbb{L})$ denote the set of points of $\mathbb{L}$ with the
  so-called hull-kernel topology, with open sets
  ${\cal O}_u \defeq \setbarNormal{x \in \pt(\mathbb{L})}{ u \in x}$,
  where $u$ ranges over all the elements of
  $L$~\cite[Proposition~8.1.13]{Goubault-Larrecq:Non_Hausdorff_topology:2013}. The
  map $\pt$ can be extended to a functor
  $\pt: \opCat{\FrmCat} \to \TopCat$ as follows: for any morphism
  $g: \mathbb{L} \to \mathbb{K}$ in $\opCat{\FrmCat}$ ({\ie}, a frame
  homomorphism $g: \mathbb{K} \to \mathbb{L}$) the function
  $\pt(g): \pt(\mathbb{L}) \to \pt(\mathbb{K})$ maps every completely
  prime filter $x$ of $\mathbb{L}$ to $g^{-1}(x)$. It is well-known
  that $\pt$ is right adjoint to $\Omega$:
  \begin{equation*}
    \begin{tikzcd}[column sep = large]
      \opCat{\FrmCat} \arrow[r, "\pt", yshift = 1.1ex] & \TopCat  \arrow[l, "\Omega", "\top"', yshift =
      -1.1ex]
    \end{tikzcd}.
  \end{equation*}
  Of particular interest are the cases where this adjunction restricts
  to an equivalence between sub-categories of $\TopCat$ and
  $\opCat{\FrmCat}$. For a detailed account,
  see~\cite[Section~7]{AbramskyJung94-DT}.
  
  Let us call an arithmetic lattice in which the top element is finite
  ({\ie}, $\top \ll \top$) a fully arithmetic lattice. The above
  adjunction restricts to an equivalence 
  \begin{equation}
    \label{eq:diag:Afal_Spec}
    \begin{tikzcd}[column sep = large]
      \opCat{\AFALCat} \arrow[r, "\pt", yshift = 1.1ex] & \SpecCat \arrow[l, "\Omega", "\top"', yshift =
      -1.1ex]
    \end{tikzcd}
  \end{equation}
  between the opposite of the category $\AFALCat$ of algebraic fully
  arithmetic lattices and frame homomorphisms on the one hand, and the
  category $\SpecCat$ of \emph{spectral} spaces and spectral maps on
  the other~\cite[Theorem 7.2.22.]{AbramskyJung94-DT}. A spectral
  space is a compact, sober, coherent, and \emph{strongly locally
    compact} space. This last property is of relevance to our
  discussion later.  Let $\mathbb{Y} = (Y, \tau_{\mathbb{Y}})$ be a
  topological space. We call a subset $Q \subseteq Y$
  \emph{compact-open} if it is both compact and open. A function
  $f: \mathbb{X} \to \mathbb{Y}$ between two spectral spaces is said
  to be a spectral map if for every compact-open $K \subseteq Y$, the
  inverse image $f^{-1}(K)$ is a compact-open subset of $X$.
  
\begin{definition}
  \label{def:strongly_locally_compact}
We say that $\mathbb{Y}$ is a \emph{strongly locally compact}
space if its topology has a base of compact-open subsets.
\end{definition}

\begin{proposition}
  \label{prop:compact_open_finite}
  Let $\mathbb{Y} = (Y, \tau_{\mathbb{Y}})$ be a topological space. A
  set $Q \subseteq Y$ is compact-open if and only if $Q$ is a finite
  element of the complete lattice $(\tau_{\mathbb{Y}}, \subseteq)$,
  {\ie}, $Q \ll Q$.
\end{proposition}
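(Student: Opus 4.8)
The plan is to unwind both conditions using the description of joins in the open-set lattice, and to observe that on either side of the equivalence $Q$ is tacitly assumed to be open: the phrase ``finite element of $(\tau_{\mathbb{Y}}, \subseteq)$'' already forces $Q \in \tau_{\mathbb{Y}}$, and ``compact-open'' includes openness by definition. So the real content is, for a \emph{fixed open} set $Q$, to show that $Q$ is compact if and only if $Q \ll Q$. Throughout I would use the fact recorded earlier that in $(\tau_{\mathbb{Y}}, \subseteq)$ the join of any family $A$ of open sets is simply $\bigcup A$; consequently the hypothesis $Q \sqsubseteq \join A$ appearing in the definition of $\ll$ reads $Q \subseteq \bigcup A$, and the conclusion $Q \sqsubseteq d$ reads $Q \subseteq U$ for some $U \in A$.

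For the ($\Rightarrow$) direction, assume $Q$ is compact and open, and take an arbitrary directed family $A \dirSubsetEq \tau_{\mathbb{Y}}$ with $Q \subseteq \bigcup A$. Then $A$ is an open cover of the compact set $Q$, so finitely many members $U_1, \ldots, U_n \in A$ already cover $Q$. Since $A$ is directed, it contains an upper bound $U$ of $\set{U_1, \ldots, U_n}$, and then $Q \subseteq U_1 \cup \cdots \cup U_n \subseteq U$ with $U \in A$. This is exactly the requirement for $Q \ll Q$.

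For the ($\Leftarrow$) direction, assume $Q$ is open with $Q \ll Q$, and let $\setbarNormal{V_k}{k \in K}$ be an arbitrary open cover of $Q$. I would form the family
\[
  A \defeq \setbarTall{\bigcup_{k \in F} V_k}{F \subseteq_f K}
\]
of all finite unions of members of the cover. This $A$ consists of open sets, it is directed (the union of two finite unions is again a finite union, so any two members have a common upper bound in $A$), and $\bigcup A = \bigcup_{k \in K} V_k \supseteq Q$. Applying $Q \ll Q$ to this directed family yields a single member $\bigcup_{k \in F} V_k \in A$, with $F$ finite, such that $Q \subseteq \bigcup_{k \in F} V_k$; this is a finite subcover, so $Q$ is compact.

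I do not expect a genuine obstacle here: the argument is the standard dictionary between compactness (finite subcovers) and the way-below relation in a frame of opens. The only point that requires a little care is the translation step in each direction --- replacing an arbitrary open cover by the \emph{directed} family of its finite unions in ($\Leftarrow$), and using directedness to absorb a finite subcover into a single element in ($\Rightarrow$) --- together with keeping track that $\ll$ is only defined between elements of the lattice, which is why openness of $Q$ must be, and is, presupposed on both sides.
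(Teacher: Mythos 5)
Your proof is correct: the paper itself only says ``Straightforward,'' and your argument is precisely the standard translation between finite subcovers and the way-below relation in the frame of opens that the authors intended --- directedness absorbs a finite subcover into a single member in one direction, and closing an arbitrary cover under finite unions produces a directed family in the other. The only micro-gap is the case $Q = \emptyset$ (an empty finite subcover), which is harmless since directed sets are nonempty by definition, so an upper bound still exists in $A$.
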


\begin{proof}
Straightforward.  
\end{proof}

\begin{lemma}
  \label{lemma:sLocCompact_AlgCLat}
  A topological space $\mathbb{Y} = (Y, \tau_{\mathbb{Y}})$ is
  strongly locally compact if and only if
  $(\tau_{\mathbb{Y}}, \subseteq)$ is an algebraic lattice.
\end{lemma}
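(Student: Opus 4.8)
The plan is to move between the topological and order-theoretic formulations entirely through Proposition~\ref{prop:compact_open_finite}, which identifies the compact-open subsets of $\mathbb{Y}$ with the finite elements of $(\tau_{\mathbb{Y}}, \subseteq)$. Since $(\tau_{\mathbb{Y}}, \subseteq)$ is automatically a complete lattice (it is the frame of opens), the only content of ``algebraic'' that is at issue is the existence of a basis consisting of finite elements. The pivotal observation I would establish first is that, for a compact-open set $K$ and an arbitrary open set $O$, one has $K \ll O$ if and only if $K \subseteq O$. The forward implication is the general fact that $a \ll b$ entails $a \sqsubseteq b$ (apply the way-below condition to the directed singleton $\set{O}$). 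For the converse, $K \ll K$ holds by Proposition~\ref{prop:compact_open_finite}, and since the way-below relation is preserved upwards in its second argument, $K \subseteq O$ yields $K \ll O$. I would also record that the compact-open sets are closed under finite unions (a finite union of opens is open, and a finite union of compacts is compact) and that $\emptyset$ is compact-open; this supplies directedness below.

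For the ($\Rightarrow$) direction, assume $\mathbb{Y}$ is strongly locally compact and take as candidate basis the collection $\mathcal{K}$ of \emph{all} compact-open sets, each of which is a finite element. Fixing an open set $O$, the equivalence above gives $\waybelows{O} \cap \mathcal{K} = \setbarNormal{K \in \mathcal{K}}{K \subseteq O}$. This family is directed: it contains $\emptyset$, and if $K_1, K_2 \subseteq O$ are compact-open then so is $K_1 \cup K_2 \subseteq O$. Its union equals $O$, because a base of compact-open sets places every point of $O$ inside some basic compact-open subset of $O$; hence $\join (\waybelows{O} \cap \mathcal{K}) = \bigcup (\waybelows{O} \cap \mathcal{K}) = O$. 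Thus $\mathcal{K}$ is a basis of finite elements, so $(\tau_{\mathbb{Y}}, \subseteq)$ is algebraic.

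For the ($\Leftarrow$) direction, assume $(\tau_{\mathbb{Y}}, \subseteq)$ is algebraic, with a basis $B$ of finite elements; by Proposition~\ref{prop:compact_open_finite} every member of $B$ is compact-open. For any open $O$ we have $O = \join (\waybelows{O} \cap B) = \bigcup (\waybelows{O} \cap B)$, and each $K \in \waybelows{O} \cap B$ satisfies $K \subseteq O$ and is compact-open. Consequently every point of an open set lies in a compact-open subset of that open set, {\ie}, the compact-open sets form a base for $\tau_{\mathbb{Y}}$, which is precisely strong local compactness.

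The argument is essentially bookkeeping once the equivalence $K \ll O \iff K \subseteq O$ (for compact-open $K$) is in hand, so I do not anticipate a serious obstacle. The only point requiring slight care is directedness in the ($\Rightarrow$) direction: it is tempting to use the given base $\mathcal{B}$ of compact-open sets as the lattice basis directly, but $\mathcal{B}$ need not be closed under finite unions, so $\waybelows{O} \cap \mathcal{B}$ may fail to be directed. Passing to the full collection $\mathcal{K}$ of compact-open sets repairs this while keeping every basis element finite.
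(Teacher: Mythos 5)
Your proof is correct and follows essentially the same route as the paper: both directions hinge on Proposition~\ref{prop:compact_open_finite} to identify compact-open sets with finite elements, and then exhibit the compact-open sets as a basis of the frame of opens (respectively, read off a base of compact-opens from an algebraic basis). The only difference is that you are more explicit than the paper about the bookkeeping it leaves implicit --- the equivalence $K \ll O \iff K \subseteq O$ for compact-open $K$, and the directedness of $\waybelows{O} \cap \mathcal{K}$ via closure of compact-opens under finite unions --- which is a welcome tightening rather than a different argument.
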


\begin{proof}
  To prove the $(\Rightarrow)$ direction, we first note that for any
  topological space $\mathbb{Y} = (Y, \tau_{\mathbb{Y}})$, the lattice
  $(\tau_{\mathbb{Y}}, \subseteq)$ is complete. So, it remains to show
  that $(\tau_{\mathbb{Y}}, \subseteq)$ is algebraic. Take any open
  set $O \in \tau_{\mathbb{Y}}$. As $\mathbb{Y}$ is assumed to be
  strongly locally compact, for each $x \in O$, there exists a
  compact-open $Q_x \in \tau_{\mathbb{Y}}$ such that
  $x \in Q_x \subseteq O$, which implies that
  $O = \bigcup_{x \in O} Q_x$. Hence, by
  Proposition~\ref{prop:compact_open_finite}, the set of all
  compact-open subsets of $Y$ forms a basis of finite elements for
  $(\tau_{\mathbb{Y}}, \subseteq)$.

  To prove the $(\Leftarrow)$ direction, take any open set
  $O \in \tau_{\mathbb{Y}}$. As $(\tau_{\mathbb{Y}}, \subseteq)$ is
  assumed to be algebraic, then
  $O = \join \setbarNormal{Q \ll O}{Q \text{ is finite}}$, which
  implies that
  $\forall x \in O: \exists Q_x \ll O: x \in Q_x \subseteq O$ and
  $Q_x$ is a finite element of $(\tau_{\mathbb{Y}}, \subseteq)$. By
  Proposition~\ref{prop:compact_open_finite}, this $Q_x$ must be
  compact-open.
\end{proof}

In this article, we start our construction with bounded distributive
lattices of open sets, which are closely related to algebraic fully
arithmetic lattices. Assuming that $\mathbb{L}_1$ and $\mathbb{L}_2$
are two bounded distributive lattices, a map
$f: \mathbb{L}_1 \to \mathbb{L}_2$ is said to be a \emph{bounded
  lattice homomorphism} if it is a monotone map that preserves all
finite joins and all finite meets. Let $\BDLatCat$ denote the category
of bounded distributive lattices and bounded lattice
homomorphisms. Then, there is an equivalence of categories:
\begin{equation}
  \label{eq:diag:Afal_BDLat}
    \begin{tikzcd}[column sep = large]
      \AFALCat \arrow[r, "{\cal K}", yshift = 1.1ex] & \BDLatCat
      \arrow[l, "\Idl", "\top"', yshift = -1.1ex]
    \end{tikzcd},
  \end{equation}
  in which, for any algebraic fully arithmetic lattice $\mathbb{L}$,
  ${\cal K}(\mathbb{L})$ is the bounded distributive lattice of finite
  elements of $\mathbb{L}$. The functor
  $\Idl : \BDLatCat \to \AFALCat$ is ideal completion.
  
  By composing diagrams~\eqref{eq:diag:Afal_Spec}
  and~\eqref{eq:diag:Afal_BDLat}, we obtain:
    \begin{equation}
    \label{eq:diag:BDLat_Spec}
    \begin{tikzcd}[column sep = large]
      \opCat{\BDLatCat} \arrow[r, "\opCat{\Idl}", yshift = 1.1ex] & \opCat{\AFALCat} \arrow[l, "\opCat{{\cal K}}", "\top"', yshift =
      -1.1ex] \arrow[r, "\pt", yshift = 1.1ex] & \SpecCat \arrow[l, "\Omega", "\top"', yshift =
      -1.1ex]
    \end{tikzcd}.
  \end{equation}
  \noindent
  A detailed account of this equivalence, in the framework of the
  current article, will be presented in
  Section~\ref{sec:Core_Compactification}.

  %%%%%%%%%%%%%%%%%%%%%%%%%%%%%%%%%%%%%%%%%%%%%%%%%%%%%%%%%
  \section{Basic Galois Connection}
\label{sec:Basic_Galois_Connection}

Assume that $\mathbb{X} \equiv (X, \tau_{\mathbb{X}})$ and
$\mathbb{Y} \equiv (Y, \tau_{\mathbb{Y}})$ are two topological spaces.

\begin{definition}[Quasi-embedding, Embedding]
  \label{def:quasi_embedding}
  A continuous map $\iota: \mathbb{X} \to \mathbb{Y}$ is said to be:
\begin{enumerate}[label=(\roman*)]
\item a quasi-embedding of $\mathbb{X}$ into $\mathbb{Y}$ if it is
  \emph{relatively open}, {\ie},
  $\forall U \in \tau_{\mathbb{X}}: \exists V \in \tau_{\mathbb{Y}}: U
  = \iota^{-1}(V)$.

\item an embedding of $\mathbb{X}$ into $\mathbb{Y}$ if it is an
  injective quasi-embedding.
\end{enumerate}
\end{definition}
\noindent
Also, recall that the map $\iota$ is said to be dense if $\iota(X)$ is
a dense subset of $Y$.

Let $\TopCat_0$ denote the category of $T_0$ topological spaces and
continuous maps.  Over $T_0$ spaces, the two notions of
Definition~\ref{def:quasi_embedding} coincide:

\begin{proposition}
  Assume that $\mathbb{X}:\TopCat_0$, $\mathbb{Y}: \TopCat$, and
  $\iota: \mathbb{X} \to \mathbb{Y}$ is relatively open. Then $\iota$
  must be injective. As a consequence, when $\mathbb{X}: \TopCat_0$,
  every quasi-embedding $\iota: \mathbb{X} \to \mathbb{Y}$ is an
  embedding.
\end{proposition}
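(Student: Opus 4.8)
The plan is to argue the injectivity claim by contradiction, using only the $T_0$ separation property of $\mathbb{X}$ together with the relative openness of $\iota$. Suppose, for contradiction, that $\iota$ is not injective, so that there exist distinct points $x_1, x_2 \in X$ with $\iota(x_1) = \iota(x_2)$. Since $\mathbb{X}$ is a $T_0$ space, these two distinct points can be topologically separated: there is an open set $U \in \tau_{\mathbb{X}}$ that contains exactly one of them. Without loss of generality I would assume $x_1 \in U$ and $x_2 \notin U$.

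The key step is then to feed this separating open set $U$ into the relative openness hypothesis. By the definition of relatively open, there exists $V \in \tau_{\mathbb{Y}}$ with $U = \iota^{-1}(V)$. From $x_1 \in U = \iota^{-1}(V)$ we get $\iota(x_1) \in V$, and since by assumption $\iota(x_1) = \iota(x_2)$, we also have $\iota(x_2) \in V$, that is, $x_2 \in \iota^{-1}(V) = U$. This contradicts the choice of $U$ as an open set containing $x_1$ but not $x_2$, and so $\iota$ must be injective.

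For the concluding remark, once injectivity is established the second sentence is essentially a matter of unwinding Definition~\ref{def:quasi_embedding}. A quasi-embedding $\iota: \mathbb{X} \to \mathbb{Y}$ is by definition a continuous relatively open map; when $\mathbb{X}$ is $T_0$, the first part shows it is automatically injective; and an embedding is precisely an injective quasi-embedding, so nothing further needs to be checked.

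I do not anticipate a genuine obstacle here: the argument is a direct two-line deduction from the definitions, and the only point requiring a moment of care is recording the asymmetry of the $T_0$ condition (the separating open set may contain either $x_1$ or $x_2$), which is handled cleanly by the ``without loss of generality'' reduction since the roles of $x_1$ and $x_2$ are symmetric.
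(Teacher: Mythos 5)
Your proposal is correct and follows essentially the same argument as the paper: a $T_0$-separating open set $U$ for the two points combined with relative openness ($U = \iota^{-1}(V)$) forces their images apart. The only difference is that you phrase it as a proof by contradiction while the paper argues directly that $\iota(x_1) \neq \iota(x_2)$, which is a purely cosmetic distinction.
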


\begin{proof}
  Let $x_1 \neq x_2 \in X$. Since $\mathbb{X}$ is assumed to be $T_0$,
  without loss of generality, we assume that there exists an open
  $U \in \tau_{\mathbb{X}}$ such that $x_1 \in U$ and
  $x_2 \not \in U$. As $\iota$ is relatively open, there exists an
  open $V \in \tau_{\mathbb{Y}}$ such that $U = \iota^{-1}(V)$. Hence,
  $\iota(x_1) \in V$ while $\iota(x_2) \not \in V$. Therefore,
  $\iota(x_1) \neq \iota(x_2)$.
\end{proof}

Let us assume that $\mathbb{D}$ is a bc-domain with Scott
topology. If $\iota: \mathbb{X} \to \mathbb{Y}$ is a dense
quasi-embedding, then, as we will see, the two function spaces
$[\mathbb{X} \to \mathbb{D}]$ and $[\mathbb{Y} \to \mathbb{D}]$ are
related via a \emph{Galois connection}:

\begin{definition}[Category $\PoCat$, Galois connection $F \dashv G$]
  We let $\PoCat$ denote the category of posets and monotonic
  maps. A Galois connection in the category $\PoCat$ between two
  posets $\mathbb{C} \equiv (C, \sqsubseteq_{\mathbb{C}})$ and
  $\mathbb{D} \equiv (D, \sqsubseteq_{\mathbb{D}})$ is a pair
  of monotonic maps:
  \begin{equation*}
    \begin{tikzcd}[column sep = large]
      \mathbb{D} \arrow[r, "G", yshift = 1.1ex] & \mathbb{C} \arrow[l,
      "F", "\top"', yshift = -1.1ex]
    \end{tikzcd}
  \end{equation*}
  such that:
  \begin{equation*}
  \forall x \in C: \forall y \in D: F(x)
  \sqsubseteq_{\mathbb{D}} y \iff x \sqsubseteq_{\mathbb{C}} G(y).  
\end{equation*}
In this case, we call $F: \mathbb{C} \to \mathbb{D}$ the left adjoint
and $G: \mathbb{D} \to \mathbb{C}$ the right adjoint, and write
$F \dashv G$.
\end{definition}

Our aim is to show that the two function spaces are related via the
following Galois connection:
  \begin{equation*}
    \begin{tikzcd}[column sep = large]
      [\mathbb{X} \to \mathbb{D}] \arrow[r, "(\cdot)_*", yshift =
      1.1ex] & {[ \mathbb{Y} \to \mathbb{D}]}  \arrow[l, "(\cdot)^*", "\top"', yshift = -1.1ex],
    \end{tikzcd}
  \end{equation*}
  in which:
  \begin{equation}
    \label{eq:g_*_embedding}
    \forall g \in [\mathbb{Y} \to \mathbb{D}]:  \quad g^* \defeq g
    \circ \iota ,
  \end{equation}
  and:
  \begin{equation}
    \label{eq:f_*_embedding}
    \forall f \in [\mathbb{X} \to \mathbb{D}]: \forall  y \in
    Y:\quad  f_*(y) \defeq \join
    \setbarTall{\meet f(\iota^{-1}(U))}{ y \in U \in \tau_{\mathbb{Y}}}.
  \end{equation}

\begin{equation*}
  \begin{tikzcd}[row sep = large, column sep = large]
\mathbb{X} \arrow[r, "\iota"] \arrow[dr, "g*"', color = blue] & \mathbb{Y} \arrow[d,
    "g"] &\\
     & \mathbb{D} 
  \end{tikzcd}
  \begin{tikzcd}[row sep = large, column sep = large]
\mathbb{X} \arrow[r, "\iota"] \arrow[dr, "f"'] & \mathbb{Y} \arrow[d,
    "f_*", color = blue] &\\
    & \mathbb{D}
  \end{tikzcd}    
\end{equation*}
\noindent
The formulation of the map $(\cdot)_*$ in~\eqref{eq:f_*_embedding} is
a variation of a common construction as appears in, {\eg},
\cite[Exercise~II-3.19]{Gierz-ContinuousLattices-2003}. The map $f_*$
is sometimes referred to as the envelope of $f$ (see, {\eg},
\cite{DiGianantonio_Edalat_Gutin-Automatic_Differentiation:2023}).

\begin{proposition}
  The map $(\cdot)^*: [\mathbb{Y} \to \mathbb{D}] \to
  [\mathbb{X} \to \mathbb{D}]$ is well-defined, {\ie}:
  \begin{equation*}
    \forall g \in [\mathbb{Y} \to \mathbb{D}]:  g^* \in [\mathbb{X} \to \mathbb{D}].    
  \end{equation*}
  \end{proposition}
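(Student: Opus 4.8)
The plan is to show that for any $g \in [\mathbb{Y} \to \mathbb{D}]$, the composite $g^* = g \circ \iota$ is continuous with respect to $(\tau_{\mathbb{X}}, \Scott{\mathbb{D}})$. Since $\iota: \mathbb{X} \to \mathbb{Y}$ is continuous by hypothesis (it is a quasi-embedding, hence in particular a continuous map) and $g: \mathbb{Y} \to \mathbb{D}$ is continuous by assumption, this is just the composition of two continuous maps. So at the level of mere continuity the statement is essentially immediate.

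First I would recall that membership in $[\mathbb{X} \to \mathbb{D}]$ requires precisely that the function be $(\tau_{\mathbb{X}}, \Scott{\mathbb{D}})$-continuous, where $\mathbb{D}$ carries the Scott topology $\Scott{\mathbb{D}}$. The relevant fact is the general topological lemma that a composition of continuous maps is continuous: for any $O \in \Scott{\mathbb{D}}$, we have $(g^*)^{-1}(O) = (g \circ \iota)^{-1}(O) = \iota^{-1}(g^{-1}(O))$. Now $g^{-1}(O) \in \tau_{\mathbb{Y}}$ because $g$ is continuous, and then $\iota^{-1}(g^{-1}(O)) \in \tau_{\mathbb{X}}$ because $\iota$ is continuous. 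Hence $(g^*)^{-1}(O)$ is open in $\mathbb{X}$ for every Scott-open $O$, which is exactly what continuity of $g^*$ demands.

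Since $g^*$ is a continuous function from $\mathbb{X}$ to $\mathbb{D}$, it is by definition an element of the function space $[\mathbb{X} \to \mathbb{D}]$, establishing the claim. I do not anticipate any genuine obstacle here: the only thing to be careful about is to invoke the correct notion of continuity (continuity with respect to the Scott topology on $\mathbb{D}$, as specified in the definition of the function space) and to note explicitly that $\iota$ is continuous, which is built into the definition of a quasi-embedding. No appeal to the density of $\iota$, to the $T_0$ separation axiom, or to the bc-domain structure of $\mathbb{D}$ is required for this particular proposition; those hypotheses will only become relevant when establishing the Galois connection itself (i.e.\ the adjunction $(\cdot)_* \dashv (\cdot)^*$ and the well-definedness and monotonicity of $(\cdot)_*$). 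Thus the proof reduces to a one-line observation that $g^* = g \circ \iota$ is a composite of continuous maps.
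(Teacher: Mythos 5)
Your proof is correct and is exactly the paper's argument: the paper's own proof reads simply ``Follows from the fact that $\iota$ is continuous,'' i.e.\ $g^* = g \circ \iota$ is a composite of continuous maps. You merely spell out the preimage computation, and your observation that density, the $T_0$ axiom, and the bc-domain structure are not needed here matches the paper's intent.
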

  
  \begin{proof}
    Follows from the fact that $\iota$ is continuous.
  \end{proof}
  
\begin{proposition}
  For all $f \in [\mathbb{X} \to \mathbb{D}]$ and $y \in Y$, $f_*(y)$
  is well-defined.
\end{proposition}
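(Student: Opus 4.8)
The plan is to verify the two layers of the defining expression separately: first that each inner meet $\meet f(\iota^{-1}(U))$ exists in $\mathbb{D}$, and then that the outer join over all open neighbourhoods $U \ni y$ exists. The structural facts to be exploited are that $\mathbb{D}$ is bounded-complete and pointed, and that $\iota$ is dense.

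For the inner meets, I would first observe that, since $\iota(X)$ is dense in $\mathbb{Y}$ and $U$ is a nonempty open set (it contains $y$), we have $U \cap \iota(X) \neq \emptyset$, so $\iota^{-1}(U) \neq \emptyset$ and hence $f(\iota^{-1}(U))$ is a nonempty subset of $D$. It then suffices to show that every nonempty subset $A \subseteq D$ has a greatest lower bound in a bc-domain. For this I would consider the set $L$ of all lower bounds of $A$: it is nonempty since $\bot \in L$, and it is bounded above by any element of $A$. Bounded-completeness then yields $\join L \in D$, and a short argument shows that $\join L$ is itself a lower bound of $A$ --- each $a \in A$ is an upper bound of $L$ and therefore dominates $\join L$ --- hence its greatest one. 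Thus $\meet f(\iota^{-1}(U)) = \join L$ exists.

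For the outer join, the key idea is that the family $S \defeq \setbarNormal{\meet f(\iota^{-1}(U))}{y \in U \in \tau_{\mathbb{Y}}}$ is directed, so that its join exists by directed-completeness of $\mathbb{D}$ alone, with no boundedness required. To see directedness, given two open neighbourhoods $U_1, U_2$ of $y$, their intersection $U_1 \cap U_2$ is again an open neighbourhood of $y$, and $\iota^{-1}(U_1 \cap U_2) = \iota^{-1}(U_1) \cap \iota^{-1}(U_2) \subseteq \iota^{-1}(U_i)$ for $i \in \set{1,2}$. Since the meet of a smaller set dominates the meet of a larger one, the element $\meet f(\iota^{-1}(U_1 \cap U_2))$ --- which itself lies in $S$ by density --- is an upper bound in $S$ of both $\meet f(\iota^{-1}(U_1))$ and $\meet f(\iota^{-1}(U_2))$. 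Hence $S$ is directed, and $f_*(y) = \join S$ exists because $\mathbb{D}$ is a dcpo.

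The only genuine subtlety, and the step I would flag as the main obstacle, is the existence of the inner meets: a bc-domain is axiomatised through bounded \emph{joins} rather than meets, so one must recover arbitrary nonempty meets from the bottom element together with bounded-completeness, as above. The role of density is precisely to keep the sets $f(\iota^{-1}(U))$ nonempty, since an empty set would force its meet to be a top element, which a bc-domain need not possess. Once directedness of $S$ is noticed, the outer join is immediate.
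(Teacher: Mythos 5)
Your proposal is correct and follows essentially the same route as the paper: density gives non-emptiness of $f(\iota^{-1}(U))$ so the inner meets exist, and directedness of the family via intersections of neighbourhoods gives the outer join. The only difference is that you prove from scratch that a bc-domain has infima of all non-empty subsets, where the paper simply cites this standard fact (Proposition~4.1.2 of Abramsky--Jung); your inlined argument for it is the standard one and is correct.
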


\begin{proof}
  As the quasi-embedding $\iota$ is dense, for all non-empty
  $U \in \tau_{\mathbb{Y}}: \iota^{-1}(U) \neq \emptyset$, which
  implies that $f(\iota^{-1}(U)) \neq \emptyset$. Since $D$ is assumed
  to be a bc-domain, it must have the infima of all non-empty
  subsets~\cite[Proposition~4.1.2]{AbramskyJung94-DT}. Hence,
  $\meet f(\iota^{-1}(U))$ exists.
  
  Next, we must show that the set
  $A \defeq \setbarTall{\meet f(\iota^{-1}(U))}{ y \in
    U \in \tau_{\mathbb{Y}}}$ is directed. Take
  $U_1, U_2 \in \tau_{\mathbb{Y}}$ such that
  $y \in U_1$ and $y \in U_2$. Then,
  $y \in U_1 \cap U_2$, hence
  $U_1 \cap U_2 \neq \emptyset$, and
  $\meet f(\iota^{-1}(U_1 \cap U_2)) \in A$ is an upper bound of
  both $\meet f(\iota^{-1}(U_1))$ and $\meet
  f(\iota^{-1}(U_2))$. Therefore, $A$ is directed and $\join A$
  exists.
\end{proof}

\begin{proposition}
  The map
  $(\cdot)_*: [\mathbb{X} \to \mathbb{D}] \to [\mathbb{Y} \to
  \mathbb{D}]$ is well-defined, {\ie}, $    \forall f \in [\mathbb{X} \to \mathbb{D}]: f_* \in [\mathbb{Y} \to \mathbb{D}]$.
\end{proposition}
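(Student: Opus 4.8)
The plan is to show that $f_*$ is continuous with respect to $\tau_{\mathbb{Y}}$ and the Scott topology $\Scott{\mathbb{D}}$; since the preceding proposition already establishes that $f_*(y)$ is a well-defined element of $D$ for every $y$, continuity is all that remains. Because $\mathbb{D}$ is a continuous dcpo, the collection $\setbarNormal{\wayaboves{d}}{d \in D}$ of sets $\wayaboves{d} = \setbarNormal{x \in D}{d \ll x}$ forms a base for $\Scott{\mathbb{D}}$. As preimages commute with unions, it therefore suffices to prove that $f_*^{-1}(\wayaboves{d})$ is open in $\mathbb{Y}$ for every $d \in D$.

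First I would fix $d \in D$ and an arbitrary point $y_0 \in f_*^{-1}(\wayaboves{d})$, so that $d \ll f_*(y_0)$, and aim to produce an open neighborhood of $y_0$ contained in $f_*^{-1}(\wayaboves{d})$. The key device is the interpolation property of the way-below relation, available because $\mathbb{D}$ is continuous: there exists $d' \in D$ with $d \ll d' \ll f_*(y_0)$.

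Next I would unfold $f_*(y_0) = \join A_{y_0}$, where $A_{y_0} = \setbarTall{\meet f(\iota^{-1}(U))}{y_0 \in U \in \tau_{\mathbb{Y}}}$ is the directed set exhibited in the previous proposition. Since $d' \ll \join A_{y_0}$ and $A_{y_0}$ is directed, the definition of way-below yields an open $U \ni y_0$ with $d' \sqsubseteq \meet f(\iota^{-1}(U))$. I claim this $U$ is the required neighborhood: for any $y \in U$, the set $U$ is itself an open neighborhood of $y$, so $\meet f(\iota^{-1}(U))$ is one of the elements of $A_y$ and hence $\meet f(\iota^{-1}(U)) \sqsubseteq f_*(y)$. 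Combining the estimates gives $d \ll d' \sqsubseteq \meet f(\iota^{-1}(U)) \sqsubseteq f_*(y)$, and since $a \ll b \sqsubseteq c$ implies $a \ll c$, we conclude $d \ll f_*(y)$, i.e. $y \in f_*^{-1}(\wayaboves{d})$. Thus $y_0 \in U \subseteq f_*^{-1}(\wayaboves{d})$, so the preimage is open and $f_*$ is continuous.

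The one step that genuinely requires care — and the main obstacle — is the passage from $\sqsubseteq$ to $\ll$. A naive argument using $d \ll f_*(y_0)$ directly would yield a neighborhood $U$ on which merely $d \sqsubseteq f_*(y)$ holds, which is too weak to place $y$ inside the \emph{Scott}-open set $\wayaboves{d}$. Interpolating an intermediate element $d'$ is precisely what repairs this gap, and it is the only place where continuity of $\mathbb{D}$ (beyond bounded-completeness) is actually used.
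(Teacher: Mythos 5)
Your proof is correct and follows essentially the same route as the paper's: reduce continuity to showing that each preimage $f_*^{-1}(\wayaboves{e})$ is open, extract from the defining directed join an open neighborhood $U \ni y_0$ whose associated meet $\meet f(\iota^{-1}(U))$ lies way above $e$, and observe that this meet is below $f_*(y)$ for every $y \in U$. The only difference is one of explicitness: the paper passes directly from $e \ll f_*(y)$ to the existence of $U_0 \ni y$ with $e \ll \meet f(\iota^{-1}(U_0))$, a step that tacitly uses exactly the interpolation property of the continuous domain $\mathbb{D}$ which you spell out via the intermediate element $d'$.
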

  
  \begin{proof}
    We must prove that for all $f \in [\mathbb{X} \to \mathbb{D}]$,
    $f_*$ is continuous with respect to $\tau_{\mathbb{Y}}$ and the
    Scott topology~$\Scott{\mathbb{D}}$. It suffices to show that, for
    any $e \in D$, the set $f_*^{-1}(\wayaboves e)$ is open. Take any
    $y \in Y$. Then:
    \begin{align*}
      y \in f_*^{-1}(\wayaboves e) \implies & e \ll f_*(y)
      \\
      (\text{by~\eqref{eq:f_*_embedding}}) \implies & \exists
                                                               U_0 \in
                                                               \tau_{\mathbb{Y}}:
                                                               e \ll
                                                               \meet
                                                               f(\iota^{-1}(U_0))
                                                      \text{ and } y
                                                      \in U_0.
    \end{align*}
    Note that $U_0$ is an open neighborhood of
    $y$. We claim that $U_0 \subseteq
    f_*^{-1}(\wayaboves e)$. Take any arbitrary $\hat{y} \in U_0$. Then:
    \begin{equation*}
      e \ll \meet f( \iota^{-1}(U_0)) \sqsubseteq \join
    \setbarTall{\meet f(\iota^{-1}(U))}{ \hat{y} \in U \in \tau_{\mathbb{Y}}} = f_*(\hat{y}).
  \end{equation*}
  Hence, $\hat{y} \in f_*^{-1}(\wayaboves e)$.
\end{proof}

\begin{proposition}
  \label{prop:f_and_f_*}
  For every $f \in [\mathbb{X} \to \mathbb{D}]$, we have $f = f_* \circ
  \iota$. In particular, $f = (f_*)^*$.

\begin{equation*}
  \begin{tikzcd}[row sep = large, column sep = large]
\mathbb{X} \arrow[r, "\iota"] \arrow[dr, "f=(f_*)^*"'] & \mathbb{Y} \arrow[d,
    "f_*"] &\\
     & \mathbb{D} 
  \end{tikzcd}
\end{equation*}  
\end{proposition}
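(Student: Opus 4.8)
The plan is to prove the pointwise identity $f_*(\iota(x)) = f(x)$ for every $x \in X$, since $(f_*)^* = f_* \circ \iota$ by \eqref{eq:g_*_embedding}, so that establishing $f = f_* \circ \iota$ delivers both assertions at once. Unwinding \eqref{eq:f_*_embedding}, this amounts to showing
\begin{equation*}
  f(x) = \join \setbarTall{\meet f(\iota^{-1}(U))}{ \iota(x) \in U \in \tau_{\mathbb{Y}}},
\end{equation*}
which I would split into the two inequalities $\sqsubseteq$ and $\sqsupseteq$.

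The inequality $f_*(\iota(x)) \sqsubseteq f(x)$ is the routine one and needs only set membership. For every open $U \ni \iota(x)$ we have $x \in \iota^{-1}(U)$, so $f(x) \in f(\iota^{-1}(U))$ and hence $\meet f(\iota^{-1}(U)) \sqsubseteq f(x)$. Thus $f(x)$ is an upper bound of the set over which the join is formed, giving $f_*(\iota(x)) \sqsubseteq f(x)$. Note that this direction uses neither continuity of $f$ nor density of $\iota$.

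For the reverse inequality $f(x) \sqsubseteq f_*(\iota(x))$ I would exploit that $\mathbb{D}$ is continuous, so $f(x) = \join \waybelows{f(x)}$, and it then suffices to show $e \sqsubseteq f_*(\iota(x))$ for each $e \ll f(x)$. Fix such an $e$. The set $\wayaboves{e}$ is Scott open and contains $f(x)$, so by Scott continuity of $f$ the preimage $f^{-1}(\wayaboves{e})$ is an open subset of $\mathbb{X}$ containing $x$. Since $\iota$ is relatively open (a quasi-embedding), there is $V \in \tau_{\mathbb{Y}}$ with $\iota^{-1}(V) = f^{-1}(\wayaboves{e})$; as $x \in \iota^{-1}(V)$, the set $V$ is an open neighbourhood of $\iota(x)$, so $\meet f(\iota^{-1}(V))$ is one of the terms in the join defining $f_*(\iota(x))$. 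Finally, $f(\iota^{-1}(V)) = f(f^{-1}(\wayaboves{e})) \subseteq \wayaboves{e}$, so every value in $f(\iota^{-1}(V))$ lies way-above $e$, and in particular above $e$; hence $e$ is a lower bound of $f(\iota^{-1}(V))$ and $e \sqsubseteq \meet f(\iota^{-1}(V)) \sqsubseteq f_*(\iota(x))$. Taking the join over all $e \ll f(x)$ yields $f(x) \sqsubseteq f_*(\iota(x))$.

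The main obstacle is the $\sqsupseteq$ direction: it is precisely here that the three hypotheses must be combined --- continuity of $\mathbb{D}$ to reduce to the way-below approximants $e \ll f(x)$, Scott continuity of $f$ to turn $\wayaboves{e}$ into an open set around $x$, and the relatively-open property of $\iota$ to transport that open set to a genuine neighbourhood $V$ of $\iota(x)$ whose preimage is pinned down \emph{exactly}. The one standard fact I would invoke without proof is that $\wayaboves{e}$ is Scott open in a continuous dcpo.
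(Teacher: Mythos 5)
Your proposal is correct and follows essentially the same route as the paper's own proof: the easy upper-bound argument for $f_*(\iota(x)) \sqsubseteq f(x)$, and for the converse, picking $e \ll f(x)$, using Scott continuity of $f$ and relative openness of $\iota$ to produce $V \in \tau_{\mathbb{Y}}$ with $\iota^{-1}(V) = f^{-1}(\wayaboves{e})$ and $\iota(x) \in V$, whence $e \sqsubseteq \meet f(\iota^{-1}(V)) \sqsubseteq f_*(\iota(x))$. The only (harmless) difference is that you make explicit the appeal to continuity of $\mathbb{D}$ (that $f(x) = \join \waybelows{f(x)}$ and that $\wayaboves{e}$ is Scott open), which the paper leaves implicit.
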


\begin{proof}
  Assume that $x \in X$. For any $U \in \tau_{\mathbb{Y}}$ satisfying
  $\iota(x) \in U$ we have $x \in \iota^{-1}(U)$. As a result,
  $\meet f(\iota^{-1}(U)) \sqsubseteq f(x)$, which implies that
  $f_*(\iota(x)) = \join \setbarTall{\meet f(\iota^{-1}(U))}{ \iota(x)
    \in U \in \tau_{\mathbb{Y}}} \sqsubseteq f(x)$.
  
    Next, we show that $f(x) \sqsubseteq f_*(\iota(x))$. Take an
    arbitrary $e \ll f(x)$. Hence, $f^{-1}(\wayaboves e)$ is an open
    neighborhood of $x$. As $\iota: \mathbb{X} \to \mathbb{Y}$ is
    relatively open, for some $U \in \tau_{\mathbb{Y}}$ we have
    $f^{-1}(\wayaboves e) = \iota^{-1}(U)$. Also, note that
    $\iota(x) \in U$. Thus, we have:
  \begin{align*}
    \left( \forall y \in \iota^{-1}(U): e \ll f( y)\right)
    \implies & e \sqsubseteq \meet f(\iota^{-1}(U))\\
    (\text{by~\eqref{eq:f_*_embedding} and $\iota(x) \in U$})
    \implies & e \sqsubseteq f_*(\iota(x)).
  \end{align*}
  In conclusion, $f(x) \sqsubseteq f_*(\iota(x))$.
\end{proof}

Let us briefly recall the concept of a monotone section-retraction
pair.  Assume that $\mathbb{D}$ and $\mathbb{E}$ are two posets. A
pair of maps $s: \mathbb{D} \to \mathbb{E}$ and
$r: \mathbb{E} \to \mathbb{D}$ is called a monotone section-retraction
pair if $s$ and $r$ are monotone and $r \circ s =
\id_{\mathbb{D}}$. In this case, $\mathbb{D}$ is said to be a monotone
retract of $\mathbb{E}$. It is straightforward to verify that if $s$
and $r$ form a section-retraction pair, then $s$ must be injective and
$r$ must be surjective. For a more detailed account of
section-retraction pairs the reader may refer
to~\cite[Section~3.1.1]{AbramskyJung94-DT}.

\begin{lemma}
  \label{lemma:section_retraction}
  The maps $(\cdot)_*$ and $(\cdot)^*$ form a monotone section
  retraction pair between $[\mathbb{X} \to \mathbb{D}]$ and $[\mathbb{Y} \to \mathbb{D}]$.
\end{lemma}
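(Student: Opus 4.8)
The plan is to verify directly the three defining conditions of a monotone section-retraction pair, taking $(\cdot)_*$ as the section $s: [\mathbb{X} \to \mathbb{D}] \to [\mathbb{Y} \to \mathbb{D}]$ and $(\cdot)^*$ as the retraction $r: [\mathbb{Y} \to \mathbb{D}] \to [\mathbb{X} \to \mathbb{D}]$. Concretely, I must show that both maps are monotone and that $(\cdot)^* \circ (\cdot)_* = \id_{[\mathbb{X} \to \mathbb{D}]}$. The last of these is already in hand: Proposition~\ref{prop:f_and_f_*} establishes $f = (f_*)^*$ for every $f \in [\mathbb{X} \to \mathbb{D}]$, which is exactly the retraction identity, so no further work is needed there.

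For monotonicity of $(\cdot)^*$, I would argue pointwise. Recall from~\eqref{eq:g_*_embedding} that $g^* = g \circ \iota$. If $g_1 \sqsubseteq g_2$ in $[\mathbb{Y} \to \mathbb{D}]$, then $g_1(y) \sqsubseteq g_2(y)$ for all $y \in Y$; evaluating at an arbitrary $x \in X$ gives $g_1^*(x) = g_1(\iota(x)) \sqsubseteq g_2(\iota(x)) = g_2^*(x)$, so $g_1^* \sqsubseteq g_2^*$.

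For monotonicity of $(\cdot)_*$, suppose $f_1 \sqsubseteq f_2$ in $[\mathbb{X} \to \mathbb{D}]$ and fix $y \in Y$. For each open $U \in \tau_{\mathbb{Y}}$ with $y \in U$, every $x \in \iota^{-1}(U)$ satisfies $f_1(x) \sqsubseteq f_2(x)$; hence $\meet f_1(\iota^{-1}(U))$, being a lower bound of $\set{f_1(x) \mid x \in \iota^{-1}(U)}$, is also a lower bound of $\set{f_2(x) \mid x \in \iota^{-1}(U)}$, so that $\meet f_1(\iota^{-1}(U)) \sqsubseteq \meet f_2(\iota^{-1}(U))$. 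Taking the join in~\eqref{eq:f_*_embedding} over all such $U$ preserves this inequality, yielding $f_{1*}(y) \sqsubseteq f_{2*}(y)$, and therefore $f_{1*} \sqsubseteq f_{2*}$.

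I do not anticipate a genuine obstacle here. The substantive content—that $(\cdot)^*$ undoes $(\cdot)_*$—was already carried out in Proposition~\ref{prop:f_and_f_*}, while the two monotonicity checks reduce to the elementary facts that binary (and arbitrary) meets and joins in the bc-domain $\mathbb{D}$ are monotone in their arguments and that the pointwise order on the function spaces is tested coordinatewise. If anything needs care, it is only the routine verification that a lower bound of the smaller family remains a lower bound of the larger one, which is what licenses passing the inequality through the $\meet$ before taking $\join$.
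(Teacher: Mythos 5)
Your proposal is correct and follows essentially the same route as the paper: monotonicity of $(\cdot)^*$ via composition with $\iota$, monotonicity of $(\cdot)_*$ by passing the pointwise inequality through the meets and then the join in~\eqref{eq:f_*_embedding}, and the retraction identity $(f_*)^* = f$ imported directly from Proposition~\ref{prop:f_and_f_*}. The only difference is that you spell out the lower-bound argument behind $\meet f_1(\iota^{-1}(U)) \sqsubseteq \meet f_2(\iota^{-1}(U))$, which the paper states without elaboration.
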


\begin{proof}
  Monotonicity of $(\cdot)^*$ follows from that of composition, {\ie}:
  \begin{equation*}
   \forall g, g' \in [\mathbb{Y} \to \mathbb{D}]: \quad g \sqsubseteq g'
  \implies g \circ \iota \sqsubseteq g' \circ \iota. 
  \end{equation*}
  Monotonicity of $(\cdot)_*$ follows from monotonicity of meet and
  join. To be more precise, take
  $f, f' \in [\mathbb{X} \to \mathbb{D}]$ satisfying
  $f \sqsubseteq f'$, and assume that $y \in Y$. Then,
  $\forall U \in \tau_{\mathbb{Y}}: \meet f(\iota^{-1}(U)) \sqsubseteq
  \meet f'(\iota^{-1}(U))$. Thus,
  $\join \setbarTall{\meet f(\iota^{-1}(U))}{ y \in U \in
    \tau_{\mathbb{Y}}} \sqsubseteq \join \setbarTall{\meet
    f'(\iota^{-1}(U))}{y \in U \in \tau_{\mathbb{Y}}}$, which,
  by~\eqref{eq:f_*_embedding}, implies that
  $f_*(y) \sqsubseteq f'_*(y)$.

By Proposition~\ref{prop:f_and_f_*},
$\forall f \in [\mathbb{X} \to \mathbb{D}]: f = (f_*)^*$. Therefore,
$(\cdot)_*$ and $(\cdot)^*$ form a section retraction pair.
\end{proof}

\begin{theorem}[Galois connection]
  \label{thm:Galois_connection_X_to_D}
  Assume that $\mathbb{X}$ and $\mathbb{Y}$ are two topological
  spaces, $\iota: \mathbb{X} \to \mathbb{Y}$ is a dense
  quasi-embedding, and $\mathbb{D}$ is a bc-domain. Then, the maps
  $(\cdot)^*$ and $(\cdot)_*$ (defined in~\eqref{eq:g_*_embedding} and
  \eqref{eq:f_*_embedding}, respectively) form a Galois connection:
  \begin{equation}
    \label{eq:Galois_Connection_QEmbed}
    \begin{tikzcd}[column sep = large]
      [\mathbb{X} \to \mathbb{D}] \arrow[r, "(\cdot)_*", yshift = 1.1ex] & {[\mathbb{Y} \to \mathbb{D}]}  \arrow[l, "(\cdot)^*", "\top"', yshift = -1.1ex]
    \end{tikzcd}
  \end{equation}
  in the category $\PoCat$, in which, $(\cdot)_*$ is the right adjoint,
  and $(\cdot)^*$ is the left adjoint. Furthermore:
  \begin{enumerate}[label=(\roman*)]

  \item \label{item:epi_mono_X_to_D} The map $(\cdot)^*$ is
    surjective, and $(\cdot)_*$ is injective.
    
  \item \label{item:Galois_id_X_to_D} $(\cdot)^* \circ (\cdot)_* = \id_{[\mathbb{X}
      \to \mathbb{D}]}$, {\ie},
    $\forall f \in [\mathbb{X} \to \mathbb{D}]: (f_*)^* = f$.

  \item \label{item:left_adjoint_Scott_cont_X_to_D} The left adjoint
    $(\cdot)^*$ is Scott continuous.
  \end{enumerate}
\end{theorem}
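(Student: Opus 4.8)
The plan is to reduce the three numbered properties to results already in hand and to isolate the single genuinely new inequality that drives the adjunction. Recall the standard order-theoretic criterion: two monotone maps $F \colon \mathbb{C} \to \mathbb{D}$ and $G \colon \mathbb{D} \to \mathbb{C}$ form a Galois connection $F \dashv G$ precisely when the unit inequality $\id_{\mathbb{C}} \sqsubseteq G \circ F$ and the counit inequality $F \circ G \sqsubseteq \id_{\mathbb{D}}$ both hold. Here $F = (\cdot)^*$ and $G = (\cdot)_*$, so I must verify $g \sqsubseteq (g^*)_*$ for every $g \in [\mathbb{Y} \to \mathbb{D}]$ (unit) and $(f_*)^* \sqsubseteq f$ for every $f \in [\mathbb{X} \to \mathbb{D}]$ (counit). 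The counit side is immediate: Proposition~\ref{prop:f_and_f_*} already gives the stronger equality $(f_*)^* = f$. Monotonicity of both maps was established inside Lemma~\ref{lemma:section_retraction}. Hence everything rests on the unit inequality.

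To prove $g \sqsubseteq (g^*)_*$, I would fix $y \in Y$ and show $g(y) \sqsubseteq (g^*)_*(y)$ by approximating from below. Since $\mathbb{D}$ is continuous, it suffices to show that every $e \ll g(y)$ satisfies $e \sqsubseteq (g^*)_*(y)$. Given such an $e$, continuity of $g$ makes $V \defeq g^{-1}(\wayaboves e)$ an open neighbourhood of $y$, and for every $x \in \iota^{-1}(V)$ we have $e \ll g(\iota(x))$, hence $e \sqsubseteq g(\iota(x))$; thus $e$ is a lower bound of $g^*(\iota^{-1}(V))$. Here density of $\iota$ is exactly what I need: it guarantees $\iota^{-1}(V) \neq \emptyset$, so that $\meet g^*(\iota^{-1}(V))$ is a genuine meet of a non-empty subset of the bc-domain $\mathbb{D}$ and still dominates $e$. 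Since $y \in V$, this meet is one of the terms in the join defining $(g^*)_*(y)$ in~\eqref{eq:f_*_embedding}, whence $e \sqsubseteq \meet g^*(\iota^{-1}(V)) \sqsubseteq (g^*)_*(y)$. Taking the join over all $e \ll g(y)$ yields $g(y) \sqsubseteq (g^*)_*(y)$.

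With the adjunction in place, items~\ref{item:epi_mono_X_to_D} and~\ref{item:Galois_id_X_to_D} require no further work: Lemma~\ref{lemma:section_retraction} already identifies $(\cdot)_*, (\cdot)^*$ as a monotone section-retraction pair, which gives $(\cdot)^* \circ (\cdot)_* = \id$ together with injectivity of the section $(\cdot)_*$ and surjectivity of the retraction $(\cdot)^*$. For item~\ref{item:left_adjoint_Scott_cont_X_to_D} I would avoid the general ``left adjoints preserve joins'' slogan and argue directly, since $(\cdot)^*$ is simply precomposition with $\iota$: for a directed family $\setbarNormal{g_i}{i \in I}$ and any $x \in X$, using that directed suprema in the function spaces are computed pointwise, $(\join_i g_i)^*(x) = (\join_i g_i)(\iota(x)) = \join_i g_i(\iota(x)) = (\join_i g_i^*)(x)$, so $(\cdot)^*$ preserves directed joins and is therefore Scott continuous.

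The only step with real content is the unit inequality, and within it the main obstacle is ensuring the meets $\meet g^*(\iota^{-1}(V))$ behave well: one must use density to keep $\iota^{-1}(V)$ non-empty (so bounded-completeness of $\mathbb{D}$ supplies the meet) and continuity of $g$ together with continuity of $\mathbb{D}$ to pass between the way-below relation and the defining join. Everything else is bookkeeping on top of Proposition~\ref{prop:f_and_f_*} and Lemma~\ref{lemma:section_retraction}.
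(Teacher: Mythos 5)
Your proposal is correct; it proves the same theorem through a slightly different decomposition, so let me compare the two. The paper verifies the defining bi-implication $g^* \sqsubseteq f \iff g \sqsubseteq f_*$ directly: the $(\Leftarrow)$ direction by evaluating at $\iota(x)$ and invoking Proposition~\ref{prop:f_and_f_*}, and the $(\Rightarrow)$ direction by precisely the way-below argument you give, i.e., fixing $e \ll g(y)$, passing to the open set $g^{-1}(\wayaboves{e})$, using density to ensure its $\iota$-preimage is non-empty, and bounding the corresponding meet from below by $e$ so that the join in~\eqref{eq:f_*_embedding} dominates $e$. You instead verify the unit $g \sqsubseteq (g^*)_*$ and the counit $(f_*)^* \sqsubseteq f$, noting that for monotone maps these two inequalities are equivalent to the adjunction. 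Your unit computation is exactly the paper's $(\Rightarrow)$ argument specialized to $f = g^*$, and the counit is immediate from Proposition~\ref{prop:f_and_f_*}; the general bi-implication is then recovered formally from monotonicity. This packaging isolates the single inequality with analytic content and renders everything else order-theoretic bookkeeping. The other genuine difference is item~\ref{item:left_adjoint_Scott_cont_X_to_D}: the paper cites the general fact that left adjoints between dcpos are Scott continuous, whereas you verify directly that precomposition with $\iota$ preserves directed suprema, using that suprema in both function spaces are computed pointwise (and that monotonicity of $(\cdot)^*$ keeps the image family directed). Your argument is more elementary and self-contained---it does not even require the adjunction to be in place---at the price of not exhibiting the general principle; the paper's citation, conversely, yields claim~\ref{item:left_adjoint_Scott_cont_X_to_D} for free the moment the Galois connection is established. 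Both routes invoke density and bounded-completeness of $\mathbb{D}$ at exactly the same points, and both are sound.
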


\begin{proof}
  Claims~\ref{item:epi_mono_X_to_D} and~\ref{item:Galois_id_X_to_D}
  follow from Lemma~\ref{lemma:section_retraction}.  For any
  adjunction between two dcpos, the left adjoint is Scott
  continuous~\cite[Proposition~3.1.14]{AbramskyJung94-DT}. So, given
  that both $[\mathbb{X} \to \mathbb{D}]$ and
  $[\mathbb{Y} \to \mathbb{D}]$ are dcpos,
  claim~\ref{item:left_adjoint_Scott_cont_X_to_D} will also be
  established once we prove that the maps $(\cdot)^*$ and $(\cdot)_*$
  form a Galois connection.
  
  To that end, we must prove that, for any $f \in [\mathbb{X} \to \mathbb{D}]$
  and
  $g \in [\mathbb{Y} \to \mathbb{D}]: g^* \sqsubseteq f \iff
  g \sqsubseteq f_*$. Equivalently:
  \begin{equation*}
    \forall x \in X: g( \iota(x)) \sqsubseteq f(x) \iff \forall
    y \in Y: g( y) \sqsubseteq f_*(y).
  \end{equation*}
  To prove the $(\Leftarrow)$ implication, for any given $x \in X$, by
  assumption, $g( \iota(x)) \sqsubseteq
  f_*(\iota(x))$. This, together with Proposition~\ref{prop:f_and_f_*}
  imply $\forall x \in X: g( \iota(x)) \sqsubseteq f(x)$.
  
  To prove the $(\Rightarrow)$ implication, assume that
  $y \in Y$ is given, and consider an arbitrary
  $e \ll g(y)$. From the assumption, we obtain
  $\forall x \in \iota^{-1}( g^{-1}(\wayaboves e)): g( \iota( x))
  \sqsubseteq f(x)$, which implies that:
  \begin{equation}
    \label{eq:meet_g_iota_f}
    \meet_{x \in \iota^{-1}( g^{-1} (\wayaboves e))} g( \iota(x))
    \sqsubseteq \meet f\left(\iota^{-1} (g^{-1}( \wayaboves e))\right).
  \end{equation}
  On the other hand, for any
  $x \in \iota^{-1} (g^{-1}( \wayaboves e))$, we have
  $e \ll g(\iota(x))$, which, together with~\eqref{eq:meet_g_iota_f},
  implies
  $e \sqsubseteq \meet f\left(\iota^{-1} (g^{-1}( \wayaboves
    e))\right)$. As $y \in g^{-1}(\wayaboves e)$, we obtain
  $e \sqsubseteq \join \setbarTall{\meet f(\iota^{-1}(U))}{
    y \in U \in \tau_{\mathbb{Y}}} = f_*(y)$, where, in
  the last equality, we have used~\eqref{eq:f_*_embedding}. As
  $e \ll g(y)$ was arbitrary, we conclude that $g(y)
  \sqsubseteq f_*(y)$.  
\end{proof}

%%%%%%%%%%%%%%%%%%%%%%%%%%%%%%%%%%%%%%%%%%%%%%%%%%%%%%%%%
\section{Core-Compactification via Spectral Spaces}
\label{sec:Core_Compactification}

In applications, the Galois connection of
Theorem~\ref{thm:Galois_connection_X_to_D} is useful when one of the
spaces has certain desirable properties that the other does not have,
for instance, when $\mathbb{X}$ is not core-compact, but $\mathbb{Y}$
is. Recall from Theorem~\ref{thm:core_compact_bc_domain} that,
whenever $\mathbb{D}$ is a bc-domain and $\mathbb{Y}$ is
core-compact, then $[\mathbb{Y} \to \mathbb{D}]$ is also a
bc-domain. In fact, there is an explicit description of a basis
for $[\mathbb{Y} \to \mathbb{D}]$ consisting of \emph{step functions},
as we will explain below.

Assume that $\mathbb{X} \equiv (X, \tau_{\mathbb{X}})$ is a
topological space, and
$\mathbb{D} \equiv (D, \sqsubseteq )$ is a pointed
directed-complete partial order (pointed dcpo), with bottom
element~$\bot$. Then, for every open set $O \in \tau_{\mathbb{X}}$,
and every element $b \in D$, we define the single-step function
$b \chi_O: X \to D$ as follows:
\begin{equation*}
%  \label{eq:single_step_fun}
  b \chi_O (x) \defeq
  \left\{
    \begin{array}{ll}
      b, &  \quad \text{if } x \in O,\\
      \bot, & \quad \text{if } x \in X \setminus O.
    \end{array}
  \right.
\end{equation*}  
For any set $\setbarTall{b_i \chi_{O_i}}{i \in I}$ of
single-step functions, the supremum $\join_{i \in I} b_i \chi_{O_i}$
exists if and only if $\setbarTall{b_i \chi_{O_i}}{i \in I}$ satisfies
the following consistency condition:
\begin{equation*}
  \forall J \subseteq I: \quad \bigcap_{j \in J} O_j \neq \emptyset
  \implies \exists b_J \in D: \forall j \in J: b_j \sqsubseteq b_J.
\end{equation*}
By a step-function we mean the join of a consistent finite set of
single-step functions.

\begin{lemma}
  \label{lemma:generalized_erker_sup_step_fun}
  Assume that:
  \begin{enumerate}[label=(\arabic*)]

  \item $\mathbb{Y} = (Y, \tau_{\mathbb{Y}})$ is a core-compact space
    and $B_{\mathbb{Y}}$ is a basis for the continuous lattice
    $(\tau_{\mathbb{Y}}, \subseteq)$.

  \item $\mathbb{D}$ is a bc-domain and $D_0 \subseteq D$ is a
    basis for $\mathbb{D}$.
  \end{enumerate}
  Then:
  \begin{equation}
    \label{eq:generalized_erker_sup_step_fun}
  \forall f \in [\mathbb{Y} \to \mathbb{D}]:\quad  f
  = \join \setbarTall{b\chi_{U}}{U \ll f^{-1}(\wayaboves{b}), U \in
    B_{\mathbb{Y}}, b \in D_0 }.
\end{equation}
In particular, $[\mathbb{Y} \to \mathbb{D}]$ is a
    bc-domain with a basis $\mathbb{B}$ of step-functions of the
    form:
  \begin{equation*}
    %\label{eq:basis_B_Core_Compact}
    \mathbb{B} = \setbarTall{\join_{i \in I} b_i \chi_{U_i}}{
      \text{$I$ is finite, $\setbarTall{b_i \chi_{U_i}}{i \in I}$ is
        consistent}, \forall i \in I: U_i \in B_{\mathbb{Y}}, b_i \in D_0}.
  \end{equation*}
  \end{lemma}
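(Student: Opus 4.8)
The plan is to establish the displayed identity \eqref{eq:generalized_erker_sup_step_fun} first and then read off the basis claim. Throughout I use that, since $\mathbb{D}$ is continuous, each set $\wayaboves{b}$ is Scott open, so $f^{-1}(\wayaboves{b})$ is an open subset of $\mathbb{Y}$ for every $b \in D$ and $f \in [\mathbb{Y} \to \mathbb{D}]$. First I would check that every single-step function on the right-hand side lies below $f$: if $U \ll f^{-1}(\wayaboves{b})$ then $U \subseteq f^{-1}(\wayaboves{b})$, so $y \in U$ forces $b \ll f(y)$, hence $b \sqsubseteq f(y)$, while off $U$ the value is $\bot$; thus $b\chi_U \sqsubseteq f$ and $f$ is an upper bound of the family. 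Because the order on $[\mathbb{Y}\to\mathbb{D}]$ is pointwise, for the reverse inequality it suffices to show the pointwise supremum reaches $f(y)$ for each $y$. Fixing $y$ and $c \in D_0$ with $c \ll f(y)$, the open set $f^{-1}(\wayaboves{c})$ contains $y$, and since $(\tau_{\mathbb{Y}}, \subseteq)$ is continuous with basis $B_{\mathbb{Y}}$ it is the directed union of those $U \in B_{\mathbb{Y}}$ with $U \ll f^{-1}(\wayaboves{c})$; so some such $U$ contains $y$, and the corresponding term $c\chi_U$ takes value $c$ at $y$. As $f(y) = \join\{c \in D_0 : c \ll f(y)\}$, the pointwise supremum is $\sqsupseteq f(y)$, and combined with the previous inequality it equals $f$; hence $f$ is the least upper bound of the family, which is \eqref{eq:generalized_erker_sup_step_fun}.

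For the ``in particular'' part, that $[\mathbb{Y} \to \mathbb{D}]$ is a bc-domain is immediate from Theorem~\ref{thm:core_compact_bc_domain}, so only the basis property of $\mathbb{B}$ remains, and this is where I expect the real work. The key lemma I would isolate is a way-below criterion for single-step functions: if $a' \ll a$ in $\mathbb{D}$ and $W' \ll W$ in $(\tau_{\mathbb{Y}}, \subseteq)$ with $W \subseteq f^{-1}(\wayaboves{a})$, then $a'\chi_{W'} \ll f$. To prove it, let $\{g_i\}_{i}$ be directed with $f \sqsubseteq \join_i g_i$ and put $O_i \defeq g_i^{-1}(\wayaboves{a'})$, a directed family of open sets. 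For $y \in W$ we have $a \ll f(y) \sqsubseteq \join_i g_i(y)$, so $a \sqsubseteq g_i(y)$ for some $i$, and then $a' \ll a \sqsubseteq g_i(y)$ yields $a' \ll g_i(y)$, i.e. $y \in O_i$; hence $W \subseteq \bigcup_i O_i$. Since $W' \ll W$, a single $O_i$ already contains $W'$, which says exactly $a'\chi_{W'} \sqsubseteq g_i$. This is the one step that genuinely uses both continuity of $\mathbb{D}$ (to interpolate $a' \ll a$) and core-compactness of $\mathbb{Y}$ (to interpolate $W' \ll W$), and it is the main obstacle, since the weaker condition $U \ll f^{-1}(\wayaboves{b})$ appearing in \eqref{eq:generalized_erker_sup_step_fun} alone does \emph{not} force $b\chi_U \ll f$.

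With this lemma I would finish as follows. For fixed $f$, the set $\mathbb{B}_f \defeq \mathbb{B} \cap \waybelows{f}$ is directed: any two members are bounded by $f$, so their binary join $s_1 \binjoin s_2$ exists, is again a finite consistent join of single-step functions over $B_{\mathbb{Y}}$ and $D_0$ (hence lies in $\mathbb{B}$), and is way-below $f$ as a finite join of elements way-below $f$. It remains to see $\join \mathbb{B}_f = f$. The inequality $\join \mathbb{B}_f \sqsubseteq f$ is clear; for the converse, fix $y$ and $c \in D_0$ with $c \ll f(y)$, interpolate $c \ll a \ll f(y)$ in $\mathbb{D}$, choose $W' \in B_{\mathbb{Y}}$ with $y \in W' \ll f^{-1}(\wayaboves{a})$ exactly as in the first part, and apply the lemma with $a' = c$ and $W = f^{-1}(\wayaboves{a})$ to obtain $c\chi_{W'} \ll f$, which takes value $c$ at $y$ since $y \in W'$. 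Hence $(\join \mathbb{B}_f)(y) \sqsupseteq c$ for every such $c$, and since $f(y) = \join\{c \in D_0 : c \ll f(y)\}$ we get $(\join \mathbb{B}_f)(y) \sqsupseteq f(y)$. Thus $\mathbb{B}_f$ is directed with supremum $f$, so $\mathbb{B}$ is a basis of step-functions, completing the proof.
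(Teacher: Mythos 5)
Your proof is correct, and for the identity \eqref{eq:generalized_erker_sup_step_fun} it is essentially the paper's own argument: both reduce to the pointwise equivalence $b \ll f(y) \iff \exists U \in B_{\mathbb{Y}}: y \in U \ll f^{-1}(\wayaboves{b})$ (using that a basis of the continuous lattice $(\tau_{\mathbb{Y}},\subseteq)$ is a topological base) and then use that $D_0$ is a basis of $\mathbb{D}$. Where you genuinely diverge is the ``in particular'' part: the paper gives no argument there, deferring to a ``straightforward modification'' of \cite[Lemma~1(c)]{Erker_et_al:way_below:1998}, whereas you supply the missing proof in full --- a way-below criterion for single-step functions, directedness of $\mathbb{B}\cap\waybelows{f}$ via finite bounded joins (which indeed stay in $\mathbb{B}$ and stay way-below $f$), and a pointwise computation showing $\join(\mathbb{B}\cap\waybelows{f}) = f$. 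Each of those steps is sound, so your write-up is a legitimate self-contained substitute for the citation.

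However, one assertion you make is false, and it misidentifies where the difficulty lies: the condition $U \ll f^{-1}(\wayaboves{b})$ \emph{does} force $b\chi_U \ll f$, for an arbitrary topological space $\mathbb{Y}$ (no core-compactness needed) and any bc-domain $\mathbb{D}$. The proof is your own argument with the interpolation performed \emph{inside} $\mathbb{D}$ rather than imposed as a hypothesis: given a directed family $\{g_i\}_i$ with $f \sqsubseteq \join_i g_i$, the sets $O_i \defeq g_i^{-1}(\wayaboves{b})$ form a directed family of opens; for $y \in f^{-1}(\wayaboves{b})$ interpolate $b \ll c \ll f(y)$, so that $c \sqsubseteq g_i(y)$ for some $i$ and hence $b \ll g_i(y)$, i.e.\ $y \in O_i$; thus $f^{-1}(\wayaboves{b}) \subseteq \bigcup_i O_i$, and $U \ll f^{-1}(\wayaboves{b})$ places $U$ inside a single $O_i$, giving $b\chi_U \sqsubseteq g_i$. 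This is exactly \cite[Lemma~1]{Erker_et_al:way_below:1998}, which the paper itself invokes later for the $(\Leftarrow)$ direction of \eqref{eq:step_fun_way_below_formulation}. With it, your auxiliary lemma (and the interpolation $c \ll a \ll f(y)$ in your last paragraph) becomes unnecessary: every single-step function appearing in \eqref{eq:generalized_erker_sup_step_fun} is already way-below $f$, their finite joins lie in $\mathbb{B}$ and remain way-below $f$, so $\mathbb{B}\cap\waybelows{f}$ is directed with supremum $f$ directly --- which is presumably the ``straightforward modification'' the paper has in mind.
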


  \begin{proof}
    The proof is a straightforward modification of the proof
    of~\cite[Lemma~1(c)]{Erker_et_al:way_below:1998}.     Note that,
    as $\mathbb{Y}$ is assumed to be core-compact, the lattice
    $(\tau_{\mathbb{Y}}, \subseteq)$ is continuous. Furthermore, any
    basis ({\eg}, $B_{\mathbb{Y}}$) of the continuous lattice
    $(\tau_{\mathbb{Y}}, \subseteq)$ is a topological base of
    $\mathbb{Y}$. With that in mind, we have:
    \begin{equation}
      \label{eq:b_ll_f_x_implications}
      \forall y \in Y, \forall b \in D: \quad b \ll f(y) \iff y \in
      f^{-1}(\wayaboves{b}) \iff \exists U \in B_{\mathbb{Y}}: y \in U \ll f^{-1}(\wayaboves{b}),
    \end{equation}
    Thus:
    \begin{align*}
\forall y \in Y:\quad      \join \setbarTall{b\chi_{U}}{U \ll f^{-1}(\wayaboves{b}), U \in
    B_{\mathbb{Y}}, b \in D_0 }(y) &= \join \setbarTall{b \in
                                     D_0}{\exists U \in
                                     B_{\mathbb{Y}}: y \in U  \ll
                                     f^{-1}(\wayaboves{b}) } \\
(\text{by~\eqref{eq:b_ll_f_x_implications}})      &= \join
                                                    \setbarTall{b \in
                                                    D_0}{b \ll f(y)}\\
      (\text{since $D_0$ is a basis of $D$}) &= f(y).
    \end{align*}
  \end{proof}

  Going back to the Galois connection of
  Theorem~\ref{thm:Galois_connection_X_to_D}, we know that the left
  adjoint is always Scott-continuous. Nonetheless, when $\mathbb{Y}$
  is core-compact and $\mathbb{X}$ is not, the right adjoint cannot be
  Scott-continuous:
  
  \begin{proposition}
    \label{prop:right_adjoint_Scott_X_core_compact}
    Assume that $\mathbb{Y}$ is core-compact and $\mathbb{D}$ is a
    bc-domain. If the right adjoint $(\cdot)_*$ in the Galois
    connection~\eqref{eq:Galois_Connection_QEmbed} is Scott continuous
    and $D$ is not a singleton, then $\mathbb{X}$ must be
    core-compact.
\end{proposition}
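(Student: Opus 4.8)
The plan is to recognise $[\mathbb{X} \to \mathbb{D}]$ as a \emph{Scott-continuous} retract of a continuous domain and then read Theorem~\ref{thm:core_compact_bc_domain} backwards. Since $\mathbb{Y}$ is core-compact and $\mathbb{D}$ is a non-singleton bc-domain, that theorem already tells us that $[\mathbb{Y} \to \mathbb{D}]$ is a bc-domain; in particular it is a continuous dcpo. The whole argument consists of pushing this good structure across the Galois connection onto $[\mathbb{X} \to \mathbb{D}]$, and then applying the theorem a second time---now to $\mathbb{X}$ itself---to conclude core-compactness.

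First I would assemble the retraction. Write $s \defeq (\cdot)_*$ and $r \defeq (\cdot)^*$. By Lemma~\ref{lemma:section_retraction} (equivalently, part~\ref{item:Galois_id_X_to_D} of Theorem~\ref{thm:Galois_connection_X_to_D}) we have $r \circ s = \id_{[\mathbb{X} \to \mathbb{D}]}$, so $s$ is a section and $r$ a retraction exhibiting $[\mathbb{X} \to \mathbb{D}]$ as a retract of $[\mathbb{Y} \to \mathbb{D}]$. The retraction $r$ is Scott continuous by part~\ref{item:left_adjoint_Scott_cont_X_to_D} of Theorem~\ref{thm:Galois_connection_X_to_D}, and the section $s$ is Scott continuous by the standing hypothesis of the proposition. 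This is the sole place where the extra assumption on $(\cdot)_*$ is used, and it is exactly what upgrades the merely monotone pair of Lemma~\ref{lemma:section_retraction} to a Scott-continuous one.

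It then remains to verify that each defining property of a bc-domain survives the retraction. Continuity is the substantive point: a Scott-continuous retract of a continuous dcpo is again continuous (see~\cite[Section~3.1]{AbramskyJung94-DT}), so $[\mathbb{X} \to \mathbb{D}]$ is continuous. Pointedness and bounded-completeness are then routine: $r(\bot)$ is the least element, since $f = r(s(f)) \sqsupseteq r(\bot)$ for every $f$; and if a family $\setbarNormal{f_i}{i \in I}$ is bounded above by $g$, then $\setbarNormal{s(f_i)}{i \in I}$ is bounded above by $s(g)$ in the bc-domain $[\mathbb{Y} \to \mathbb{D}]$, so $\join_{i} s(f_i)$ exists there, and a short monotonicity check shows that $r\!\left(\join_{i} s(f_i)\right)$ is the least upper bound of $\setbarNormal{f_i}{i \in I}$. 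Hence $[\mathbb{X} \to \mathbb{D}]$ is a bc-domain, and since $\mathbb{D}$ is non-singleton, Theorem~\ref{thm:core_compact_bc_domain}---in the direction that infers core-compactness of $\mathbb{X}$ from the function space being a bc-domain---yields the claim. The only real obstacle is the continuity-transfer step, which is precisely why Scott continuity of $(\cdot)_*$ cannot be dropped: without it the pair is only monotone, and monotone retracts need not preserve continuity.
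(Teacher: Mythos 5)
Your proof is correct and follows essentially the same route as the paper's: exhibit $[\mathbb{X} \to \mathbb{D}]$ as a Scott-continuous retract of the continuous domain $[\mathbb{Y} \to \mathbb{D}]$ via Lemma~\ref{lemma:section_retraction} and the hypothesis on $(\cdot)_*$, invoke the fact that continuous retracts of continuous domains are continuous (\cite[Theorem~3.1.4]{AbramskyJung94-DT}), and then apply Theorem~\ref{thm:core_compact_bc_domain} in the reverse direction. Your explicit verification of pointedness and bounded-completeness across the retraction is a reasonable extra step that the paper leaves implicit (these hold pointwise for any $[\mathbb{X} \to \mathbb{D}]$ with $\mathbb{D}$ a bc-domain), but it does not change the substance of the argument.
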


\begin{proof}
  By Lemma~\ref{lemma:section_retraction} and
  Theorem~\ref{thm:Galois_connection_X_to_D}, the pair
  $\left( (\cdot)_*, (\cdot)^* \right)$ forms a monotone
  section-retraction, with a Scott continuous retraction map
  $(\cdot)^*$. When the section $(\cdot)_*$ is also Scott continuous,
  the dcpo $[\mathbb{X} \to \mathbb{D}]$ becomes a continuous
  retract of the continuous domain $[\mathbb{Y} \to
  \mathbb{D}]$. By~\cite[Theorem~3.1.4]{AbramskyJung94-DT}, any continuous
  retract of a continuous domain is also a continuous domain, hence
  $[\mathbb{X} \to \mathbb{D}]$ must be a continuous domain. By
  Theorem~\ref{thm:core_compact_bc_domain}, this means that
  $\mathbb{X}$ must be core-compact.
\end{proof}

We now establish a working definition of what constitutes a
core-compactification:

\begin{definition}[Core-compactification]
  \label{def:core_compactification}
  We say that a core-compact space $\mathbb{X'}$ is a
  \emph{core-compactification} of the topological space $\mathbb{X}$
  if $\mathbb{X}$ can be embedded as a dense sub-space of
  $\mathbb{X}'$.
\end{definition}

Some classical compactification methods yield core-compactifications
of topological spaces. For instance:

\begin{enumerate}[label=(\roman*)]
\item Let $\mathbb{X}$ be the set $\R^n$ with the Euclidean
  topology. Then, the one-point (Alexandroff) compactification $\R^n
  \cup \set{\infty}$ is a core-compactification of $\R^n$.
\item Let $\mathbb{X}$ be any Tychonoff space. Then, the Stone-{\v
    C}ech compactification $\beta \mathbb{X}$ is a core-compactification of
  $\mathbb{X}$~\cite[Chapter 5]{Munkres:Topology:2000}.
\end{enumerate}

The classical compactification methods, however, are not suitable for
our objectives. For instance, the one-point compactification provides
the right result only when applied to locally compact (hence,
core-compact) spaces, whereas our focus here is mainly on
non-core-compact spaces, even though the method that we present is
applicable to all $T_0$ spaces. The Stone-{\v C}ech compactification
leads to a dense embedding when applied to any Tychonoff space, but an
explicit description of the resulting space $\beta \mathbb{X}$ is
lacking even for simple topological spaces $\mathbb{X}$, and as such,
it is not suitable for computational purposes. The compactification
obtained by our method, on the other hand, is designed for
computational purposes.

Assume that $\mathbb{X} \equiv (X, \tau_{\mathbb{X}})$ is a
topological space. We say that $\Omega_0 \subseteq \tau_{\mathbb{X}}$
is a \emph{ring of open subsets of $\mathbb{X}$} if it is closed under
finite unions and finite intersections. In particular, $\Omega_0$ must
contain $\emptyset = \cup \emptyset$ and $X = \cap \emptyset$. We say
that $\Omega_0$ \emph{separates points} if
$\forall x, y \in X: \exists O \in \Omega_0: (x \in O \wedge y \notin
O) \vee (x \notin O \wedge y \in O)$.

\begin{definition}[Viable base]
  Assume that $\mathbb{X} \equiv (X, \tau_{\mathbb{X}})$ is a
  topological space. We say that
  $\Omega_0 \subseteq \tau_{\mathbb{X}}$ is a \emph{viable base} for
  $\mathbb{X}$ if it is a ring of open sets that forms a base for the
  topology $\tau_{\mathbb{X}}$.
\end{definition}

\begin{remark}
  For every topological space
  $\mathbb{X} \equiv (X, \tau_{\mathbb{X}})$, there is always at least
  one viable base, {\ie}, $\Omega_0 = \tau_{\mathbb{X}}$.
\end{remark}

\begin{proposition}
  If $\mathbb{X}$ is a $T_0$ topological space and $\Omega_0$ is a
  viable base, then $\Omega_0$ separates points.
\end{proposition}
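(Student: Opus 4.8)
The plan is to prove the contrapositive, or equivalently to argue directly: given two distinct points $x, y \in X$, I must exhibit an element of $\Omega_0$ that contains exactly one of them. First I would use the hypothesis that $\mathbb{X}$ is $T_0$. By definition, this means there is \emph{some} open set $W \in \tau_{\mathbb{X}}$ separating $x$ and $y$, i.e.\ either $x \in W \wedge y \notin W$ or $x \notin W \wedge y \in W$. Without loss of generality, assume $x \in W$ and $y \notin W$.

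The key step is to transfer this separation from an arbitrary open set $W$ down to a member of the viable base $\Omega_0$. Since $\Omega_0$ is a base for the topology $\tau_{\mathbb{X}}$ and $W$ is an open neighbourhood of $x$, there exists a basic open set $O \in \Omega_0$ with $x \in O \subseteq W$. I would then observe that $O$ inherits the separation property: since $x \in O$, the first conjunct holds, and since $O \subseteq W$ and $y \notin W$, we get $y \notin O$. Hence $x \in O \wedge y \notin O$, which is exactly one of the two disjuncts required by the definition of separating points.

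Putting these together, for arbitrary distinct $x, y \in X$ I have produced $O \in \Omega_0$ with $(x \in O \wedge y \notin O)$, so the disjunction in the definition of ``separates points'' is satisfied. Since $x$ and $y$ were arbitrary, $\Omega_0$ separates points.

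I do not expect any genuine obstacle here; the proof is essentially a direct unfolding of definitions. The only point requiring a modicum of care is the \textbf{symmetry handling}: the $T_0$ condition supplies separation in one of two directions, so I invoke ``without loss of generality'' to fix the direction, noting that the symmetric case is identical with the roles of $x$ and $y$ interchanged. The fact that $\Omega_0$ is a \emph{ring} of open sets (closed under finite unions and intersections) plays no role in this particular statement---only the ``base for $\tau_{\mathbb{X}}$'' half of the viable-base definition is needed---so I would not invoke the ring structure at all.
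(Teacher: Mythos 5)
Your proof is correct and follows essentially the same argument as the paper: use $T_0$ to get a separating open set, then shrink it to a basic open set from $\Omega_0$ containing $x$ but not $y$. Your observation that only the base property (not the ring structure) is needed is also consistent with the paper's proof, which likewise never invokes closure under unions or intersections.
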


\begin{proof}
  Assume that $x \neq y \in X$. As $\tau_{\mathbb{X}}$ is $T_0$, there
  exists an open set $U \in \tau_{\mathbb{X}}$ that separates $x$ and
  $y$. Without loss of generality, let us assume that $x \in U$ and $y
  \notin U$. Since $\Omega_0$ is a base of the topology, there exists
  a $U' \in \Omega_0$ such that $x \in U' \subseteq U$. Hence, $y
  \notin U'$.
\end{proof}

Any ring $\Omega_0$ of open sets is a bounded distributive lattice
with $\meet \defeq \bigcap$ and $\join \defeq \bigcup$. Assume that
$\Omega_0$ is a ring of open sets and let
${\cal L} \defeq \Idl(\Omega_0)$ be the ideal completion of
$(\Omega_0, \subseteq)$. From diagram~\eqref{eq:diag:BDLat_Spec}, we
know that ${\cal L}$ must be an algebraic fully arithmetic lattice.
The principal ideals $\lowerSet{\emptyset} = \set{\emptyset}$ and
$\lowerSet{X} = \Omega_0$ are the bottom and the top elements of
${\cal L}$, respectively. We mention some basic properties of
${\cal L}$.

\begin{proposition}
  \label{prop:ideals_closed_finite_union}
  Every ideal $I \in {\cal L}$ is closed under finite unions.
\end{proposition}

\begin{proof}
  Assume that $x, y \in I$. As $I$ is an ideal, then
  $\exists z \in I: x \subseteq z$ and $y \subseteq z$. Hence,
  $x \cup y \subseteq z$. On the other hand, as $\Omega_0$ is closed
  under finite unions and $I$ is a lower set, we must have
  $x \cup y \in I$.
\end{proof}

\begin{proposition}
  \label{prop:intersection_ideals_open_sets}
  Assume that $I_1, I_2 \in {\cal L}$. Then, $\forall O_1 \in I_1, O_2
  \in I_2: O_1 \cap O_2 \in I_1 \cap I_2$.
\end{proposition}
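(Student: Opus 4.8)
The plan is to exploit the fact that every ideal in ${\cal L} = \Idl(\Omega_0)$ is, by definition of the ideal completion, a lower set of the poset $(\Omega_0, \subseteq)$, together with the trivial observation that $O_1 \cap O_2$ lies below each of $O_1$ and $O_2$ in the subset order. No use of directedness or of the ring structure of $\Omega_0$ is needed; the lower-set axiom alone suffices.

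Concretely, I would first record that $O_1 \cap O_2 \subseteq O_1$. Since $O_1 \in I_1$ and $I_1$ is downward closed, this immediately gives $O_1 \cap O_2 \in I_1$. Symmetrically, from $O_1 \cap O_2 \subseteq O_2$ together with $O_2 \in I_2$ and the fact that $I_2$ is a lower set, one obtains $O_1 \cap O_2 \in I_2$. Combining the two memberships yields $O_1 \cap O_2 \in I_1 \cap I_2$, which is exactly the claim.

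There is essentially no obstacle here, so the ``hard part'' is only a matter of bookkeeping: one must remember that the meet in the bounded distributive lattice $(\Omega_0, \subseteq)$ is ordinary set intersection, so that $O_1 \cap O_2$ is genuinely the greatest lower bound of $O_1$ and $O_2$, and in particular a lower bound of each. It is precisely this that licenses the two downward-closure steps. The statement is therefore a direct consequence of the definition of an ideal, and I would present it in a single short line of reasoning rather than invoking any of the structural results (such as Proposition~\ref{prop:ideals_closed_finite_union}) established for~${\cal L}$.
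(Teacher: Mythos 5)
Your argument is, in substance, the paper's own proof: the paper proves this proposition by citing exactly two facts, namely that $I_1$ and $I_2$ are lower sets and that $\Omega_0$ is closed under finite intersections. However, your opening claim that ``no use of directedness or of the ring structure of $\Omega_0$ is needed; the lower-set axiom alone suffices'' is half right and half wrong, and the wrong half contradicts your own final paragraph. You are right that directedness plays no role. But the ring structure is essential: the ideals $I_1, I_2 \in {\cal L} = \Idl(\Omega_0)$ are subsets of $\Omega_0$, so the desired conclusion $O_1 \cap O_2 \in I_1 \cap I_2$ presupposes $O_1 \cap O_2 \in \Omega_0$. The downward-closure axiom for an ideal of the poset $(\Omega_0, \subseteq)$ quantifies only over elements of $\Omega_0$: it says that if $x \in \Omega_0$, $y \in I_1$ and $x \subseteq y$, then $x \in I_1$. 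If $\Omega_0$ were merely some collection of open sets not closed under binary intersection, then $O_1 \cap O_2$ could lie outside $\Omega_0$, the downward-closure step would not even be applicable, and the statement itself would fail, since $I_1 \cap I_2 \subseteq \Omega_0$. Your ``bookkeeping'' paragraph implicitly concedes the point: the fact that the meet in the bounded distributive lattice $(\Omega_0, \subseteq)$ is ordinary set intersection is precisely the assertion that $\Omega_0$ is closed under finite intersections, i.e., the ring axiom you disclaimed. With that correction, the two downward-closure steps and the intersection of the resulting memberships are exactly the paper's argument, so the proof is correct once the false disclaimer is removed.
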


\begin{proof}
  Follows from the fact that $I_1$ and $I_2$ are lower sets and
  $\Omega_0$ is closed under finite intersections.
\end{proof}

\begin{proposition}
  \label{prop:L_complet_lattice}
  ${\cal L}$ is a complete lattice, in which, for every subset
  $A \subseteq {\cal L}$, we have $\meet A = \bigcap A$.
\end{proposition}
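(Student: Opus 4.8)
The plan is to prove that $\mathcal{L} = \Idl(\Omega_0)$ is a complete lattice in which arbitrary meets are computed as intersections. Since $\mathcal{L}$ is an ideal completion, it is already a dcpo with least element $\lowerSet{\emptyset}$ and greatest element $\lowerSet{X} = \Omega_0$; moreover, by diagram~\eqref{eq:diag:BDLat_Spec} it is an algebraic fully arithmetic lattice, so completeness is essentially inherited. However, rather than merely invoking that abstract fact, I would give a direct verification that exhibits the meet explicitly as $\bigcap A$, because that concrete description is what later proofs will need. The standard route to completeness is to show that every subset has an infimum; existence of all infima in a poset with a top element automatically yields existence of all suprema (via $\join B = \meet$ of the set of upper bounds of $B$), so it suffices to establish the meet formula.

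First I would fix an arbitrary subset $A \subseteq \mathcal{L}$ and set $M \defeq \bigcap A$, where the intersection ranges over the ideals in $A$. The key claim is that $M$ is itself an ideal, {\ie}, $M \in \mathcal{L}$. To see this, I would verify the two defining closure properties of an ideal: that $M$ is a lower set and that $M$ is directed (equivalently, closed under finite joins, using Proposition~\ref{prop:ideals_closed_finite_union}). Both properties pass through intersections trivially: if $O \in M$ and $O' \subseteq O$, then $O' \in I$ for every $I \in A$ since each $I$ is a lower set, so $O' \in M$; and if $O_1, O_2 \in M$, then $O_1, O_2 \in I$ for each $I \in A$, whence $O_1 \cup O_2 \in I$ by Proposition~\ref{prop:ideals_closed_finite_union}, so $O_1 \cup O_2 \in M$. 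One subtlety to address is the degenerate case $A = \emptyset$, where the intersection should be interpreted as the whole lattice, giving $\meet \emptyset = \top = \Omega_0$; I would note this convention or handle it separately.

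Once $M = \bigcap A$ is known to be an ideal, it remains to check that it really is the greatest lower bound of $A$ in the order of $\mathcal{L}$, which here is inclusion of ideals. Lower-boundedness is immediate: $M \subseteq I$ for every $I \in A$ by definition of intersection. For maximality, suppose $N \in \mathcal{L}$ satisfies $N \subseteq I$ for all $I \in A$; then $N \subseteq \bigcap A = M$, so $M$ is indeed the infimum. This completes the verification that $\meet A = \bigcap A$ exists in $\mathcal{L}$ for every $A$, and hence that $\mathcal{L}$ is a complete lattice.

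I do not anticipate a genuine obstacle here: the only mildly delicate points are remembering to treat the empty-intersection convention and making sure the directedness/finite-join closure of the intersection is justified by the earlier proposition rather than re-proved from scratch. The result is essentially a routine ``intersections of ideals are ideals'' argument, and the main value added is recording the explicit formula $\meet A = \bigcap A$ for later use.
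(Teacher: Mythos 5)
Your proposal is correct and follows essentially the same route as the paper's proof: show that $\bigcap A$ is itself an ideal (a lower set closed under finite unions, via Proposition~\ref{prop:ideals_closed_finite_union}), treat the case $A = \emptyset$ separately with $\meet \emptyset = \lowerSet{X} = \Omega_0$, and then verify that the intersection is the greatest lower bound. The only detail the paper checks that you gloss over is non-emptiness of the intersection (it contains $\emptyset$, since every ideal does), which is needed because directedness in this paper is defined to include being non-empty; this is a trivial addition, not a genuine gap.
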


\begin{proof}
  If $A = \emptyset$, then both sides are equal to the top element of
  the lattice, {\ie},
  $\meet A = \bigcap A = \lowerSet{X} = \Omega_0$. Next, assume
  that $A = \setbarNormal{I_j}{j \in J}$ for some non-empty index set
  $J$. We must show that $\hat{I} \defeq \bigcap_{j \in J} I_j$ is
  indeed an ideal:

  \begin{itemize}
  \item $\hat{I}$ is non-empty: This is because
    $\forall j \in J: \emptyset \in I_j$.
  \item $\hat{I}$ is a lower set: This follows immediately from the
    fact that $\forall j \in J: I_j$ is an ideal.

  \item $\hat{I}$ is directed: Let $x, y \in \bigcap_{j \in J}
    I_j$. By Proposition~\ref{prop:ideals_closed_finite_union}, $\forall j \in J: x \cup y \in I_j$, which
    implies that $x \cup y \in \bigcap_{j \in J} I_j$.
  \end{itemize}

  The ideal $\hat{I}$ is a lower bound of
  $\setbarNormal{I_j}{j \in J}$ because
  $\forall j \in J: \hat{I} \subseteq I_j$. Furthermore, if
  $\tilde{I} \in {\cal L}$ is any other lower bound of
  $\setbarNormal{I_j}{j \in J}$, then
  $\forall j \in J: \tilde{I} \subseteq I_j$, which implies that
  $\tilde{I} \subseteq \bigcap_{j \in J} I_j = \hat{I}$. Therefore, $\hat{I}$
  is the infimum of $\setbarNormal{I_j}{j \in J}$.    
\end{proof}

In a complete lattice, suprema can be obtained using infima. To be
more precise, for any subset $A \subseteq {\cal L}$, let
$\check{A} \defeq \setbarNormal{z \in {\cal L}}{\forall x \in A: x
  \subseteq z}$ be the set of all upper bounds of $A$. Then, we have
$\join A = \meet \check{A}$. In the following proposition, however, we
present an explicit description of the suprema of subsets of
${\cal L}$:\footnote{Also, see~\cite[Exercise~9.5.11]{Goubault-Larrecq:Non_Hausdorff_topology:2013}.}

\begin{proposition}
  \label{prop:join_formula}
  In the complete lattice ${\cal L}$, for every non-empty subset
  $A = \setbarNormal{I_j}{j \in J}$, we have:
  \begin{equation}
    \label{eq:lub_explicit}
    \join A = \bigcup \setbarTall{\lowerSet{\cup_{j \in J_0}
        O_j}}{J_0 \subseteq_f J, \forall j \in J_0: O_j \in I_j}. 
  \end{equation}
  In other words, for any finite combination of open sets
  $\setbarNormal{O_j}{j \in J_0}$ taken from the ideals in $A$, we
  form the principal ideal $\lowerSet{\cup_{j \in J_0}
        O_j}$, and finally take the union of all these principal ideals.
\end{proposition}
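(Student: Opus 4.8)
The plan is to show that the right-hand side of~\eqref{eq:lub_explicit}, call it $S$, is precisely the least upper bound of $A = \setbarNormal{I_j}{j \in J}$ in ${\cal L}$. Since $\join A$ is characterized as the least upper bound, I would establish two things: first, that $S$ is an ideal (hence $S \in {\cal L}$) and an upper bound of $A$; second, that $S$ is below any other upper bound. I would \emph{not} route through the infimum formula $\join A = \meet \check A$; a direct verification is cleaner here.

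First I would verify that $S$ is an ideal. It is non-empty, since taking $J_0 = \emptyset$ yields $\lowerSet{\cup_{j \in \emptyset} O_j} = \lowerSet{\emptyset} = \set{\emptyset} \subseteq S$. That $S$ is a lower set is immediate, since it is a union of principal ideals $\lowerSet{(\cdot)}$, each of which is downward closed. The substantive point is directedness: given two elements of $S$, witnessed by finite selections $\setbarNormal{O_j}{j \in J_0}$ and $\setbarNormal{O'_j}{j \in J_1}$ from ideals indexed by $J_0, J_1 \subseteq_f J$, I would form the combined index set $J_0 \cup J_1 \subseteq_f J$ and, for each $j \in J_0 \cup J_1$, choose an open set in $I_j$ dominating both selections where they overlap (using that each $I_j$ is directed, or simply taking unions via Proposition~\ref{prop:ideals_closed_finite_union}). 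The union $\cup_{j \in J_0 \cup J_1}(\cdot)$ over this combined selection then dominates both original unions, giving a common upper bound inside $S$.

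Next I would check that $S$ is an upper bound of every $I_j$: for a fixed $j \in J$ and $O \in I_j$, the singleton selection $J_0 = \set{j}$ with $O_j = O$ shows $\lowerSet{O} \subseteq S$, hence $O \in S$, so $I_j \subseteq S$. Finally, for minimality, suppose $\tilde I \in {\cal L}$ is any upper bound, so $I_j \subseteq \tilde I$ for all $j$. Given a finite selection $\setbarNormal{O_j}{j \in J_0}$ with $O_j \in I_j$, each $O_j \in \tilde I$; by Proposition~\ref{prop:ideals_closed_finite_union} the ideal $\tilde I$ is closed under finite unions, so $\cup_{j \in J_0} O_j \in \tilde I$, and since $\tilde I$ is a lower set, $\lowerSet{\cup_{j \in J_0} O_j} \subseteq \tilde I$. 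Taking the union over all such selections yields $S \subseteq \tilde I$, as required.

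The main obstacle is the directedness check for $S$; the upper-bound and minimality parts are essentially bookkeeping with the lower-set and finite-union closure properties already recorded in Proposition~\ref{prop:ideals_closed_finite_union}. The key to directedness is recognizing that merging two finite index selections into $J_0 \cup J_1$ and closing each coordinate under the relevant finite unions stays within the prescribed form, so the combined element genuinely lands in $S$ rather than merely in some larger ideal.
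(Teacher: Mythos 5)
Your proposal is correct and takes essentially the same route as the paper's own proof: both verify directly that the right-hand side is an ideal (lower set plus directedness via merging the two finite index sets, the paper padding with $\emptyset$ and taking coordinatewise unions where you invoke Proposition~\ref{prop:ideals_closed_finite_union} on the overlap), then show it is an upper bound via singleton selections, and finally show it is below any other upper bound using closure of ideals under finite unions. There is no gap; the directedness step you flag as the main obstacle is handled exactly as in the paper.
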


\begin{proof}
  Let
  $\hat{I} \defeq \bigcup \setbarTall{\lowerSet{\cup_{j \in J_0}
      O_j}}{J_0 \subseteq_f J, \forall j \in J_0: O_j \in
    I_j}$. First, we show that the set $\hat{I}$ is an ideal:
  \begin{itemize}
  \item $\hat{I}$ is a lower set: Take any $x \in \hat{I}$. Then, for
    some finite set $J_0 \subseteq_f J$ and a collection
    $\setbarNormal{O_j}{j \in J_0}$ satisfying
    $\forall j \in J_0: O_j \in I_j$, we have
    $x \in \lowerSet{\cup_{j \in J_0} O_j}$, which implies that
    $x \subseteq \cup_{j \in J_0} O_j$. Hence, any $y \in \Omega_0$
    satisfying $y \subseteq x$ must also satisfy
    $y \subseteq \cup_{j \in J_0} O_j$, {\ie},
    $y \in \lowerSet{\cup_{j \in J_0} O_j} \subseteq \hat{I}$.

  \item $\hat{I}$ is directed: Let $x,y \in \hat{I}$. Then, for some
    finite sets $J_x, J_y \subseteq_f J$ and for some collections
    $\setbarNormal{O_j}{j \in J_x}$ and
    $\setbarNormal{O'_j}{j \in J_y}$ satisfying
    $\forall j \in J_x: O_j \in I_j$ and
    $\forall j \in J_y: O'_j \in I_j$, we have
    $x \subseteq \cup_{j \in J_x}O_j$ and
    $y \subseteq \cup_{j \in J_y}O'_j$. Let $J_0 \defeq J_x \cup J_y$
    and define:
    \begin{equation*}
      \left\{
      \begin{array}{ll}
        \forall j \in J_0 \setminus J_x:& O_j = \emptyset,\\
        \forall j \in J_0 \setminus J_y:& O'_j = \emptyset.
      \end{array}
      \right.
    \end{equation*}
    As such,
    $x \cup y \subseteq (\cup_{j \in J_x}O_j) \cup (\cup_{j \in
      J_y}O'_j) = \cup_{j \in J_0} (O_j \cup O'_j)$, which implies
    that $x \cup y \in \lowerSet{\cup_{j \in J_0} (O_j \cup
      O'_j)}$. By Proposition~\ref{prop:ideals_closed_finite_union},
    $\forall j \in J_0: O_j \cup O'_j \in I_j$. Hence,
    $x \cup y \in \hat{I}$.
\end{itemize}

The ideal $\hat{I}$ is an upper bound of
$\setbarNormal{I_j}{j \in J}$. This is because, for every $j \in J$
and $O \in I_j$, we have $O \in \lowerSet{O} \subseteq \hat{I}$.

Assume that $\tilde{I} \in {\cal L}$ is another upper bound of
$\setbarNormal{I_j}{j \in J}$. As $\tilde{I}$ is a directed lower set, then
for any finite set $J_0 \subseteq_f J$ and collection
$\setbarNormal{O_j}{j \in J_0}$ satisfying
$\forall j \in J_0: O_j \in I_j$, we must have
$\bigcup_{j \in J_0} O_j \in \tilde{I}$, which implies that
$\lowerSet{\bigcup_{j \in J_0} O_j} \subseteq \tilde{I}$. Therefore,
$\hat{I} \subseteq \tilde{I}$.
\end{proof}

Recall that a lattice is said to be spatial
if it is order isomorphic to $( \tau_{\mathbb{M}}, \subseteq)$ for
some topological space $\mathbb{M} \equiv (M, \tau_{\mathbb{M}})$.

\begin{proposition}
  \label{prop:L_spatial}
  The lattice ${\cal L} = \Idl(\Omega_0)$ is spatial.
\end{proposition}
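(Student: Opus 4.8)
The plan is to realise ${\cal L}$ as the open-set lattice of its own space of points. Concretely, I would take $\mathbb{M} \equiv (M, \tau_{\mathbb{M}})$ to be the space $\pt({\cal L})$ of points of ${\cal L}$, equipped with the hull--kernel topology introduced in Section~\ref{sec:prelim}, and then show that ${\cal L} \cong (\tau_{\mathbb{M}}, \subseteq)$.

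The key observation is one that is already recorded immediately after the definition of ${\cal L}$: since $\Omega_0$ is a bounded distributive lattice and ${\cal L} = \Idl(\Omega_0)$ is its ideal completion, diagram~\eqref{eq:diag:BDLat_Spec} tells us that ${\cal L}$ is an \emph{algebraic fully arithmetic lattice}, {\ie}, an object of $\AFALCat$. I would then invoke the equivalence of categories $\opCat{\AFALCat} \simeq \SpecCat$ from diagram~\eqref{eq:diag:Afal_Spec}, witnessed by the functors $\pt$ and $\Omega$. Because this is an equivalence, the canonical comparison map
\begin{equation*}
  \eta_{{\cal L}} \colon {\cal L} \longrightarrow \Omega(\pt({\cal L})), \qquad I \longmapsto {\cal O}_I = \setbarNormal{x \in \pt({\cal L})}{I \in x},
\end{equation*}
is an isomorphism of frames (equivalently, an isomorphism of objects of $\AFALCat$).

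From here the conclusion is immediate: setting $\mathbb{M} \defeq \pt({\cal L})$, the frame $\Omega(\mathbb{M})$ is by definition the lattice $(\tau_{\mathbb{M}}, \subseteq)$ of open subsets of $\mathbb{M}$, so $\eta_{{\cal L}}$ exhibits an order isomorphism ${\cal L} \cong (\tau_{\mathbb{M}}, \subseteq)$. This is exactly the assertion that ${\cal L}$ is spatial.

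The only point requiring care is to be sure that the abstract categorical equivalence yields an honest \emph{order} isomorphism of the underlying lattices, rather than an iso in some weaker sense. This is unproblematic, since the morphisms of $\AFALCat$ are frame homomorphisms, which are monotone maps admitting monotone inverses; an isomorphism in $\AFALCat$ is therefore automatically an order isomorphism. I expect no genuine obstacle: once ${\cal L} \in \AFALCat$ is granted, spatiality is a direct corollary of the Stone-type equivalence already cited. If a self-contained argument is preferred, one could instead verify by hand that $\eta_{{\cal L}}$ is injective (completely prime filters separate the elements of the algebraic arithmetic lattice ${\cal L}$) and surjective (every hull--kernel open set of $\pt({\cal L})$ has the form ${\cal O}_I$ for some $I \in {\cal L}$), but this is subsumed by the equivalence of diagram~\eqref{eq:diag:Afal_Spec}.
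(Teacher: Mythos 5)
Your proof is correct, but it follows a genuinely different (though closely related) route from the paper's. The paper's proof is a one-line appeal to the more general fact that \emph{every continuous distributive lattice is spatial}~\cite[Lemma~7.2.15]{AbramskyJung94-DT}: since ${\cal L}$ is an algebraic fully arithmetic lattice, it is in particular continuous and distributive, and spatiality follows without ever naming a witnessing space. You instead invoke the full equivalence $\opCat{\AFALCat} \simeq \SpecCat$ of diagram~\eqref{eq:diag:Afal_Spec} (\cite[Theorem~7.2.22]{AbramskyJung94-DT}): since ${\cal L}$ is an object of $\AFALCat$, the canonical frame homomorphism ${\cal L} \to \Omega(\pt({\cal L}))$, $I \mapsto {\cal O}_I$, is an isomorphism in $\AFALCat$, hence an order isomorphism, and spatiality follows. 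Both arguments are sound and both lean on Stone-duality results already set up in Section~\ref{sec:prelim}. What your version buys is concreteness: it exhibits the witnessing space explicitly as $\pt({\cal L})$ --- which is precisely the spectral compactification $\spect{\mathbb{X}}_{\Omega_0}$ defined immediately after this proposition --- and it establishes along the way the isomorphism $\Idl(\Omega_0) \cong \Omega(\pt(\Idl(\Omega_0)))$ that the paper only records later, in~\eqref{eq:Omega_Idl_equiv}. What the paper's version buys is economy and generality: it needs only continuity and distributivity of ${\cal L}$, not algebraicity or the fully arithmetic property, and it avoids unwinding what an equivalence of categories says about its unit and counit. Your closing caveat --- that an isomorphism in $\AFALCat$ is automatically an order isomorphism --- is the right point to flag, and your resolution of it is correct; note only that in your suggested ``by hand'' fallback, the injectivity step (completely prime filters separate elements) is exactly where the substance of spatiality lives, so that fallback is not really more elementary than the citation it replaces.
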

\begin{proof}
  Follows from the fact that every continuous distributive lattice is
  spatial~\cite[Lemma~7.2.15]{AbramskyJung94-DT}.
\end{proof}

As such, $\Idl(\Omega_0)$ is (order isomorphic to) the lattice of open
subsets of a topological space $\spect{\mathbb{X}}_{\Omega_0}$ which
we regard as a spectral compactification of $\mathbb{X}$:

\begin{definition}[Spectral compactification: $\spect{\mathbb{X}}_{\Omega_0}$]
  Assume that $\mathbb{X} \equiv (X, \tau_{\mathbb{X}})$ is a $T_0$
  topological space and $\Omega_0 \subseteq \tau_{\mathbb{X}}$ is a
  viable base of $\mathbb{X}$. By the spectral compactification of
  $\mathbb{X}$ generated by $\Omega_0$ we mean the topological space:
  \begin{equation*}
  \spect{\mathbb{X}}_{\Omega_0} \equiv (\spect{X}_{\Omega_0},
  \spect{\tau}) \defeq \pt({\cal L}) = \pt(\Idl(\Omega_0)),
  \end{equation*}
  in which $\spect{\tau}$ is the hull-kernel topology. When $\Omega_0$
  is clear from the context, we use the simpler notation
  $\spect{\mathbb{X}}$.
\end{definition}

Recall that, in the hull-kernel topology $\spect{\tau}$ on
$\spect{\mathbb{X}}_{\Omega_0}$, every open set is of the form:
\begin{equation}
  \label{eq:hull_kernel_O_I}
{\cal O}_I \defeq \setbarNormal{y \in \spect{X}_{\Omega_0}}{I \in y},  
\end{equation}
\noindent
in which $I$ ranges over $\Idl(\Omega_0)$.

\begin{proposition}
  \label{prop:O_I_monotone}
  $\forall I_1, I_2 \in \Idl(\Omega_0): I_1 \subseteq I_2 \implies
  {\cal O}_{I_1} \subseteq {\cal O}_{I_2}$.
\end{proposition}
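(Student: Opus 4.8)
The plan is to unwind the definitions and reduce everything to the fact that a point of $\mathcal{L}$ is in particular an upper set with respect to the lattice order, which here is inclusion of ideals. Recall from~\eqref{eq:hull_kernel_O_I} that $\mathcal{O}_{I} = \setbarNormal{y \in \spect{X}_{\Omega_0}}{I \in y}$, where each $y$ is a point of $\mathcal{L} = \Idl(\Omega_0)$, {\ie}, a completely prime filter $y \subseteq \mathcal{L}$.

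First I would take an arbitrary $y \in \mathcal{O}_{I_1}$ and translate membership into the condition $I_1 \in y$. The key observation is that the order on $\mathcal{L}$ is inclusion of ideals, so the hypothesis $I_1 \subseteq I_2$ is exactly $I_1 \sqsubseteq I_2$ in the lattice $\mathcal{L}$. Since $y$ is a filter, it is by definition an upper set in $\mathcal{L}$; hence from $I_1 \in y$ and $I_1 \sqsubseteq I_2$ we conclude $I_2 \in y$, which is precisely $y \in \mathcal{O}_{I_2}$. As $y$ was arbitrary, this establishes $\mathcal{O}_{I_1} \subseteq \mathcal{O}_{I_2}$.

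I do not anticipate any genuine obstacle here: the statement is essentially the monotonicity half of the assignment $I \mapsto \mathcal{O}_I$, and it follows from the upper-set axiom of filters alone, without needing complete primeness, directedness, or any property specific to $\Idl(\Omega_0)$. The only point that warrants a sentence of care is the bookkeeping that the two occurrences of ``$\subseteq$'' refer to different relations---inclusion of ideals as elements of $\mathcal{L}$ on the one hand, and the lattice order that the filter $y$ respects on the other---and that these coincide because $\mathcal{L}$ is ordered by inclusion.
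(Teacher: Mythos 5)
Your proof is correct and follows exactly the paper's own argument: take $y \in \mathcal{O}_{I_1}$, use that $y$ is a filter (hence an upper set under the ideal-inclusion order) together with $I_1 \subseteq I_2$ to conclude $I_2 \in y$, i.e., $y \in \mathcal{O}_{I_2}$. Your added remark that complete primeness is not needed, and that the two senses of ``$\subseteq$'' coincide here, is accurate bookkeeping but not a departure from the paper.
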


\begin{proof}
  Take any $y \in {\cal O}_{I_1}$. Then,
  by~\eqref{eq:hull_kernel_O_I}, $I_1 \in y$. As $y$ is a filter and
  $I_1 \subseteq I_2$, then $I_2 \in y$, which implies that
  $y \in {\cal O}_{I_2}$. Hence,
  ${\cal O}_{I_1} \subseteq {\cal O}_{I_2}$.
\end{proof}

\begin{proposition}
  \label{prop:O_I_Union}
  Assume that $I \in \Idl(\Omega_0)$ and for a family
  $\setbarNormal{I_j}{j \in J} \subseteq \Idl(\Omega_0)$, we have
  $I = \join_{j \in J} I_j$. Then,
  ${\cal O}_I = \bigcup_{j \in J} {\cal O}_{I_j}$.
\end{proposition}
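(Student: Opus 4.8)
The plan is to prove the two inclusions separately. The inclusion $\bigcup_{j \in J} {\cal O}_{I_j} \subseteq {\cal O}_I$ is the routine direction and follows immediately from monotonicity: since $I = \join_{j \in J} I_j$ is by definition the least upper bound of $\setbarNormal{I_j}{j \in J}$, we have $I_j \subseteq I$ for every $j \in J$, and Proposition~\ref{prop:O_I_monotone} then yields ${\cal O}_{I_j} \subseteq {\cal O}_I$ for each $j$. Taking the union over all $j \in J$ gives the claim.

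The reverse inclusion ${\cal O}_I \subseteq \bigcup_{j \in J} {\cal O}_{I_j}$ is where the real content lies, and it rests on the defining property of the points of ${\cal L}$. First I would take an arbitrary $y \in {\cal O}_I$, so that by~\eqref{eq:hull_kernel_O_I} we have $I \in y$. Recalling that every point $y$ of ${\cal L}$ is a \emph{completely prime} filter, I would apply the complete-primeness condition to the subset $A \defeq \setbarNormal{I_j}{j \in J} \subseteq {\cal L}$. Since $\join A = \join_{j \in J} I_j = I \in y$, complete primeness forces $A \cap y \neq \emptyset$, i.e.\ there exists some $j_0 \in J$ with $I_{j_0} \in y$. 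By~\eqref{eq:hull_kernel_O_I} this means $y \in {\cal O}_{I_{j_0}} \subseteq \bigcup_{j \in J} {\cal O}_{I_j}$, completing the inclusion.

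I expect the reverse inclusion to be the only step requiring genuine insight, though even there the obstacle is conceptual rather than technical: the crux is recognizing that complete primeness is precisely the property that transports the lattice-theoretic join $\join_{j \in J} I_j$ inside a point $y$ into the set-theoretic union of the corresponding basic open sets. One minor point to keep straight is that the join $\join_{j \in J} I_j$ is computed in the ambient complete lattice ${\cal L}$, whose explicit description is given by Proposition~\ref{prop:join_formula}. I do not, however, anticipate needing that explicit formula: it suffices that $\join_{j \in J} I_j$ is a single element of ${\cal L}$ equal to $I$ and hence lying in $y$, after which complete primeness applies verbatim. This keeps the argument short and independent of the concrete shape of suprema in ${\cal L}$.
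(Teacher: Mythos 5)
Your proof is correct and follows essentially the same route as the paper's: the easy inclusion via Proposition~\ref{prop:O_I_monotone} and the upper-bound property of the join, and the reverse inclusion by applying complete primeness of the point $y$ to the family $\setbarNormal{I_j}{j \in J}$. The paper's proof is the same argument written as a chain of implications, so there is nothing to add.
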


\begin{proof}
  The fact that
  ${\cal O}_I \supseteq \bigcup_{j \in J} {\cal O}_{I_j}$ follows from
  Proposition~\ref{prop:O_I_monotone}. To prove the $\subseteq$
  direction, for any $y \in {\cal O}_I$:
  \begin{align*}
    y \in {\cal O}_I &\implies I \in y\\
      (\text{since } I = \join_{j \in J}I_j)               &\implies \join_{j \in J}I_j \in y\\
      (\text{$y$ is completely prime})               &\implies \exists j \in J: I_j \in y\\
(\text{by~\eqref{eq:hull_kernel_O_I}})    &\implies y \in {\cal O}_{I_j}.
  \end{align*}
\end{proof}

\begin{corollary}
  \label{cor:base_for_hull_kernel}
  The set $\setbarNormal{{\cal O}_{\lowerSet{W}}}{W \in \Omega_0}$
  forms a base for the hull-kernel topology $\spect{\tau}$ on
  $\spect{\mathbb{X}}_{\Omega_0}$.
\end{corollary}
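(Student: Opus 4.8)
The plan is to prove that the collection $\setbarNormal{{\cal O}_{\lowerSet{W}}}{W \in \Omega_0}$ is a base by showing that \emph{every} open set of $\spect{\mathbb{X}}_{\Omega_0}$ is a union of members of this collection. Since each ${\cal O}_{\lowerSet{W}}$ is itself open (being of the form ${\cal O}_I$ with $I = \lowerSet{W} \in \Idl(\Omega_0)$), this union property is exactly the condition required for a base.

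First I would recall from \eqref{eq:hull_kernel_O_I} that an arbitrary open set of the hull-kernel topology $\spect{\tau}$ has the form ${\cal O}_I$ for some ideal $I \in \Idl(\Omega_0)$. The key structural observation is then that every such ideal is the directed join of the principal ideals it contains, namely $I = \join_{W \in I} \lowerSet{W}$. To justify this, note that the family $\setbarNormal{\lowerSet{W}}{W \in I}$ is directed (given $W_1, W_2 \in I$, directedness of $I$ yields $W_3 \in I$ with $W_1, W_2 \subseteq W_3$, whence $\lowerSet{W_1}, \lowerSet{W_2} \subseteq \lowerSet{W_3}$); moreover each $\lowerSet{W} \subseteq I$ because $I$ is a lower set, while conversely every $W \in I$ satisfies $W \in \lowerSet{W}$. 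Therefore the set-theoretic union $\bigcup_{W \in I} \lowerSet{W}$ coincides with $I$, and since $I$ is itself an ideal, this union is the supremum $\join_{W \in I} \lowerSet{W}$ in ${\cal L}$.

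With this identity in hand, I would apply Proposition~\ref{prop:O_I_Union} to the directed family $\setbarNormal{\lowerSet{W}}{W \in I}$ and conclude that
\begin{equation*}
  {\cal O}_I = \bigcup_{W \in I} {\cal O}_{\lowerSet{W}}.
\end{equation*}
Since $W \in I \subseteq \Omega_0$ for each index $W$, every ${\cal O}_{\lowerSet{W}}$ appearing on the right lies in the proposed collection. Thus the arbitrary open set ${\cal O}_I$ is exhibited as a union of members of $\setbarNormal{{\cal O}_{\lowerSet{W}}}{W \in \Omega_0}$, which completes the proof. I do not anticipate any serious obstacle here: the only point requiring care is the verification that $I = \join_{W \in I} \lowerSet{W}$, which is the defining feature of ideal completion, and everything else reduces to a direct application of Proposition~\ref{prop:O_I_Union}.
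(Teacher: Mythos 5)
Your proof is correct and takes essentially the same route as the paper's: the paper's one-line proof invokes Proposition~\ref{prop:O_I_Union} together with the fact that the principal ideals $\setbarNormal{\lowerSet{W}}{W \in \Omega_0}$ form a (domain-theoretic) basis of $\Idl(\Omega_0)$, which is precisely the identity $I = \join_{W \in I} \lowerSet{W}$ that you verify explicitly. Your write-up simply unpacks that basis fact before applying Proposition~\ref{prop:O_I_Union}, so there is nothing to correct.
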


\begin{proof}
  Follows from Proposition~\ref{prop:O_I_Union}, and the fact that
  $\setbarNormal{\lowerSet{W}}{W \in \Omega_0}$ forms a
  (domain-theoretic) basis for the lattice $\Idl(\Omega_0)$.
\end{proof}

We know that $\spect{\mathbb{X}}_{\Omega_0}$ is a spectral space,
hence, it is core-compact. In what follows, we show that, whenever
$\Omega_0$ separates points, $\spect{\mathbb{X}}_{\Omega_0}$ is indeed
a core-compactification of $\mathbb{X}$. We consider the map
$\iota: \mathbb{X} \to \spect{\mathbb{X}}_{\Omega_0}$ defined by:

  \begin{equation}
    \label{eq:def_embedding}
    \forall x \in X: \quad \iota(x) \defeq \setbarTall{I
      \in \Idl(\Omega_0)}{x \in \cup I}.
  \end{equation}
  
  \begin{proposition}
    The map $\iota$ defined in~\eqref{eq:def_embedding} is
    well-defined, {\ie}, $\forall x \in X: \iota(x)$ is a completely
    prime filter.
  \end{proposition}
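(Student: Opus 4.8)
The plan is to verify directly that $\iota(x)$ meets the three defining conditions of a filter (non-emptiness, being an upper set, and downward directedness) together with complete primality, working throughout with the concrete description of $\mathcal{L} = \Idl(\Omega_0)$ as ideals of open sets. Since meets in $\mathcal{L}$ are computed as intersections by Proposition~\ref{prop:L_complet_lattice}, downward directedness will reduce to closure of $\iota(x)$ under the binary operation $I_1 \cap I_2$, and the key throughout is to translate membership of an ideal $I$ in $\iota(x)$ into the statement $x \in \cup I$.

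The three filter conditions should be routine. For non-emptiness, the top element $\lowerSet{X} = \Omega_0$ satisfies $x \in X = \cup\,\lowerSet{X}$, so $\top \in \iota(x)$. For the upper-set property, if $I_1 \in \iota(x)$ and $I_1 \subseteq I_2$, then $x \in \cup I_1 \subseteq \cup I_2$, whence $I_2 \in \iota(x)$. For downward directedness, given $I_1, I_2 \in \iota(x)$ I would choose $O_1 \in I_1$ and $O_2 \in I_2$ with $x \in O_1$ and $x \in O_2$; then $x \in O_1 \cap O_2$, and by Proposition~\ref{prop:intersection_ideals_open_sets} we have $O_1 \cap O_2 \in I_1 \cap I_2$, so that $x \in \cup(I_1 \cap I_2)$ and hence $I_1 \cap I_2 \in \iota(x)$.

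The substantive step is complete primality, and this is where the explicit supremum formula of Proposition~\ref{prop:join_formula} does the work. Suppose $A = \setbarNormal{I_j}{j \in J}$ is non-empty with $\join A \in \iota(x)$, that is, $x \in \cup(\join A)$. Then $x$ lies in some open set $W$ that is a member of the ideal $\join A$, and by Proposition~\ref{prop:join_formula} such a $W$ satisfies $W \subseteq \cup_{j \in J_0} O_j$ for some finite $J_0 \subseteq_f J$ and open sets $O_j \in I_j$. Consequently $x \in \cup_{j \in J_0} O_j$, so $x \in O_{j_0}$ for at least one index $j_0 \in J_0$; since $O_{j_0} \in I_{j_0}$, this yields $x \in \cup I_{j_0}$, i.e.\ $I_{j_0} \in A \cap \iota(x)$. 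For the empty family we have $\join \emptyset = \bot = \lowerSet{\emptyset}$ with $\cup\,\lowerSet{\emptyset} = \emptyset \not\ni x$, so $\bot \notin \iota(x)$, which simultaneously discharges the case $A = \emptyset$ and confirms that $\iota(x)$ is proper.

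The only real obstacle I anticipate is the bookkeeping inside the complete-primality argument: one must correctly pass from the hypothesis $x \in \cup(\join A)$ to membership of $x$ in a single open set drawn from one of the ideals $I_{j_0}$, and this passage is exactly the content of the explicit join formula. No further estimates are required, so once Proposition~\ref{prop:join_formula} is invoked the remaining manipulations are elementary.
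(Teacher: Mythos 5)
Your proof is correct and follows essentially the same route as the paper's: verifying the filter axioms directly from the description of $\iota(x)$ via $x \in \bigcup I$ (using Propositions~\ref{prop:intersection_ideals_open_sets} and~\ref{prop:L_complet_lattice} for downward directedness), and then establishing complete primality by dispatching the empty family via $\bot = \lowerSet{\emptyset} \notin \iota(x)$ and applying the explicit join formula of Proposition~\ref{prop:join_formula} to extract a single index $j_0$ with $I_{j_0} \in \iota(x)$. No gaps; the argument matches the paper's proof step for step.
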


  \begin{proof}
    We first prove that, $\forall x \in X: \iota(x)$ is a filter:
    \begin{itemize}
    \item $\iota(x) \neq \emptyset$, because $\Omega_0 \in \iota(x)$.
    \item $\iota(x)$ is an upper set as a consequence of the
      monotonicity of union. To be more precise, assume that
      $I \in \iota(x)$ and $I' \in \Idl(\Omega_0)$ satisfy
      $I \subseteq I'$. Then, $x \in \cup I \subseteq \cup I'$, which
      implies that $x \in \cup I'$. Hence, $I' \in \iota(x)$.

    \item $\iota(x)$ is downward directed: Let
      $I_1, I_2 \in \iota(x)$. Then, for some $O_1 \in I_1$ and
      $O_2 \in I_2$, we must have $x \in O_1$ and $x \in O_2$. Thus,
      $x \in O_1 \cap O_2$. By
      Proposition~\ref{prop:intersection_ideals_open_sets},
      $O_1 \cap O_2 \in I_1 \cap I_2$. By
      Proposition~\ref{prop:L_complet_lattice}, we must have
      $I_1 \wedge I_2 \in \iota(x)$.
    \end{itemize}
    
    Next, we prove that $\iota(x)$ is completely prime. Assume that
    $A \defeq \setbarNormal{I_j}{j \in J} \subseteq \Idl(\Omega_0)$
    and $\join A \in \iota(x)$. Note that $A \neq \emptyset$,
    otherwise we would have:
    \begin{equation*}
    \join A = \join \emptyset = \bot_{\Idl(\Omega_0)} = \lowerSet{\emptyset} \notin \iota(x).      
  \end{equation*}
  Hence, $A$ must be non-empty. By~\eqref{eq:lub_explicit}, for some
  finite set $J_0 \subseteq_f J$ and a collection
  $\setbarNormal{O_j}{j \in J_0 \text{ and } O_j \in I_j}$, we must
  have $x \in \cup_{j \in J_0} O_j$. Thus,
  $\exists j_0 \in J_0: x \in O_{j_0} \subseteq \cup I_{j_0}$, which
  implies that $I_{j_0} \in \iota(x)$.
  \end{proof}

  \begin{proposition}
    \label{prop:iota_cont}
    The map $\iota: \mathbb{X} \to \spect{\mathbb{X}}_{\Omega_0}$ is
    continuous and $\forall I \in \Idl(\Omega_0): \iota^{-1}({\cal
      O}_I) = \cup I$.
  \end{proposition}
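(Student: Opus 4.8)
The plan is to establish the set-theoretic identity $\iota^{-1}(\mathcal{O}_I) = \cup I$ first, since continuity will then follow almost immediately. For the identity, I would simply unfold the three relevant definitions and chain the resulting equivalences. Fix an arbitrary ideal $I \in \Idl(\Omega_0)$ and an arbitrary point $x \in X$. By the definition of preimage, $x \in \iota^{-1}(\mathcal{O}_I)$ holds precisely when $\iota(x) \in \mathcal{O}_I$. Using the description of the hull-kernel open sets in~\eqref{eq:hull_kernel_O_I}, the latter is equivalent to $I \in \iota(x)$. Finally, reading off the definition of $\iota$ in~\eqref{eq:def_embedding}, membership $I \in \iota(x)$ is by construction equivalent to $x \in \cup I$. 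Concatenating these equivalences yields $x \in \iota^{-1}(\mathcal{O}_I) \iff x \in \cup I$, which is exactly the claimed equality.

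For continuity, I would invoke the fact that every open subset of $\spect{\mathbb{X}}_{\Omega_0}$ is of the form $\mathcal{O}_I$ for some $I \in \Idl(\Omega_0)$, as recorded in~\eqref{eq:hull_kernel_O_I}. By the identity just established, the preimage of each such open set is $\cup I$. Since $I$ is a set of open sets drawn from $\Omega_0 \subseteq \tau_{\mathbb{X}}$, its union $\cup I$ is an open subset of $X$. Hence the preimage of every open set is open, so $\iota$ is continuous.

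There is essentially no hard step here: the argument is pure definition-chasing, and the only thing to keep in mind is that the map is already known to be well-defined (so that $\iota(x)$ is genuinely a completely prime filter, i.e., a point of $\spect{\mathbb{X}}_{\Omega_0}$) from the preceding proposition. If anything, the one place demanding a moment's care is confirming that $\cup I$ is open and contained in $X$; this is immediate because ideals of $\Omega_0$ consist of open sets of $\mathbb{X}$ and arbitrary unions of open sets are open.
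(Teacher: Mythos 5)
Your proof is correct and follows essentially the same route as the paper: the paper's proof is precisely the chain of equivalences $x \in \cup I \iff I \in \iota(x) \iff \iota(x) \in {\cal O}_I$, with the openness of $\cup I$ and the fact that the ${\cal O}_I$ exhaust the hull-kernel topology left implicit. You have merely made those implicit steps explicit, which is fine.
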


  \begin{proof}
    For all $I \in \Idl(\Omega_0)$ and $x \in X$, we have:
    $ x \in \cup I \iff I \in \iota(x) \iff \iota(x) \in {\cal O}_I$.
  \end{proof}
  
  Using Proposition~\ref{prop:iota_cont}, for the case of
  $\mathbb{Y} = \spect{\mathbb{X}}_{\Omega_0}$, we obtain the following
  alternative formulation of~\eqref{eq:f_*_embedding}:
  
  \begin{proposition}
    For any $f \in [\mathbb{X} \to \mathbb{D}]$, we have:
    \begin{equation*}
      %\label{eq:f_*_ideal_formulation}
      \forall \hat{x} \in \spect{X}_{\Omega_0}: \quad
 f_*(\hat{x}) = \join
    \setbarTall{\meet f(\cup I)}{ I \in \hat{x}}.
  \end{equation*}
\end{proposition}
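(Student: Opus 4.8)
The plan is to prove the identity by rewriting the defining join in~\eqref{eq:f_*_embedding}, instantiated at $\mathbb{Y} = \spect{\mathbb{X}}_{\Omega_0}$, so that it is indexed by the ideals belonging to $\hat{x}$ rather than by the open neighbourhoods of $\hat{x}$. The whole argument rests on three facts already available: by~\eqref{eq:hull_kernel_O_I}, every open set of $\spect{\tau}$ is of the form ${\cal O}_I$ for some $I \in \Idl(\Omega_0)$; again by~\eqref{eq:hull_kernel_O_I}, we have $\hat{x} \in {\cal O}_I \iff I \in \hat{x}$; and by Proposition~\ref{prop:iota_cont}, $\iota^{-1}({\cal O}_I) = \cup I$.

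First I would use the first two facts to identify the index set: the open neighbourhoods of $\hat{x}$ in $\spect{\tau}$ are exactly the sets ${\cal O}_I$ with $I \in \hat{x}$, so as $U$ ranges over $\setbarNormal{U \in \spect{\tau}}{\hat{x} \in U}$, the ideal $I$ with ${\cal O}_I = U$ ranges over precisely the members of $\hat{x}$. Next I would use the third fact to rewrite each summand: for $U = {\cal O}_I$ we have $\iota^{-1}(U) = \cup I$, hence $\meet f(\iota^{-1}(U)) = \meet f(\cup I)$. Combining these, the two indexed families
\[
  \setbarTall{\meet f(\iota^{-1}(U))}{ \hat{x} \in U \in \spect{\tau}}
  \quad\text{and}\quad
  \setbarTall{\meet f(\cup I)}{ I \in \hat{x}}
\]
have the same underlying set of values, and therefore the same join. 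Since the left-hand join is $f_*(\hat{x})$ by~\eqref{eq:f_*_embedding}, this yields the claimed equality.

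The only point requiring a little care is that the assignment $I \mapsto {\cal O}_I$ need not be injective, so several distinct ideals in $\hat{x}$ may index the same neighbourhood of $\hat{x}$. This is harmless, however, since a join depends only on the \emph{set} of values being joined, and the two families produce exactly the same set of values. I therefore expect no genuine obstacle: the statement is a direct reindexing of~\eqref{eq:f_*_embedding} along the correspondence between open sets and ideals furnished by the hull-kernel topology together with Proposition~\ref{prop:iota_cont}.
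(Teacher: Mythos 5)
Your proof is correct and takes essentially the same route as the paper: both instantiate~\eqref{eq:f_*_embedding} at $\mathbb{Y} = \spect{\mathbb{X}}_{\Omega_0}$, identify the open neighbourhoods of $\hat{x}$ with the sets ${\cal O}_I$ for $I \in \hat{x}$ via~\eqref{eq:hull_kernel_O_I}, and replace $\iota^{-1}({\cal O}_I)$ by $\cup I$ using Proposition~\ref{prop:iota_cont}. The paper merely lists these three facts and leaves the reindexing of the join implicit; your additional remark that possible non-injectivity of $I \mapsto {\cal O}_I$ is harmless simply makes that implicit step explicit.
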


\begin{proof}
  Every open set $\hat{O}$ in the hull-kernel topology $\hat{\tau}$ is
  of the form ${\cal O}_I$ (as in~\eqref{eq:hull_kernel_O_I}) for some
  ideal $I \in \Idl(\Omega_0)$ and we know that
  $\hat{x} \in {\cal O}_I \iff I \in \hat{x}$. Furthermore, by
  Proposition~\ref{prop:iota_cont},
  $\forall I \in \Idl(\Omega_0): \iota^{-1}({\cal O}_I) = \cup I$.
\end{proof}

To show that the map
$\iota: \mathbb{X} \to \spect{\mathbb{X}}_{\Omega_0}$ is a
quasi-embedding, we must use the assumption that $\Omega_0$ is a
base. It is straightforward to verify that, for any open set
$W \in \tau_{\mathbb{X}}$ ({\ie}, not necessarily in $\Omega_0$) we
have
$\lowerSet{W} \defeq \setbarNormal{U \in \Omega_0}{U \subseteq W} \in
\Idl(\Omega_0)$. With that in mind:

\begin{proposition}
  \label{prop:W_tau_X}
  For every open set $W \in \tau_{\mathbb{X}}$ and $x \in X$, we have $x \in W \iff \lowerSet{W} \in \iota(x)$.
\end{proposition}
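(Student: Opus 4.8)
The plan is to reduce the biconditional to the set-theoretic identity $W = \cup \lowerSet{W}$ and then invoke the defining property of a base. First I would unfold the right-hand side using the definition of $\iota$ in~\eqref{eq:def_embedding}: since $\iota(x) = \setbarTall{I \in \Idl(\Omega_0)}{x \in \cup I}$, and since $\lowerSet{W} = \setbarNormal{U \in \Omega_0}{U \subseteq W}$ is a genuine element of $\Idl(\Omega_0)$ (as remarked just before the statement), we have directly that $\lowerSet{W} \in \iota(x)$ holds if and only if $x \in \cup \lowerSet{W}$. Thus the claim $x \in W \iff \lowerSet{W} \in \iota(x)$ is equivalent to $x \in W \iff x \in \cup \lowerSet{W}$, so it suffices to establish $W = \cup \lowerSet{W}$ as subsets of $X$.

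The key step is the identity $W = \cup \lowerSet{W}$. The inclusion $\cup \lowerSet{W} \subseteq W$ is immediate, since by definition every $U \in \lowerSet{W}$ satisfies $U \subseteq W$, whence their union is contained in $W$. For the reverse inclusion $W \subseteq \cup \lowerSet{W}$, I would use the hypothesis that $\Omega_0$ is a \emph{viable base}, hence in particular a base for the topology $\tau_{\mathbb{X}}$: given any $x \in W$ with $W$ open, the base property supplies some $U \in \Omega_0$ with $x \in U \subseteq W$, so that $U \in \lowerSet{W}$ and therefore $x \in \cup \lowerSet{W}$. Combining the two inclusions yields the required equality.

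There is no real obstacle here; the only point requiring care is the explicit use of the base property of $\Omega_0$ rather than merely its being a ring of open sets, since the reverse inclusion is exactly where the denseness of the basic sets around each point of $W$ is needed. Stringing the reductions together gives $x \in W \iff x \in \cup \lowerSet{W} \iff \lowerSet{W} \in \iota(x)$, which is the claim.
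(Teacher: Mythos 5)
Your proof is correct and follows essentially the same route as the paper: both directions reduce, via the definition of $\iota$ in~\eqref{eq:def_embedding}, to the identity $W = \cup \lowerSet{W}$, with the inclusion $\cup \lowerSet{W} \subseteq W$ immediate and the reverse inclusion supplied by the base property of $\Omega_0$, exactly as in the paper's argument. Your packaging of the two implications as a single set-theoretic equality is only a cosmetic difference.
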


\begin{proof}
  The $(\Leftarrow)$ direction is straightforward. For the
  $(\Rightarrow)$ implication, since $\Omega_0$ is a base, we have:
  \begin{equation*}
    x \in W \implies \exists U \in \Omega_0: x \in U \subseteq W
    \implies x \in \bigcup \lowerSet{W} \implies \lowerSet{W} \in \iota(x).
  \end{equation*}
\end{proof}

  \begin{proposition}
    \label{prop:iota_forward_W}
    For every open set $W \in \tau_{\mathbb{X}}$, we have
    $\iota(W) = {\cal O}_{\lowerSet{W}} \cap \iota(X)$.
  \end{proposition}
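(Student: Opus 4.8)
The plan is to prove the set equality by establishing both inclusions, each of which reduces directly to the biconditional recorded in Proposition~\ref{prop:W_tau_X}. The essential observation is that, by the definition~\eqref{eq:hull_kernel_O_I} of hull-kernel open sets, membership $y \in {\cal O}_{\lowerSet{W}}$ is exactly the condition $\lowerSet{W} \in y$, and that for a point of the form $y = \iota(x)$ this condition is equivalent to $x \in W$ by Proposition~\ref{prop:W_tau_X}. So the whole argument is just a matter of unwinding the two definitions and citing that biconditional in the appropriate direction.

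For the inclusion $\iota(W) \subseteq {\cal O}_{\lowerSet{W}} \cap \iota(X)$, I would take an arbitrary element of $\iota(W)$, write it as $\iota(x)$ with $x \in W$, and observe that it lies in $\iota(X)$ trivially. Since $x \in W$, the $(\Rightarrow)$ direction of Proposition~\ref{prop:W_tau_X} gives $\lowerSet{W} \in \iota(x)$, and hence, by~\eqref{eq:hull_kernel_O_I}, $\iota(x) \in {\cal O}_{\lowerSet{W}}$, as required.

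For the reverse inclusion, I would take $y \in {\cal O}_{\lowerSet{W}} \cap \iota(X)$. Membership in $\iota(X)$ yields some $x \in X$ with $y = \iota(x)$, while membership in ${\cal O}_{\lowerSet{W}}$ gives $\lowerSet{W} \in y = \iota(x)$. Applying the $(\Leftarrow)$ direction of Proposition~\ref{prop:W_tau_X} to this $x$, I conclude $x \in W$, so that $y = \iota(x) \in \iota(W)$.

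I do not expect any genuine obstacle; the statement is an immediate corollary of Proposition~\ref{prop:W_tau_X}. The only point deserving care is the role of the intersection with $\iota(X)$: the set ${\cal O}_{\lowerSet{W}}$ lives in the (typically larger) space $\spect{X}_{\Omega_0}$ and may contain completely prime filters that are not of the form $\iota(x)$, so the equality can hold only after restricting to the image of $\iota$. It is also worth noting that one need not invoke injectivity of $\iota$ here: in the reverse inclusion \emph{any} preimage $x$ of a given $y$ automatically satisfies $x \in W$, since $\lowerSet{W} \in \iota(x)$ forces this through Proposition~\ref{prop:W_tau_X}.
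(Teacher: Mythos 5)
Your proof is correct and follows essentially the same route as the paper: the paper's proof is exactly the chain of equivalences $x \in W \iff \lowerSet{W} \in \iota(x) \iff \iota(x) \in {\cal O}_{\lowerSet{W}}$ obtained from Proposition~\ref{prop:W_tau_X} and the definition~\eqref{eq:hull_kernel_O_I}, which you have simply unpacked into the two inclusions. Your closing remarks (on the necessity of intersecting with $\iota(X)$, and on not needing injectivity of $\iota$) are accurate and consistent with the paper, which only establishes injectivity later under the point-separation hypothesis.
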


  \begin{proof}
    By using Proposition~\ref{prop:W_tau_X}, we obtain $\forall x \in X: x \in W \iff \lowerSet{W} \in \iota(x) \iff \iota(x) \in {\cal
      O}_{\lowerSet{W}}$.
  \end{proof}

  \begin{corollary}
    The map $\iota: \mathbb{X} \to \spect{\mathbb{X}}_{\Omega_0}$ is a
    topological quasi-embedding.
  \end{corollary}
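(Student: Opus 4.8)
The plan is to check the two requirements that make up a quasi-embedding according to Definition~\ref{def:quasi_embedding}: that $\iota$ is continuous, and that it is \emph{relatively open}, {\ie}, that every $W \in \tau_{\mathbb{X}}$ can be written as $\iota^{-1}(V)$ for some $V \in \spect{\tau}$. Continuity has already been dispatched in Proposition~\ref{prop:iota_cont}, so essentially all the work reduces to establishing relative openness.

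For relative openness I would fix an arbitrary $W \in \tau_{\mathbb{X}}$ and propose the explicit witness $V \defeq {\cal O}_{\lowerSet{W}}$. This is a legitimate open set of $\spect{\mathbb{X}}_{\Omega_0}$: as noted just before Proposition~\ref{prop:W_tau_X}, the set $\lowerSet{W} = \setbarNormal{U \in \Omega_0}{U \subseteq W}$ is an ideal of $\Omega_0$, and by~\eqref{eq:hull_kernel_O_I} every set of the form ${\cal O}_I$ with $I \in \Idl(\Omega_0)$ is open in the hull-kernel topology. It then only remains to verify the equality $W = \iota^{-1}({\cal O}_{\lowerSet{W}})$.

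This equality is exactly the content of Proposition~\ref{prop:W_tau_X}, read pointwise: for each $x \in X$ one has $x \in W \iff \lowerSet{W} \in \iota(x) \iff \iota(x) \in {\cal O}_{\lowerSet{W}} \iff x \in \iota^{-1}({\cal O}_{\lowerSet{W}})$, where the middle equivalence is just the definition~\eqref{eq:hull_kernel_O_I} of ${\cal O}_{\lowerSet{W}}$. Equivalently, one could combine Proposition~\ref{prop:iota_cont}, which gives $\iota^{-1}({\cal O}_{\lowerSet{W}}) = \cup \lowerSet{W}$, with the observation that $\cup \lowerSet{W} = W$. Since both conditions of Definition~\ref{def:quasi_embedding} now hold, $\iota$ is a quasi-embedding.

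I expect the only non-formal ingredient to be the identity $\cup \lowerSet{W} = W$ (equivalently, the $(\Rightarrow)$ direction of Proposition~\ref{prop:W_tau_X}). This is precisely the point at which the hypothesis that $\Omega_0$ is a \emph{base} for $\tau_{\mathbb{X}}$---rather than merely a ring of open sets---is used: without it, $\cup \lowerSet{W}$ could be a proper subset of $W$, and the inclusion $W \subseteq \iota^{-1}({\cal O}_{\lowerSet{W}})$ would fail. Everything else is bookkeeping with the already-established propositions.
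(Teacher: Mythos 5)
Your proof is correct and takes essentially the same route as the paper: the paper also gets continuity from Proposition~\ref{prop:iota_cont} and relative openness via the witness ${\cal O}_{\lowerSet{W}}$, citing Proposition~\ref{prop:iota_forward_W}, whose own proof is precisely the pointwise equivalence $x \in W \iff \iota(x) \in {\cal O}_{\lowerSet{W}}$ of Proposition~\ref{prop:W_tau_X} that you invoke. Your closing observation that the base hypothesis on $\Omega_0$ is what secures $\cup \lowerSet{W} = W$ is exactly where the paper uses that hypothesis as well (inside the proof of Proposition~\ref{prop:W_tau_X}).
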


  \begin{proof}
    Follows from Propositions~\ref{prop:iota_cont}, and
    \ref{prop:iota_forward_W}.
  \end{proof}

  \begin{lemma}
    \label{lemma:iota_dense}
    The quasi-embedding
    $\iota: \mathbb{X} \to \spect{\mathbb{X}}_{\Omega_0}$ is dense.
  \end{lemma}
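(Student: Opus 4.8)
The plan is to use the standard criterion that $\iota(X)$ is dense in $\spect{\mathbb{X}}_{\Omega_0}$ exactly when it meets every non-empty open subset of the hull-kernel topology. By Corollary~\ref{cor:base_for_hull_kernel}, the family $\setbarNormal{{\cal O}_{\lowerSet{W}}}{W \in \Omega_0}$ is a base for $\spect{\tau}$, so every non-empty open set contains a non-empty basic open set of the form ${\cal O}_{\lowerSet{W}}$. It therefore suffices to show that each non-empty ${\cal O}_{\lowerSet{W}}$ intersects $\iota(X)$.

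First I would invoke Proposition~\ref{prop:iota_forward_W}, which already computes this intersection: for every $W \in \tau_{\mathbb{X}}$, and in particular for $W \in \Omega_0$, we have $\iota(W) = {\cal O}_{\lowerSet{W}} \cap \iota(X)$. Consequently the goal collapses to proving that $\iota(W) \neq \emptyset$ whenever ${\cal O}_{\lowerSet{W}} \neq \emptyset$; and since $\iota(W)$ is simply the image of the set $W$ under a function, this is equivalent to showing that $W \neq \emptyset$.

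To rule out the degenerate case, I would pick a point $y \in {\cal O}_{\lowerSet{W}}$, that is, a completely prime filter $y$ of $\Idl(\Omega_0)$ with $\lowerSet{W} \in y$. The key observation is that no completely prime filter can contain the bottom element $\bot = \lowerSet{\emptyset} = \set{\emptyset}$ of $\Idl(\Omega_0)$: writing $\bot = \join \emptyset$, complete primeness applied to the empty family would demand $\emptyset \cap y \neq \emptyset$, which is impossible. Hence $\lowerSet{W} \neq \set{\emptyset}$, so some non-empty $U \in \Omega_0$ satisfies $U \subseteq W$, forcing $W \neq \emptyset$. Combining this with Proposition~\ref{prop:iota_forward_W} gives ${\cal O}_{\lowerSet{W}} \cap \iota(X) = \iota(W) \neq \emptyset$, which establishes density.

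I expect the main (and essentially only) obstacle to be this last step of excluding $W = \emptyset$; everything else is a direct combination of Corollary~\ref{cor:base_for_hull_kernel} and Proposition~\ref{prop:iota_forward_W}. A stylistic alternative would be to argue the contrapositive directly, observing that $W = \emptyset$ gives $\lowerSet{W} = \set{\emptyset} = \bot$ and hence ${\cal O}_{\bot} = \emptyset$, but both routes rest on the same completely-prime property of the points of $\Idl(\Omega_0)$.
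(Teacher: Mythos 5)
Your proof is correct and follows essentially the same route as the paper: both rest on the computation of the trace of opens on the image of $\iota$, namely ${\cal O}_{\lowerSet{W}} \cap \iota(X) = \iota(W)$ from Proposition~\ref{prop:iota_forward_W} (together with Proposition~\ref{prop:iota_cont}). The only difference is that you spell out the degenerate case --- that a non-empty ${\cal O}_{\lowerSet{W}}$ forces $W \neq \emptyset$, since no completely prime filter can contain $\bot = \lowerSet{\emptyset} = \join \emptyset$ --- a step the paper's one-line proof leaves implicit.
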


  \begin{proof}
    From Propositions~\ref{prop:iota_cont},
    and~\ref{prop:iota_forward_W}, we obtain:
    $\forall I \in \Idl(\Omega_0): {\cal O}_I \cap \iota(X) =
    \iota(\cup I)$.
  \end{proof}
  
  So far, we have not used separation of points:
  \begin{proposition}
    \label{prop:iota_injective}
    Whenever $\Omega_0$ separates points, the map
    $\iota: \mathbb{X} \to \spect{\mathbb{X}}_{\Omega_0}$ is injective.
  \end{proposition}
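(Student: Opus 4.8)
The plan is to exploit the separation-of-points hypothesis directly, producing for any two distinct points a single ideal that belongs to one of the associated filters but not the other. First I would fix $x_1, x_2 \in X$ with $x_1 \neq x_2$ and invoke the assumption that $\Omega_0$ separates points: there is some $O \in \Omega_0$ containing exactly one of them, and without loss of generality we may take $x_1 \in O$ and $x_2 \notin O$.

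Next I would pass to the principal ideal $\lowerSet{O} \in \Idl(\Omega_0)$. The one elementary observation needed is that $\cup \lowerSet{O} = O$: since $O \in \Omega_0$ and $O \subseteq O$ we have $O \in \lowerSet{O}$, whence $O \subseteq \cup \lowerSet{O}$, while every member of $\lowerSet{O}$ is by definition a subset of $O$, giving the reverse inclusion. Alternatively, one may simply cite Proposition~\ref{prop:W_tau_X} with $W = O$, which already yields $x \in O \iff \lowerSet{O} \in \iota(x)$ and bypasses this computation entirely.

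Finally, using the definition~\eqref{eq:def_embedding} of $\iota$ together with $\cup \lowerSet{O} = O$, membership of $\lowerSet{O}$ in $\iota(x)$ is equivalent to $x \in O$. Hence $\lowerSet{O} \in \iota(x_1)$ since $x_1 \in O$, whereas $\lowerSet{O} \notin \iota(x_2)$ since $x_2 \notin O$. The completely prime filters $\iota(x_1)$ and $\iota(x_2)$ therefore differ on the element $\lowerSet{O}$, so $\iota(x_1) \neq \iota(x_2)$, which establishes injectivity. I do not anticipate any genuine obstacle: the argument is a direct transcription of point separation in $\mathbb{X}$ into distinctness of the corresponding filters, and the only subtlety is the harmless identity $\cup \lowerSet{O} = O$, which holds precisely because $O$ itself lies in $\Omega_0$ and hence in its own down-set; once Proposition~\ref{prop:W_tau_X} is in hand the proof is essentially a one-liner.
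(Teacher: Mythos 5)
Your proof is correct and follows essentially the same route as the paper: take the separating open set $O \in \Omega_0$, pass to the principal ideal $\lowerSet{O}$, and observe that it lies in $\iota(x_1)$ but not in $\iota(x_2)$. The only difference is that you spell out the identity $\cup \lowerSet{O} = O$ (equivalently, invoke Proposition~\ref{prop:W_tau_X}), a step the paper's one-line proof leaves implicit.
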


  \begin{proof}
    Let $x,y \in X$, and assume that $x \neq y$. Since $\Omega_0$
    separates points, there exists an open set $W \in \Omega_0$ which
    includes one point but not the other. Without loss of generality,
    we assume that $x \in W$ and $y \notin W$. Then
    $\lowerSet{W} \in \iota(x)$ but $\lowerSet{W} \notin \iota(y)$.
  \end{proof}

  To summarize, we have proven that:
  \begin{theorem}[Core-compactification]
    Assume that $\mathbb{X}$ is a $T_0$ topological space. If
    $\Omega_0$ is a viable base of $\mathbb{X}$, then the spectral
    space $\spect{\mathbb{X}}_{\Omega_0}$ is a core-compactification
    of $\mathbb{X}$.
  \end{theorem}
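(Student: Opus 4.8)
The plan is to unpack Definition~\ref{def:core_compactification} into its two constituent requirements and to discharge each by assembling the results already established. Concretely, I must show (a) that $\spect{\mathbb{X}}_{\Omega_0}$ is core-compact, and (b) that the map $\iota$ of~\eqref{eq:def_embedding} exhibits $\mathbb{X}$ as a \emph{dense} subspace, i.e.\ that $\iota$ is a dense embedding.

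For (a), recall that $\spect{\mathbb{X}}_{\Omega_0} = \pt(\Idl(\Omega_0))$. Since $\Omega_0$ is a ring of open sets it is a bounded distributive lattice, so by the equivalence~\eqref{eq:diag:BDLat_Spec} the lattice $\Idl(\Omega_0)$ is algebraic and fully arithmetic and its point space is spectral; in particular $\spect{\mathbb{X}}_{\Omega_0}$ is strongly locally compact, whence by Lemma~\ref{lemma:sLocCompact_AlgCLat} its open-set lattice is algebraic, hence continuous, so that $\spect{\mathbb{X}}_{\Omega_0}$ is core-compact. Here Proposition~\ref{prop:L_spatial} is what licenses treating $\spect{\mathbb{X}}_{\Omega_0}$ as a genuine space whose open sets are exactly the ${\cal O}_I$ of~\eqref{eq:hull_kernel_O_I}, so that this reasoning applies.

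For (b), I would invoke the properties already proved for $\iota$. Propositions~\ref{prop:iota_cont} and~\ref{prop:iota_forward_W} together give that $\iota$ is continuous and relatively open, i.e.\ a quasi-embedding. Because $\mathbb{X}$ is $T_0$ and $\Omega_0$ is a viable base, $\Omega_0$ separates points (by the earlier proposition to that effect), so Proposition~\ref{prop:iota_injective} makes $\iota$ injective; by part~(ii) of Definition~\ref{def:quasi_embedding} an injective quasi-embedding is an embedding. Density is precisely Lemma~\ref{lemma:iota_dense}. Combining these, $\iota$ is a dense embedding of $\mathbb{X}$ into the core-compact space $\spect{\mathbb{X}}_{\Omega_0}$, which is exactly the assertion of Definition~\ref{def:core_compactification}.

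At this level the theorem is a bookkeeping assembly of the preceding lemmas and propositions; the substantive content was proved earlier. The one step I would single out as the real obstacle — and on which everything else rests — is relative openness of $\iota$ (the quasi-embedding property): this is the only place where the hypothesis that $\Omega_0$ is a \emph{base} rather than merely a ring is used, via Proposition~\ref{prop:W_tau_X}, to realize an arbitrary $W \in \tau_{\mathbb{X}}$ as $\iota^{-1}({\cal O}_{\lowerSet{W}})$.
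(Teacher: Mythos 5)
Your proposal is correct and takes essentially the same approach as the paper, whose proof is precisely this assembly of the preceding results (the theorem is introduced with the words ``To summarize, we have proven that''): spectrality of $\spect{\mathbb{X}}_{\Omega_0}$ gives core-compactness, Propositions~\ref{prop:iota_cont} and~\ref{prop:iota_forward_W} give the quasi-embedding, separation of points gives injectivity via Proposition~\ref{prop:iota_injective}, and Lemma~\ref{lemma:iota_dense} gives density. One small quibble with your closing remark: relative openness is not quite the \emph{only} place the base hypothesis enters, since deriving separation of points from the $T_0$ assumption (needed for injectivity) also uses it, though this does not affect the validity of your argument.
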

  
  \begin{example}[rational upper limit topology]
    \label{example:rational_upper_limit}
    Consider the set
    $B_{(\Q]} \defeq \setbarNormal{(a,b]}{ a, b \in \Q}$ of left
    half-open intervals with rational end-points. The collection
    $B_{(\Q]}$ forms a base for what we refer to as the \emph{rational
      upper limit topology}. Let $\R_{(\Q]} \equiv( \R, \tau_{(\Q]})$
    denote the topological space with $\R$ as the carrier set endowed
    with the rational upper limit topology $\tau_{(\Q]}$. As for a
    viable $\Omega_0$, an immediate option is $\tau_{(\Q]}$. Yet,
    considering Corollary~\ref{cor:base_for_hull_kernel}, it will not
    lead to a second-countable $\spect{\mathbb{X}}_{\Omega_0}$.

    Instead, we can take $\Omega_0$ to consist of all the finite
    unions of elements of $B_{(\Q]}$. This is a countable set which can
    be effectively enumerated. By
    Corollary~\ref{cor:base_for_hull_kernel}, the spectral space
    $\spect{\mathbb{X}}_{\Omega_0}$ must also be second-countable.
  \end{example}
  
  The rational upper limit topology is used
  in~\cite{Edalat_Farjudian_Li:Temporal_Discretization:2023}
  for solution of IVPs with temporal
  discretization. In~\cite{Edalat_Farjudian_Li:Temporal_Discretization:2023},
  the domain $[\mathbb{Y} \to \mathbb{D}]$ is constructed by rounded
  ideal completion of a suitable abstract basis of step functions. In
  this article, we work directly on $\mathbb{X}$. As we will see
  (Theorem~\ref{thm:two_approaches_equiv}) the two approaches lead to
  equivalent outcomes.

  \begin{remark}
    We have presented the construction of the spectral
    compactification by first considering the ideal completion
    $\Idl(\Omega_0)$ of $\Omega_0$, and then applying the $\pt$
    functor. The same result can be obtained by directly considering
    the prime filters of
    $\Omega_0$~\cite[Proposition~7.2.23]{AbramskyJung94-DT}. We
    opted for the two-step construction because, in
    Section~\ref{sec:domain_funs}, we will need to refer to some
    properties of $\Idl(\Omega_0)$.
  \end{remark}

%%%%%%%%%%%%%%%%%%%%%%%%%%%%%%%%%%%%%%%%%%%%%%%%%%%%%%%%%  
\section{Continuous Domain of Functions}
\label{sec:domain_funs}

The aim here is to present a framework for computation over function
spaces. Starting from a topological space $\mathbb{X}$ and a
bc-domain $\mathbb{D}$, if $\mathbb{X}$ is core-compact, then
$[\mathbb{X} \to \mathbb{D}]$ is a bc-domain, and all that remains
is to determine whether $[\mathbb{X} \to \mathbb{D}]$ admits a
suitable effective structure. If it does not, or if $\mathbb{X}$ is
not core-compact to begin with, then our aim is to consider the
substitute bc-domain
$[\spect{\mathbb{X}}_{\Omega_0} \to \mathbb{D}]$, for a viable base
$\Omega_0$ of $\mathbb{X}$.

\begin{theorem}
  Assume that $\mathbb{X} \equiv (X, \tau_{\mathbb{X}})$ is a
  topological space and $\Omega_0 \subseteq \tau_{\mathbb{X}}$ is a
  viable base of $\mathbb{X}$. Let
  $\mathbb{D} \equiv (D, \sqsubseteq)$ be a bc-domain and assume
  that $D_0 \subseteq D$ is a basis for $\mathbb{D}$. Then,
  $[\spect{\mathbb{X}}_{\Omega_0} \to \mathbb{D}]$ is a bc-domain
  with a basis $\hat{\mathbb{B}}$ of step-functions of the form:
  \begin{equation}
    \label{eq:basis_B}
    \hat{\mathbb{B}} = \setbarTall{\join_{i \in I} b_i \chi_{{\cal
          O}_{\lowerSet{W_i}}}}{ \text{$I$ is finite, $\setbarTall{b_i \chi_{{\cal
          O}_{\lowerSet{W_i}}}}{i \in I}$ is
        consistent}, \forall i \in I: W_i \in \Omega_0, b_i \in D_0}.
  \end{equation}
\end{theorem}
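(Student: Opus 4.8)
The plan is to apply Lemma~\ref{lemma:generalized_erker_sup_step_fun} directly, since it already establishes exactly the kind of statement we need for a general core-compact space $\mathbb{Y}$. The only task is to instantiate that lemma with $\mathbb{Y} = \spect{\mathbb{X}}_{\Omega_0}$ and supply a suitable basis $B_{\mathbb{Y}}$ of the continuous lattice $(\spect{\tau}, \subseteq)$. The candidate basis is forced on us by Corollary~\ref{cor:base_for_hull_kernel}: the collection $\setbarNormal{{\cal O}_{\lowerSet{W}}}{W \in \Omega_0}$ is a base for the hull-kernel topology $\spect{\tau}$, so the natural choice is
\begin{equation*}
  B_{\mathbb{Y}} \defeq \setbarNormal{{\cal O}_{\lowerSet{W}}}{W \in \Omega_0}.
\end{equation*}

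\textbf{First} I would confirm that $\spect{\mathbb{X}}_{\Omega_0}$ is core-compact. This is immediate: it is a spectral space (being $\pt$ applied to the algebraic fully arithmetic lattice ${\cal L} = \Idl(\Omega_0)$, via diagram~\eqref{eq:diag:BDLat_Spec}), and spectral spaces are strongly locally compact, hence by Lemma~\ref{lemma:sLocCompact_AlgCLat} their open-set lattice is algebraic, in particular continuous. \textbf{Next}, and this is the one genuine point requiring care, I must verify that $B_{\mathbb{Y}}$ is not merely a topological base but a \emph{domain-theoretic} basis for the continuous lattice $(\spect{\tau}, \subseteq)$, as required by hypothesis~(1) of Lemma~\ref{lemma:generalized_erker_sup_step_fun}. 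The two notions need not coincide in general, so I would argue as follows: the open sets ${\cal O}_{\lowerSet{W}}$ for $W \in \Omega_0$ correspond (via the order isomorphism $I \mapsto {\cal O}_I$, using Propositions~\ref{prop:O_I_monotone} and~\ref{prop:O_I_Union}) to the principal ideals $\lowerSet{W}$, which are precisely the finite (compact) elements of the algebraic lattice $\Idl(\Omega_0)$. Since $\setbarNormal{\lowerSet{W}}{W \in \Omega_0}$ is the set of finite elements and forms a basis of the algebraic lattice $\Idl(\Omega_0)$ (indeed the remark in Corollary~\ref{cor:base_for_hull_kernel} already states this), its isomorphic image $B_{\mathbb{Y}}$ is a basis of $(\spect{\tau}, \subseteq)$.

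\textbf{Finally}, with both hypotheses of Lemma~\ref{lemma:generalized_erker_sup_step_fun} in place---$\mathbb{Y} = \spect{\mathbb{X}}_{\Omega_0}$ core-compact with basis $B_{\mathbb{Y}}$, and $\mathbb{D}$ a bc-domain with basis $D_0$---the lemma delivers precisely the conclusion: $[\spect{\mathbb{X}}_{\Omega_0} \to \mathbb{D}]$ is a bc-domain with a basis of step-functions $\join_{i \in I} b_i \chi_{U_i}$ where each $U_i \in B_{\mathbb{Y}}$, i.e.\ $U_i = {\cal O}_{\lowerSet{W_i}}$ for some $W_i \in \Omega_0$, and each $b_i \in D_0$. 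This is exactly the set $\hat{\mathbb{B}}$ displayed in~\eqref{eq:basis_B}. \textbf{The main obstacle}, as flagged above, is the subtle distinction between a topological base and a domain-theoretic lattice basis; everything else is routine instantiation. I would therefore devote the bulk of the written proof to making the identification of $B_{\mathbb{Y}}$ with the finite elements of $\Idl(\Omega_0)$ fully precise, and treat the invocation of the lemma as a one-line conclusion.
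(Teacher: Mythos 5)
Your proposal is correct and takes essentially the same route as the paper's proof: both deduce the bc-domain property from the fact that the spectral space $\spect{\mathbb{X}}_{\Omega_0}$ is core-compact (Theorem~\ref{thm:core_compact_bc_domain}) and then instantiate Lemma~\ref{lemma:generalized_erker_sup_step_fun} with $B_{\mathbb{Y}} = \setbarNormal{{\cal O}_{\lowerSet{W}}}{W \in \Omega_0}$ to read off the basis $\hat{\mathbb{B}}$. The one point where you are more explicit than the paper---checking that this collection is a \emph{domain-theoretic} basis of the continuous lattice $(\spect{\tau}, \subseteq)$, via the isomorphism with $\Idl(\Omega_0)$ and its finite elements $\lowerSet{W}$, rather than merely a topological base---is a real subtlety that the paper's own proof leaves implicit (it is buried in the justification of Corollary~\ref{cor:base_for_hull_kernel}), so your treatment is a careful refinement of the same argument rather than a different one.
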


\begin{proof}
  The fact that $[\spect{\mathbb{X}}_{\Omega_0} \to \mathbb{D}]$ is a
  bc-domain follows from
  Theorem~\ref{thm:core_compact_bc_domain}. By
  Corollary~\ref{cor:base_for_hull_kernel}, the set
  $\setbarNormal{{\cal O}_{\lowerSet{W}}}{W \in \Omega_0}$ forms a
  base for the hull-kernel topology $\spect{\tau}$ on
  $\spect{\mathbb{X}}_{\Omega_0}$. Hence, from
  Lemma~\ref{lemma:generalized_erker_sup_step_fun}, we deduce:
  \begin{equation*}
  \forall f \in [\spect{\mathbb{X}}_{\Omega_0} \to \mathbb{D}]:\quad  f
  = \join \setbarTall{b \chi_{{\cal
          O}_{\lowerSet{W}}} }{\lowerSet{W} \ll
    f^{-1}(\wayaboves{b}), W \in \Omega_0, b \in D_0},
  \end{equation*}
  which implies that $\hat{\mathbb{B}}$ is indeed a basis for
  $[\spect{\mathbb{X}}_{\Omega_0} \to \mathbb{D}]$.
\end{proof}

\begin{corollary}
  If $\Omega_0$ is countable and $\mathbb{D}$ is $\omega$-continuous,
  then $[\spect{\mathbb{X}}_{\Omega_0} \to \mathbb{D}]$ is also $\omega$-continuous.
\end{corollary}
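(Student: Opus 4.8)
The plan is to show that under the two countability hypotheses, the basis $\hat{\mathbb{B}}$ constructed in the preceding theorem is in fact a \emph{countable} basis for $[\spect{\mathbb{X}}_{\Omega_0} \to \mathbb{D}]$, from which $\omega$-continuity follows immediately by definition. So the entire argument reduces to a cardinality count on the set $\hat{\mathbb{B}}$ given in~\eqref{eq:basis_B}.

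First I would observe that each element of $\hat{\mathbb{B}}$ is determined by a finite index set $I$ together with a finite family of pairs $(b_i, W_i)$ with $b_i \in D_0$ and $W_i \in \Omega_0$. The hypothesis that $\mathbb{D}$ is $\omega$-continuous supplies a countable basis $D_0 \subseteq D$, and the hypothesis that $\Omega_0$ is countable does the rest. Thus each single-step function $b_i \chi_{{\cal O}_{\lowerSet{W_i}}}$ ranges over the countable set $D_0 \times \Omega_0$, and each element of $\hat{\mathbb{B}}$ is the join of a \emph{finite} consistent collection of such single-step functions. Since a finite union of finite powers of a countable set is countable (the set of finite subsets of a countable set is countable), the set $\hat{\mathbb{B}}$ is countable.

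Concretely, I would note that $\hat{\mathbb{B}}$ is the image, under the join map $\setbarNormal{b_i\chi_{{\cal O}_{\lowerSet{W_i}}}}{i \in I} \mapsto \join_{i \in I} b_i \chi_{{\cal O}_{\lowerSet{W_i}}}$, of a subset of the collection of finite consistent subsets of the countable set $\setbarNormal{b\chi_{{\cal O}_{\lowerSet{W}}}}{b \in D_0, W \in \Omega_0}$. A subset of a countable set is countable, and the image of a countable set under any map is countable; hence $\hat{\mathbb{B}}$ is countable. Combining this with the fact, established in the preceding theorem, that $\hat{\mathbb{B}}$ is a basis for the bc-domain $[\spect{\mathbb{X}}_{\Omega_0} \to \mathbb{D}]$, we conclude that $[\spect{\mathbb{X}}_{\Omega_0} \to \mathbb{D}]$ has a countable basis and is therefore $\omega$-continuous.

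There is no real obstacle here: the content of the corollary is entirely carried by the previous theorem, and what remains is a routine enumeration argument. The only point requiring a moment's care is to confirm that passing to finite consistent subfamilies preserves countability, but this is the standard fact that the set of finite subsets of a countable set is countable, so no difficulty arises.
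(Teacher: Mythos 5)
Your proof is correct and follows essentially the same route as the paper: take $D_0$ countable (using $\omega$-continuity of $\mathbb{D}$), observe that the basis $\hat{\mathbb{B}}$ of~\eqref{eq:basis_B} is then countable, and conclude from the preceding theorem. The paper simply compresses the cardinality count that you spell out explicitly.
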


\begin{proof}
  If $\mathbb{D}$ is $\omega$-continuous, then $D_0$ can be chosen to
  be countable. By~\eqref{eq:basis_B}, $\hat{\mathbb{B}}$ is a
  countable basis for $[\spect{\mathbb{X}}_{\Omega_0} \to \mathbb{D}]$.
\end{proof}

Regarding the way-below relation over
$[\spect{\mathbb{X}}_{\Omega_0} \to \mathbb{D}]$, we can derive a
useful formulation for the basis elements, {\ie}, step functions of
the form~\eqref{eq:basis_B}. Recall that a core-compact space
$\mathbb{Y} \equiv (Y, \tau_{\mathbb{Y}})$ is called \emph{stable} if
$U \ll V$ and $U \ll V'$ imply $U \ll V \cap V'$, for all
$U, V, V' \in \tau_{\mathbb{Y}}$. When $\mathbb{Y}$ is stable, based
on~\cite[Lemma~1 and Proposition~5]{Erker_et_al:way_below:1998},
we know that, for any step function
$\join_{i \in I} b_i \chi_{O_i} \in [\mathbb{Y} \to \mathbb{D}]$ and
any arbitrary function $g \in [\mathbb{Y} \to \mathbb{D}]$:
  \begin{equation}
    \label{eq:step_fun_way_below_formulation}
  \join_{i \in I} b_i \chi_{O_i} \ll  g  \iff \forall i \in I: O_i \ll
  g^{-1}(\wayaboves{b_i}). 
\end{equation}
We point out that stability is used for the $(\Rightarrow)$ direction,
{\ie}, for the $(\Leftarrow)$ implication to hold, it suffices for
$\mathbb{Y}$ to be
core-compact~\cite[Lemma~1]{Erker_et_al:way_below:1998}.

Let us now take $\mathbb{Y} = \spect{\mathbb{X}}_{\Omega_0}$ and
recall that every spectral space is stable. Hence, we can
use~\eqref{eq:step_fun_way_below_formulation}. Assume that, for some
finite index set $J$, we have
$g \defeq \join_{j \in J} b' \chi_{{\cal O}_{\lowerSet{W'_j}}}$. It is
easy to show that for all $b \in D$, there exists
a finite index set $J_b$ such that
$g^{-1}(\wayaboves b) = \bigcup_{k \in J_b} {\cal
  O}_{\lowerSet{W''_k}}$, in which each $W''_k$ is formed by
intersecting some open sets of the form $W'_j$, {\ie}, there exists a
$J_k \subseteq J$ such that $W''_k = \bigcap_{j \in J_k}W'_j$. By
Proposition~\ref{prop:O_I_Union}, we have:
\begin{equation}
  \label{eq:O_Union_Ideal}
  \bigcup_{k \in J_b} {\cal O}_{\lowerSet{W''_k}} = {\cal O}_{\join_{k
      \in J_b} \lowerSet{W''_k}} ={\cal O}_{\lowerSet{\bigcup_{k \in J_b}W''_k}}.
\end{equation}
\noindent
According to Proposition~\ref{prop:L_spatial}, the complete lattice
${\cal L} = \Idl(\Omega_0)$ is spatial and we have:
\begin{equation}
  \label{eq:Omega_Idl_equiv}
 \Omega(\spect{\mathbb{X}}_{\Omega_0}) \cong
\Omega(\pt(\Idl(\Omega_0))) \cong \Idl(\Omega_0).
\end{equation}
For instance, we have:

\begin{eqnarray*}
  && {\cal O}_{\lowerSet{W}} \ll {\cal
                         O}_{\lowerSet{W'}} \quad (\text{in
                         $(\spect{\tau}, \subseteq)$}) \\
  (\text{by~\ref{eq:Omega_Idl_equiv}}) &\iff& \lowerSet{W} \ll
                                         \lowerSet{W'} \quad (\text{in
                                         $(\Idl(\Omega_0),
                                         \subseteq)$}) \\
  (\text{by~\cite[Proposition~2.2.22]{AbramskyJung94-DT}}) &\iff&
                                                                  W
                                                                  \subseteq
                                                                  W'
                                                                  \quad
                                                                  (\text{in
                                                                  $\Omega_0$}).
\end{eqnarray*}
\noindent
In fact, we have:
\begin{lemma}
  \label{lemma:step_functions_way_below_Spectral}
  The way-below relation on step functions of~\eqref{eq:basis_B} can
  be expressed as:
  \begin{equation*}
  \join_{i \in I} b_i \chi_{{\cal O}_{\lowerSet{W_i}}} \ll  \join_{j
    \in J} b'_j \chi_{{\cal
      O}_{\lowerSet{W'_j}}}  \iff   \forall i \in I: W_i
  \subseteq U_i,
\end{equation*}
in which $U_i \in \Omega_0$ satisfies ${\cal O}_{\lowerSet{U_i}} =
\left( \join_{j
    \in J} b'_j \chi_{{\cal
      O}_{\lowerSet{W'_j}}} \right)^{-1}(\wayaboves{b_i})$.
\end{lemma}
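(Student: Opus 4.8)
The plan is to assemble the lemma from the step-function characterization of the way-below relation together with the lattice isomorphism recorded just above. Write $g \defeq \join_{j \in J} b'_j \chi_{{\cal O}_{\lowerSet{W'_j}}}$ for the target step function. Since $\spect{\mathbb{X}}_{\Omega_0}$ is spectral, it is in particular stable and core-compact, so the biconditional~\eqref{eq:step_fun_way_below_formulation} applies in full (stability being exactly what is needed for its $(\Rightarrow)$ direction). Applying it to $\join_{i \in I} b_i \chi_{{\cal O}_{\lowerSet{W_i}}} \ll g$ immediately reduces the claim to showing, for each $i \in I$, that ${\cal O}_{\lowerSet{W_i}} \ll g^{-1}(\wayaboves{b_i})$ is equivalent to $W_i \subseteq U_i$.

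Next I would substitute the explicit shape of $g^{-1}(\wayaboves{b_i})$. By the discussion preceding the lemma---culminating in~\eqref{eq:O_Union_Ideal}---the preimage $g^{-1}(\wayaboves{b_i})$ is a finite union of basic opens ${\cal O}_{\lowerSet{W''_k}}$ whose generating sets $W''_k$ are finite intersections of the $W'_j$. Since $\Omega_0$ is a ring of open sets, these intersections and their union again lie in $\Omega_0$, so $g^{-1}(\wayaboves{b_i}) = {\cal O}_{\lowerSet{U_i}}$ for the single $U_i \in \Omega_0$ named in the statement. Thus the condition ${\cal O}_{\lowerSet{W_i}} \ll g^{-1}(\wayaboves{b_i})$ becomes ${\cal O}_{\lowerSet{W_i}} \ll {\cal O}_{\lowerSet{U_i}}$.

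Finally I would invoke the chain of equivalences displayed just before the lemma: via the spatiality isomorphism~\eqref{eq:Omega_Idl_equiv} and \cite[Proposition~2.2.22]{AbramskyJung94-DT}, one has ${\cal O}_{\lowerSet{W_i}} \ll {\cal O}_{\lowerSet{U_i}} \iff \lowerSet{W_i} \ll \lowerSet{U_i} \iff W_i \subseteq U_i$. Combining this with the reduction from the first step yields the stated biconditional, quantified over all $i \in I$.

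There is no genuinely deep obstacle here; the lemma is essentially a bookkeeping corollary of machinery already in place. The one point that needs care is confirming that $U_i$ can be taken inside $\Omega_0$ rather than merely in the ambient frame $\Omega(\spect{\mathbb{X}}_{\Omega_0})$, since the equivalence ${\cal O}_{\lowerSet{W}} \ll {\cal O}_{\lowerSet{W'}} \iff W \subseteq W'$ was established only for $W, W' \in \Omega_0$. This is precisely why the closure of $\Omega_0$ under finite intersections and unions---its being a ring---must be invoked at the second step.
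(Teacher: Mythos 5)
Your proof is correct and follows essentially the same route as the paper's: reduce via the step-function characterization~\eqref{eq:step_fun_way_below_formulation} (using stability of the spectral space for the forward direction), rewrite $g^{-1}(\wayaboves{b_i})$ as a single basic open ${\cal O}_{\lowerSet{U_i}}$ with $U_i \in \Omega_0$ using~\eqref{eq:O_Union_Ideal} and the ring closure of $\Omega_0$, and finish with the equivalence ${\cal O}_{\lowerSet{W}} \ll {\cal O}_{\lowerSet{W'}} \iff W \subseteq W'$ from spatiality. The paper's proof is just a terse citation of these same three ingredients, so your write-up is a faithful (and more explicit) version of it, including the correct observation that membership of $U_i$ in $\Omega_0$ is exactly where the ring property is needed.
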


\begin{proof}
  This follows from~\eqref{eq:step_fun_way_below_formulation},
  \eqref{eq:O_Union_Ideal}, and the fact that $\Omega_0$ is closed
  under finite unions and finite intersections.
\end{proof}

%%%%%%%%%%%%%%%%%%%%%%%%%%%%%%%%%%%%%%%%%%%%%%%%%%%%%%%%%%%%%%%
\subsection{Construction Using Abstract Bases}

Lemma~\ref{lemma:step_functions_way_below_Spectral} suggests an
alternative approach to obtaining a domain of functions based on \emph{abstract bases} without
referring to Stone duality:

\begin{definition}[Abstract basis]
  \label{def:abstract_basis}
  A pair $(B, \precAbsBas)$ consisting of a set $B$ and a binary relation
  $\precAbsBas\ \subseteq B \times B$ is said to be an abstract basis if the
  relation $\precAbsBas$ is transitive and satisfies the following
  interpolation property:
  \begin{itemize}
  \item For every finite subset $A \subseteq_f B$ and any element
    $x \in B: A \precAbsBas x \implies \exists y \in B: A \precAbsBas
    y \precAbsBas x$.
  \end{itemize}
  Here, by $A \precAbsBas x$ we mean $\forall a \in A: a \precAbsBas x$. 
\end{definition}

In this approach, we work directly with step functions in $[\mathbb{X}
\to \mathbb{D}]$. Specifically, we consider:

\begin{equation}
  \label{eq:def_B_abs}
  \mathbb{B}_{\mathrm{abs}} \defeq \setbarTall{f: X \to D}{f = \join_{i \in I} b_i
    \chi_{O_i}, I \text{ is finite, $\setbarTall{b_i
    \chi_{O_i}}{i \in I}$  is consistent}, \forall i \in I: O_i \in \Omega_0
    \text{ and }
    b_i \in D_0 }.
\end{equation}
\noindent
As for the binary relation $\precAbsBas$, considering
Lemma~\ref{lemma:step_functions_way_below_Spectral}, we define:
\begin{equation}
  \label{eq:def_precAbsBas}
  \join_{i \in I} b_i \chi_{O_i} \precAbsBas  \join_{j \in J} b'_j
  \chi_{O'_j} \iff \forall i \in I: O_i \subseteq (\join_{j \in J} b'_j
  \chi_{O'_j})^{-1}(\wayaboves{b_i}). 
\end{equation}
This is indeed the approach taken
in~\cite{Edalat_Farjudian_Li:Temporal_Discretization:2023}, where
$\mathbb{X}$ is the real line endowed with the rational upper limit
topology (Example~\ref{example:rational_upper_limit}) and $\mathbb{D}$
is the interval domain $\intvaldom[\R^n\lift]$
of~\eqref{eq:interval_domain_IR_n}. In~\cite{Edalat_Farjudian_Li:Temporal_Discretization:2023},
a domain ${\cal W}$ is constructed as the rounded ideal completion of
$(\mathbb{B}_{\mathrm{abs}}, \precAbsBas)$ and the Galois connection
of Theorem~\ref{thm:Galois_connection_X_to_D} is obtained with
$[\mathbb{Y} \to \mathbb{D}]$ replaced by ${\cal W}$:
  \begin{equation*}
    %\label{eq:Galois_Connection_QEmbed}
    \begin{tikzcd}[column sep = large]
      [\mathbb{X} \to \mathbb{D}] \arrow[r, "(\cdot)_*", yshift =
      1.1ex] & {\cal W}  \arrow[l, "(\cdot)^*", "\top"', yshift = -1.1ex].
    \end{tikzcd}
  \end{equation*}
  The aim
  in~\cite{Edalat_Farjudian_Li:Temporal_Discretization:2023}
  has been solution of IVPs with temporal discretization. In that
  context, the computation is carried out over the (non-continuous)
  dcpo $[\mathbb{X} \to \mathbb{D}]$, while the theoretical
  analyses (including computable analysis) is carried out over the
  continuous domain ${\cal W}$. The Galois connection provides the
  bridge between the two.
  
  We expect this to be the general rule. When $\mathbb{X}$ is not
  core-compact, the non-continuous dcpo $[\mathbb{X} \to \mathbb{D}]$
  is still useful for implementation of algorithms. But it cannot be
  used for computable analysis. With careful choice of $\Omega_0$, it
  is possible to endow
  $[\spect{\mathbb{X}}_{\Omega_0} \to \mathbb{D}]$ with an effective
  structure which makes it suitable for computable analysis of the
  relevant problem.

To demonstrate that the two approaches are equivalent, we show that
the (abstract) bases $(\hat{\mathbb{B}}, \ll)$ of~\eqref{eq:basis_B}
and $(\mathbb{B}_{\mathrm{abs}}, \precAbsBas)$ of~\eqref{eq:def_B_abs}
are isomorphic. We define a map
$\gamma : \hat{\mathbb{B}} \to \mathbb{B}_{\mathrm{abs}}$ by
$\forall f \in \hat{\mathbb{B}}: \gamma(f) \defeq f \circ \iota =
f^*$. It is straightforward to verify that, for any
$\join_{i \in I} b_i \chi_{W_i} \in \mathbb{B}_{\mathrm{abs}}$, we
have
$\join_{i \in I} b_i \chi_{W_i} = \gamma \left( \join_{i \in I} b_i
  \chi_{{\cal O}_{\lowerSet{W_i}}} \right)$. It is also easy to verify
that $\gamma : \hat{\mathbb{B}} \to \mathbb{B}_{\mathrm{abs}}$ and its
inverse $\gamma^{-1}: \mathbb{B}_{\mathrm{abs}} \to \hat{\mathbb{B}}$
are monotonic and bijective. We must show, however, that $\gamma$
preserves the way-below relations.

  \begin{lemma}
    $\forall f, g \in \hat{\mathbb{B}}: \quad f \ll g \iff \gamma(f)
    \precAbsBas \gamma(g)$.
  \end{lemma}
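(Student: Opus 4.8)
The plan is to reduce each side of the claimed equivalence to an inclusion condition among open sets of $\Omega_0$---using Lemma~\ref{lemma:step_functions_way_below_Spectral} on the $\hat{\mathbb{B}}$ side and the defining equation~\eqref{eq:def_precAbsBas} of $\precAbsBas$ on the $\mathbb{B}_{\mathrm{abs}}$ side---and then to observe that the two conditions coincide verbatim. Since the argument is a chain of equivalences, both directions of the lemma fall out simultaneously.

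Concretely, write $f = \join_{i \in I} b_i \chi_{{\cal O}_{\lowerSet{W_i}}}$ and $g = \join_{j \in J} b'_j \chi_{{\cal O}_{\lowerSet{W'_j}}}$ in $\hat{\mathbb{B}}$, so that $\gamma(f) = \join_{i \in I} b_i \chi_{W_i}$ and $\gamma(g) = \join_{j \in J} b'_j \chi_{W'_j}$ in $\mathbb{B}_{\mathrm{abs}}$. By Lemma~\ref{lemma:step_functions_way_below_Spectral}, $f \ll g$ holds iff $W_i \subseteq U_i$ for every $i \in I$, where each $U_i \in \Omega_0$ is determined by ${\cal O}_{\lowerSet{U_i}} = g^{-1}(\wayaboves{b_i})$ (this $U_i$ exists by the computation~\eqref{eq:O_Union_Ideal} preceding that lemma). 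On the other side, by~\eqref{eq:def_precAbsBas}, $\gamma(f) \precAbsBas \gamma(g)$ holds iff $W_i \subseteq (\gamma(g))^{-1}(\wayaboves{b_i})$ for every $i \in I$.

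The crux is therefore the single identity $(\gamma(g))^{-1}(\wayaboves{b_i}) = U_i$. Since $\gamma(g) = g \circ \iota$, we have $(\gamma(g))^{-1}(\wayaboves{b_i}) = \iota^{-1}\big(g^{-1}(\wayaboves{b_i})\big) = \iota^{-1}({\cal O}_{\lowerSet{U_i}})$. By Proposition~\ref{prop:iota_cont}, $\iota^{-1}({\cal O}_I) = \bigcup I$ for every ideal $I \in \Idl(\Omega_0)$, so with $I = \lowerSet{U_i}$ this equals $\bigcup \lowerSet{U_i}$; and because $U_i \in \Omega_0$ is itself the greatest element of $\lowerSet{U_i} = \setbarNormal{V \in \Omega_0}{V \subseteq U_i}$, we obtain $\bigcup \lowerSet{U_i} = U_i$. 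Hence the two inclusion conditions are identical, and the equivalence follows.

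I expect no genuine obstacle here, since the real content has already been isolated in Lemma~\ref{lemma:step_functions_way_below_Spectral} and in the identification of $g^{-1}(\wayaboves{b})$ as a basic open set ${\cal O}_{\lowerSet{U}}$. The only point requiring care is the bookkeeping that $\gamma$---being the restriction of the supremum-preserving left adjoint $(\cdot)^*$---sends each $b_i \chi_{{\cal O}_{\lowerSet{W_i}}}$ to $b_i \chi_{W_i}$; this rests on the same identity $\iota^{-1}({\cal O}_{\lowerSet{W_i}}) = W_i$ from Proposition~\ref{prop:iota_cont} together with $W_i \in \Omega_0$, and was already recorded just before the statement of the lemma.
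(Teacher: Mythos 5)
Your reduction is correct as far as it goes, and it usefully fills in the identity that the paper's proof leaves implicit: the chain $(\gamma(g))^{-1}(\wayaboves{b_i}) = \iota^{-1}\bigl(g^{-1}(\wayaboves{b_i})\bigr) = \iota^{-1}({\cal O}_{\lowerSet{U_i}}) = \bigcup \lowerSet{U_i} = U_i$ is exactly why the condition in Lemma~\ref{lemma:step_functions_way_below_Spectral} and the condition in~\eqref{eq:def_precAbsBas} coincide; the paper labels this part ``almost immediate'' and does not spell it out. However, you have omitted what the paper's proof treats as its actual content: independence of the choice of representation. An element of $\mathbb{B}_{\mathrm{abs}}$ is a \emph{function}, and a given function generally admits many presentations as a consistent finite join of single-step functions; the relation~\eqref{eq:def_precAbsBas} is stated in terms of such a presentation (the left-hand pairs $(b_i, O_i)$ enter through both $O_i$ and $\wayaboves{b_i}$). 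Your proof fixes presentations of $f$ and $g$ and verifies the $\precAbsBas$-condition only for the induced presentations $\join_{i} b_i \chi_{W_i}$ and $\join_{j} b'_j \chi_{W'_j}$ of $\gamma(f)$ and $\gamma(g)$. Whichever way one disambiguates~\eqref{eq:def_precAbsBas} (``for some presentation'' or ``for every presentation''), one direction of your equivalence is then not yet established: for instance, if $\gamma(f) \precAbsBas \gamma(g)$ is witnessed by a \emph{different} presentation $\join_{k} c_k \chi_{U_k} = \gamma(f)$, your chain of equivalences says nothing about it.

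The missing step is precisely the paper's argument around~\eqref{eq:way_below_independent}: any presentation $\join_{k} c_k \chi_{U_k}$ of $\gamma(f)$ lifts to the element $\join_{k} c_k \chi_{{\cal O}_{\lowerSet{U_k}}}$ of $\hat{\mathbb{B}}$, whose $\gamma$-image is again $\gamma(f)$; since $\gamma$ is injective, this lift equals $f$. Running your chain on the lifted presentation shows that the $\precAbsBas$-condition for the pairs $(c_k, U_k)$ is also equivalent to $f \ll g$, hence the condition is the same for all presentations and the lemma holds unambiguously. This is a short addition and uses only facts you already invoke (bijectivity of $\gamma$, Proposition~\ref{prop:iota_cont}, and your key identity), but without it the proof is incomplete relative to what the lemma---and the subsequent use of $(\mathbb{B}_{\mathrm{abs}}, \precAbsBas)$ as an abstract basis of functions in Theorem~\ref{thm:two_approaches_equiv}---requires.
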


  \begin{proof}
    This is almost immediate from
    Lemma~\ref{lemma:step_functions_way_below_Spectral}
    and~\eqref{eq:def_precAbsBas}, except that we must show that the
    relations hold regardless of how the step-functions are
    represented. In other words, whenever
    $\join_{k \in K} c_k \chi_{U_k} = \gamma\left( \join_{i \in I} b_i
      \chi_{{\cal O}_{\lowerSet{W_i}}} \right)$ and
    $\join_{\ell \in L} c'_{\ell} \chi_{U'_{\ell}} = \gamma\left(
      \join_{j \in J} b'_j \chi_{{\cal O}_{\lowerSet{W'_j}}} \right)$:
    \begin{equation}
      \label{eq:way_below_independent}
    \join_{i \in I} b_i \chi_{{\cal O}_{\lowerSet{W_i}}} \ll  \join_{j
      \in J} b'_j \chi_{{\cal
        O}_{\lowerSet{W'_j}}}  \iff   \join_{k \in K} c_k \chi_{U_k} \precAbsBas  \join_{\ell \in L} c'_{\ell}
    \chi_{U'_{\ell}}.
  \end{equation}
  This follows from the fact that $\gamma$ is monotone and
  bijective. For instance, to prove the $(\Leftarrow)$ implication
  in~\eqref{eq:way_below_independent}, assume that
  $\join_{k \in K} c_k \chi_{U_k} \precAbsBas \join_{\ell \in L}
  c'_{\ell} \chi_{U'_{\ell}}$. Then, we have:
  \begin{equation*}
    \left\{
      \begin{array}{l}
        \gamma\left( \join_{k \in K} c_k
      \chi_{{\cal O}_{\lowerSet{U_k}}} \right) = 
      \join_{k \in K} c_k \chi_{U_k} = \gamma\left( \join_{i \in I} b_i
        \chi_{{\cal O}_{\lowerSet{W_i}}} \right), \\
        \gamma\left( \join_{\ell \in L} c'_{\ell} \chi_{{\cal
        O}_{\lowerSet{U'_{\ell}}}} \right) =
      \join_{\ell \in L} c'_{\ell} \chi_{U'_{\ell}} = \gamma\left(
      \join_{j \in J} b'_j \chi_{{\cal O}_{\lowerSet{W'_j}}} \right).
    \end{array}
    \right.
  \end{equation*}
  As $\gamma$ is bijective, we must have
  $\join_{k \in K} c_k \chi_{{\cal O}_{\lowerSet{U_k}}} = \join_{i \in
    I} b_i \chi_{{\cal O}_{\lowerSet{W_i}}}$ and
  $\join_{\ell \in L} c'_{\ell} \chi_{{\cal O}_{\lowerSet{U'_{\ell}}}}
  = \join_{j \in J} b'_j \chi_{{\cal O}_{\lowerSet{W'_j}}}$. The
  result now follows from
  Lemma~\ref{lemma:step_functions_way_below_Spectral}
  and~\eqref{eq:def_precAbsBas}.
\end{proof}

The approach based on abstract bases also provides us with the same
continuous domain of functions:

\begin{theorem}
  \label{thm:two_approaches_equiv}
  Assume that the domain ${\cal W}$ is the rounded ideal completion of
  $(\mathbb{B}_{\mathrm{abs}}, \precAbsBas)$. Then
  ${\cal W} \cong [\spect{\mathbb{X}}_{\Omega_0} \to \mathbb{D}]$.
\end{theorem}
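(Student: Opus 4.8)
The plan is to reduce the statement to the standard domain-theoretic fact that a continuous domain is recovered, up to isomorphism, as the rounded ideal completion of any of its bases equipped with the restricted way-below relation. First I would observe that, since $[\spect{\mathbb{X}}_{\Omega_0} \to \mathbb{D}]$ is a bc-domain with basis $\hat{\mathbb{B}}$, the pair $(\hat{\mathbb{B}}, \ll)$ is itself an abstract basis: transitivity of $\ll$ is immediate, and the finite interpolation property required in Definition~\ref{def:abstract_basis} is exactly the interpolation lemma for the way-below relation restricted to a basis. The same standard fact then gives an order isomorphism
\begin{equation*}
  \mathrm{RIdl}(\hat{\mathbb{B}}, \ll) \;\cong\; [\spect{\mathbb{X}}_{\Omega_0} \to \mathbb{D}],
\end{equation*}
realised by sending a rounded ideal $I \subseteq \hat{\mathbb{B}}$ to $\join I$ (see~\cite{AbramskyJung94-DT}). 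Thus it suffices to identify $\mathrm{RIdl}(\hat{\mathbb{B}}, \ll)$ with ${\cal W} = \mathrm{RIdl}(\mathbb{B}_{\mathrm{abs}}, \precAbsBas)$.

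Second, I would invoke the preceding lemma, which establishes that the map $\gamma \colon \hat{\mathbb{B}} \to \mathbb{B}_{\mathrm{abs}}$ is a bijection with $f \ll g \iff \gamma(f) \precAbsBas \gamma(g)$ for all $f, g \in \hat{\mathbb{B}}$; that is, $\gamma$ is an isomorphism of abstract bases. Because an isomorphism of abstract bases carries rounded ideals bijectively to rounded ideals and preserves inclusion, the induced map $I \mapsto \gamma(I)$ is an order isomorphism
\begin{equation*}
  \mathrm{RIdl}(\hat{\mathbb{B}}, \ll) \;\cong\; \mathrm{RIdl}(\mathbb{B}_{\mathrm{abs}}, \precAbsBas) \;=\; {\cal W}.
\end{equation*}
Composing the two isomorphisms gives ${\cal W} \cong [\spect{\mathbb{X}}_{\Omega_0} \to \mathbb{D}]$, which is the claim.

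Most of the genuine work has already been discharged in the previous lemma, so what remains is largely bookkeeping. The one point that warrants care is checking that transport along $\gamma$ is well-defined at the level of rounded ideal completions: a subset $J \subseteq \mathbb{B}_{\mathrm{abs}}$ is a $\precAbsBas$-rounded ideal precisely when $\gamma^{-1}(J)$ is a $\ll$-rounded ideal in $\hat{\mathbb{B}}$, and this follows directly from $\gamma$ being an order isomorphism satisfying $\gamma(f) \precAbsBas \gamma(g) \iff f \ll g$, since directedness and roundedness are each phrased purely in terms of the relation, which $\gamma$ preserves in both directions. I expect the subtlest conceptual ingredient to be the first step---that $(\hat{\mathbb{B}}, \ll)$ is an abstract basis whose rounded ideal completion returns the original continuous domain---but this is a classical result and can simply be cited rather than reproved.
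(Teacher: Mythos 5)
Your proposal is correct and follows essentially the same route as the paper: the paper's own proof is the single sentence that the result ``follows from the fact that the (abstract) bases $(\hat{\mathbb{B}}, \ll)$ and $(\mathbb{B}_{\mathrm{abs}}, \precAbsBas)$ are isomorphic,'' which tacitly invokes exactly the two standard facts you make explicit---that a continuous domain is recovered as the rounded ideal completion of any of its bases under the restricted way-below relation, and that an isomorphism of abstract bases induces an isomorphism of rounded ideal completions. Your write-up merely fills in the bookkeeping the paper leaves to the reader, including the well-definedness of transporting rounded ideals along $\gamma$.
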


\begin{proof}
  Follows from the fact that the (abstract) bases
  $(\hat{\mathbb{B}}, \ll)$ and
  $(\mathbb{B}_{\mathrm{abs}}, \precAbsBas)$ are isomorphic.
\end{proof}

%%%%%%%%%%%%%%%%%%%%%%%%%%%%%%%%%%%%%%%%%%%%%%%%%%%%%%%%%%%%%%%%
\section{Concluding Remarks}
\label{sec:Concluding_Remarks}

We have investigated the basic properties of a computational framework
for function spaces over topological spaces that are not
core-compact. To that end, we considered spectral compactification of
a given space and presented the construction directly without
referring to quasi-proximities. In our framework, for a space
$\mathbb{Y}$ to be a compactification of a space $\mathbb{X}$, there
must exist a \emph{dense} embedding $\iota: \mathbb{X} \to \mathbb{Y}$
(Definition~\ref{def:core_compactification}). By using
quasi-proximities, one may require the existence of a so-called
\emph{basis} embedding $\iota: \mathbb{X} \to \mathbb{Y}$, which is a
stronger condition~\cite{Smyth:Stable_Compactification:1992}. It will
be worth investigating how our results can be strengthened with the
basis embedding requirement.

In our framework, computability is analyzed in the continuous domain
$[\spect{\mathbb{X}}_{\Omega_0} \to \mathbb{D}]$. In Type-II Theory of
Effectivity (TTE)~\cite{Weihrauch2000:book}, computability is analyzed
via admissible representations of the function space
$\mathbb{D}^{\mathbb{X}}$. Part of our future work is the
investigation of how the two approaches are related.

Regarding applications of the framework, it has provided a suitable
semantic model for solution of IVPs. To be more precise,
in~\cite{Edalat_Farjudian_Li:Temporal_Discretization:2023}, we
constructed a domain using abstract bases for solution of IVPs with
temporal discretization. In Theorem~\ref{thm:two_approaches_equiv} of
the current article, we showed that the same domain (up to
isomorphism) can be obtained using the construction of the current
article. We believe that spectral compactification will be useful in
domain theoretic solution of partial differential equations (PDEs) as
well.

Spectral compactification provides another angle on the construction
obtained via abstract bases
in~\cite{Edalat_Farjudian_Li:Temporal_Discretization:2023}. Even
though, in practice, the two approaches lead to isomorphic function
spaces, we believe that the construction of the current article has
some theoretical advantages. Compactification is a central topic in
topology, and as we have pointed out earlier, our construction can be
obtained as a special case of Smyth's stable compactification by
considering fine
quasi-proximities~\cite{Smyth:Stable_Compactification:1992}. It will
be interesting to see if any concrete application necessitates
consideration of stable compactifications obtained via
quasi-proximities other than the fine one, in the same way that
solution of IVPs with temporal discretization led us to the
investigation of spectral compactification.

% \bibliographystyle{./entics}
% \bibliography{Biblio}

\begin{thebibliography}{10}
\providecommand{\url}[1]{\texttt{#1}}
\providecommand{\urlprefix}{ }
\providecommand{\eprint}[2][]{\url{#2}}

\bibitem{AbramskyJung94-DT}
Abramsky, S. and A.~Jung, \emph{Domain theory}, in: S.~Abramsky, D.~M. Gabbay
  and T.~S.~E. Maibaum, editors, \emph{Handbook of Logic in Computer Science},
  volume~3, pages 1--168, Clarendon Press, Oxford (1994).
\newline\urlprefix\url{https://doi.org/10.1093/oso/9780198537625.003.0001}

\bibitem{Bilokon_Edalat:Domain_Brownian:2017}
Bilokon, P. and A.~Edalat, \emph{A domain-theoretic approach to {Brownian}
  motion and general continuous stochastic processes}, Theoretical Computer
  Science \textbf{691}, pages 10--26 (2017).
\newline\urlprefix\url{https://doi.org/10.1016/j.tcs.2017.07.016}

\bibitem{DiGianantonio_Edalat_Gutin-Automatic_Differentiation:2023}
{Di Gianantonio}, P., A.~Edalat and R.~Gutin, \emph{A language for evaluating
  derivatives of functionals using automatic differentiation}, Electronic Notes
  in Theoretical Informatics and Computer Science \textbf{3} (2023).
  Proceedings of MFPS 2023.
\newline\urlprefix\url{https://doi.org/10.46298/entics.12303}

\bibitem{Edalat95:DT-fractals}
Edalat, A., \emph{Dynamical systems, measures and fractals via domain theory},
  {I}nformation and {C}omputation \textbf{120}, pages 32--48 (1995).
\newline\urlprefix\url{https://doi.org/10.1006/inco.1995.1096}

\bibitem{Edalat:Domains_Physics:1997}
Edalat, A., \emph{Domains for computation in mathematics, physics and exact
  real arithmetic}, Bull. Symbolic Logic \textbf{3}, pages 401--452 (1997).
\newline\urlprefix\url{https://doi.org/10.2307/421098}

\bibitem{Edalat_Farjudian_Li:Temporal_Discretization:2023}
Edalat, A., A.~Farjudian and Y.~Li, \emph{Recursive solution of initial value
  problems with temporal discretization}, Theoretical Computer Science page
  114221 (2023).
\newline\urlprefix\url{https://doi.org/10.1016/j.tcs.2023.114221}

\bibitem{Edalat_Farjudian_Mohammadian_Pattinson:2nd_Order_Euler:2020:Conf}
Edalat, A., A.~Farjudian, M.~Mohammadian and D.~Pattinson, \emph{Domain
  theoretic second-order {Euler's} method for solving initial value problems},
  Electr. Notes in Theoret. Comp. Sci. \textbf{352}, pages 105--128 (2020). The
  36th Mathematical Foundations of Programming Semantics Conference, 2020 (MFPS
  2020), Paris, France.
\newline\urlprefix\url{https://doi.org/10.1016/j.entcs.2020.09.006}

\bibitem{Edalat_Lieutier:Domain_Calculus_One_Var:MSCS:2004}
Edalat, A. and A.~Lieutier, \emph{Domain theory and differential calculus
  (functions of one variable)}, Mathematical Structures in Computer Science
  \textbf{14}, pages 771--802 (2004).
\newline\urlprefix\url{https://doi.org/10.1017/S0960129504004359}

\bibitem{Edalat_Pattinson:Hybrid:2007}
Edalat, A. and D.~Pattinson, \emph{Denotational semantics of hybrid automata},
  The Journal of Logic and Algebraic Programming \textbf{73}, pages 3--21
  (2007).
\newline\urlprefix\url{https://doi.org/10.1016/j.jlap.2007.01.002}

\bibitem{Edalat_Pattinson2007-LMS_Picard}
Edalat, A. and D.~Pattinson, \emph{A domain-theoretic account of {Picard's}
  theorem}, LMS Journal of Computation and Mathematics \textbf{10}, pages
  83--118 (2007).
\newline\urlprefix\url{https://doi.org/10.1112/S1461157000001315}

\bibitem{Erker_et_al:way_below:1998}
Erker, T., M.~H. Escardó and K.~Keimel, \emph{The way-below relation of
  function spaces over semantic domains}, Topology and its Applications
  \textbf{89}, pages 61--74 (1998).
\newline\urlprefix\url{https://doi.org/10.1016/S0166-8641(97)00226-5}

\bibitem{Escardo96-tcs}
Escard{\'o}, M.~H., \emph{{PCF} extended with real numbers}, {T}heoretical
  {C}omputer {S}cience \textbf{162}, pages 79--115 (1996).
\newline\urlprefix\url{https://doi.org/10.1016/0304-3975(95)00250-2}

\bibitem{Farjudian:Shrad:2007}
Farjudian, A., \emph{Shrad: A language for sequential real number computation},
  Theory Comput. Syst. \textbf{41}, pages 49--105 (2007).
\newline\urlprefix\url{https://doi.org/10.1007/s00224-006-1339-2}

\bibitem{FarjudianKonecny2008:wollic-lnai}
Farjudian, A. and M.~{Kone\v{c}n{\'y}}, \emph{Time complexity and convergence
  analysis of domain theoretic {Picard} method}, in: W.~Hodges and
  R.~de~Queiroz, editors, \emph{Proceedings of {WoLLIC '08}}, volume 5110 of
  \emph{Lecture Notes in Artificial Intelligence}, pages 149--163, Springer
  (2008).
\newline\urlprefix\url{https://doi.org/10.1007/978-3-540-69937-8_14}

\bibitem{Farjudian_Moggi:Robustness_Scott_Continuity_Computability:2023}
Farjudian, A. and E.~Moggi, \emph{Robustness, {Scott} continuity, and
  computability}, Mathematical Structures in Computer Science \textbf{33},
  pages 536--572 (2023).
\newline\urlprefix\url{https://doi.org/10.1017/S0960129523000233}

\bibitem{Gierz-ContinuousLattices-2003}
Gierz, G., K.~H. Hofmann, K.~Keimel, J.~D. Lawson, M.~W. Mislove and D.~S.
  Scott, \emph{Continuous Lattices and Domains}, volume~93 of \emph{Encycloedia
  of Mathematics and its Applications}, Cambridge University Press (2003).
\newline\urlprefix\url{https://doi.org/10.1017/CBO9780511542725}

\bibitem{Goubault-Larrecq:Non_Hausdorff_topology:2013}
Goubault-Larrecq, J., \emph{Non-Hausdorff topology and domain theory},
  Cambridge University Press (2013).
\newline\urlprefix\url{https://doi.org/10.1017/CBO9781139524438}

\bibitem{Mislove-Topology_DT_TCS:1998}
Mislove, M.~W., \emph{Topology, domain theory and theoretical computer
  science}, Topology and its Applications \textbf{89}, pages 3--59 (1998).
\newline\urlprefix\url{https://doi.org/10.1016/S0166-8641(97)00222-8}

\bibitem{Moggi_Farjudian_Duracz_Taha:Reachability_Hybrid:2018}
Moggi, E., A.~Farjudian, A.~Duracz and W.~Taha, \emph{Safe \& robust
  reachability analysis of hybrid systems}, Theoretical Computer Science
  \textbf{747}, pages 75--99 (2018).
\newline\urlprefix\url{https://doi.org/10.1016/j.tcs.2018.06.020}

\bibitem{Munkres:Topology:2000}
Munkres, J.~R., \emph{Topology}, Prentice Hall, Upper Saddle River, 2nd edition
  (2000).

\bibitem{Smyth:Stable_Compactification:1992}
Smyth, M.~B., \emph{Stable compactification {I}}, Journal of the London
  Mathematical Society \textbf{s2-45}, pages 321--340 (1992).
\newline\urlprefix\url{https://doi.org/10.1112/jlms/s2-45.2.321}

\bibitem{Weihrauch2000:book}
Weihrauch, K., \emph{Computable Analysis, An Introduction}, Springer (2000).
\newline\urlprefix\url{https://doi.org/10.1007/978-3-642-56999-9}

\bibitem{Zhou_Shaikh_Li_Farjudian:Robust_NN:MSCS:2023}
Zhou, C., R.~A. Shaikh, Y.~Li and A.~Farjudian, \emph{A domain-theoretic
  framework for robustness analysis of neural networks}, Mathematical
  Structures in Computer Science \textbf{33}, pages 68--105 (2023).
\newline\urlprefix\url{https://doi.org/10.1017/S0960129523000142}

\end{thebibliography}

% Bibliography inlined for arXive. Taken from the .bbl file.

%\appendix

%%%%%%%%%%%%%%%%%%%%%%%%%%%%%%%%%%%%%%%%%%%%%%%%%%%%%%%%%%%%%%%%%%%%%%%%%%%%%

% Any appendices should be included after the references, as is done here. The appendix starts with the command \verb+\appendix+, after which sections can be included; these are lettered, rather than numbered. 
% \section*{Another appendix}
% One can also use \verb+\section*{...}+ to create an appendix without a letter attached. 
\end{document}